\journal{Journal of Computational Physics}
\definecolor{dotedge}{RGB}{168, 0, 54}
\definecolor{dotfill}{RGB}{254, 254, 206}
\definecolor{mpl_blue}{RGB}{31, 119, 180}
\definecolor{mpl_orange}{RGB}{255, 127, 14}
\definecolor{mpl_green}{RGB}{44, 160, 44}
\theoremstyle{definition}
\theoremstyle{plain}
\newtheorem{theorem}{Theorem}
\newtheorem{lemma}{Lemma}
\newcommand{\ud}{\mathrm{d}}
\newcommand{\pdiff}[2]{\frac{\partial{#1}}{\partial{#2}}}
\newcommand{\tdiff}[2]{\frac{\ud{#1}}{\ud{#2}}}
\newcommand{\tdif}[1]{\frac{\ud}{\ud{#1}}}
\renewcommand{\emph}[1]{\underline{#1}}
\begin{document}

\begin{frontmatter}

\title{Combined State and Parameter Estimation in Level-Set Methods}

\author[Cambridge]{Hans Yu}
\author[Cambridge]{Matthew P.\ Juniper}
\author[Cambridge,Munich]{Luca Magri\corref{mycorrespondingauthor}}
\cortext[mycorrespondingauthor]{Corresponding author}
\ead{lm547@cam.ac.uk}

\address[Cambridge]{Department of Engineering, University of Cambridge, Trumpington Street, Cambridge CB2 1PZ, UK}
\address[Munich]{Institute for Advanced Study, Technical University of Munich, Lichtenbergstrasse 2a, 85748 Garching, Germany (visiting fellowship)}

\begin{abstract}
%\hy{HY's abstract.}
%A data-driven approach to level-set methods based on data assimilation is introduced.
%Data assimilation is treated as a problem in Bayesian inference, which relies on a probabilistic formulation.
%This is in contrast to optimization problems, which rely on variational formulations.
%Efficient implementations of the Kalman filter and smoother are derived from the standpoint of Bayesian inference.
%The applications are state estimation, parameter estimation and uncertainty quantification of physical models.
%The G-equation data-assimilation methodology is fully verified for various canonical level-set test cases.
%Verification refers to both the numerical performance of the G-equation solver and the statistical performance of the ensemble Kalman filter and smoother.
%Practical issues such as covariance collapse, numerical vs.\ sampling errors and the square-root algorithm are discussed.

Reduced-order models based on level-set methods are widely used tools to qualitatively capture and track the nonlinear dynamics of an interface.
%In this paper, we enhance such a level-set model with a statistical learning technique in order to make the model quantitatively predictive. The statistical learning method is Bayesian, so the uncertainty of the outputs is naturally included in this framework.
The aim of this paper is to develop a physics-informed, data-driven, statistically rigorous learning algorithm for state and parameter estimation with level-set methods.
A Bayesian approach based on data assimilation is introduced.
Data assimilation is enabled by the ensemble Kalman filter and smoother, which are used in their probabilistic formulations.
The level-set data assimilation framework is verified in one-dimensional and two-dimensional test cases, where state estimation, parameter estimation and uncertainty quantification are performed. The statistical performance of the proposed ensemble Kalman filter and smoother is quantified by twin experiments.  In the twin experiments, the combined state and parameter estimation fully recovers the reference solution, which validates the proposed algorithm.
The level-set data assimilation framework is then applied to the prediction of the nonlinear dynamics of a forced premixed flame, which exhibits the formation of sharp cusps and intricate topological changes, such as pinch-off events.
The proposed physics-informed statistical learning algorithm opens up new possibilities for making reduced-order models of interfaces quantitatively predictive, any time that reference data is available.
\end{abstract}
\begin{keyword}
Data assimilation \sep
Ensemble Kalman filter \sep
Level-set method \sep
Parameter estimation \sep
Uncertainty quantification
\end{keyword}

\end{frontmatter}

%------------------------------------------------------------------------------

\section{Introduction}
\label{sec:intro}

A number of problems in computational physics involve the motion of interfaces, e.g.\ semiconductor manufacturing, multi-phase flows, crystal growth, groundwater flow, computer vision, grid generation and seismology~\cite{Sethian2001}.
In general, there are two approaches to calculating the motion of an interface~\cite{Gibou2018}:
In front-tracking methods, the interface is parameterized and discretized so that one follows the motion of the whole interface by tracking a sufficient number of points on the interface.
In front-capturing methods, the interface is embedded into a function defined over the whole domain.
One example of front-capturing methods are the so-called level-set methods, where the interface is embedded into a strictly monotonic function~\cite{Osher1988}.
Both front-tracking and front-capturing methods have well-understood advantages, and hybrid methods exist to mitigate their respective disadvantages~\cite{Sethian2003}.
Nevertheless, it is worth mentioning that level-set methods provide a natural formulation for calculating the motion of an interface~\cite{Sethian2001}:
Level-set methods deal well with non-smooth features such as corners and cusps as well as topological merging and break-up.
Furthermore, level-set methods are easily extended from two to three and higher dimensions.
For more information on level-set methods, the reader is referred to many excellent expositions in the literature~\cite{Sethian1999a, Osher2001, Sethian2001, Sethian2003, Gibou2018}.

Despite the elegance of explaining physical phenomena by the motion of interfaces and calculating the motion of the interfaces by level-set methods, one has to remain aware that the assumption that manifolds are infinitely thin is often an asymptotic assumption.
This is particularly true in fluid mechanics, which is governed by conservation laws~\cite{Kollmann2010}.
%immiscible fluids, shock waves, premixed flames
%\item In premixed combustion, level-set methods are used in G-equation models \cite{Peters2000, Pitsch2005, Moureau2009}.
One example is the kinematics of premixed flames~\cite{Peters2000}:
Depending on the combustion regime, it may be assumed that a thin reactive-diffusive layer separates the burnt and unburnt gases.
While the laminar flame speed, at which the premixed flame propagates from the burnt into the unburnt region, is a well-defined thermo-chemical property of the fuel-air mixture in a one-dimensional flow, it more generally depends on the balance between heat conduction and mass diffusion, whose effects vary with flame stretch and curvature.
%Physical challenges in thermoacoustics:
%\begin{itemize}
%\item flame-burner interaction \cite{Cuquel2013, Mejia2015, Kraus2018}
%\item perturbation convection speed \hy{Lieuwen, Ghoniem}
%\end{itemize}
Moreover, the turbulent flame speed scales differently depending on the interactions between turbulence and combustion length scales.
%\item In particular, the G-equation model is able to explain the linear and non-linear dynamics of thermoacoustic oscillations in ducted premixed flames \cite{Dowling1999, Kashinath2014, Waugh2014, Orchini2016, Semlitsch2017}.
Despite the potential quantitative shortcomings in describing the kinematics of a premixed flame as the motion of an interface, it has been shown that this model successfully explains the linear and nonlinear dynamics observed in thermoacoustic instabilities of ducted premixed flames, which are relevant to the design of combustion chambers in jet and rocket engines~\cite{Fleifil1996, Dowling1999, Kashinath2014, Waugh2014}.
The discrepancies which arise when comparing to more faithful simulations or experiments are usually attributed to the unpredictable nature of turbulent flow or to uncertainties in the model and its parameters.
% Compare to flame transfer/describing functions.
Thus, it is relevant to assess the ability of a qualitative, physics-informed model to make quantitative, time-accurate predictions.
The aim of this study is to develop a data-driven, statistically rigorous framework for state and parameter estimation in models using level-set methods, which has not been done before, and to apply it to an oscillating flame.

Inference over interfaces based on level-set methods has been performed, e.g.\ in the context of shape optimization, either by directly taking the functional derivative of the objective functional with respect to the level-set function~\cite{Zhao1996, Osher2001a}, or, equivalently, by embedding the level-set method into shape calculus~\cite{Sethian2000a, Allaire2004, Chantalat2009}.
As an alternative to these variational approaches, the framework for state and parameter estimation in this study is based on data assimilation~\cite{Evensen2009}.
Data assimilation finds the statistically optimal combination of model predictions and observations.
It combines concepts from control theory, probability theory and dynamic programming~\cite{Jazwinski2007, Gelb1974, Stengel1994}.
The data assimilation technique used in this study is the ensemble Kalman filter \cite{Evensen1994, Burgers1998}.
In the ensemble Kalman filter, a Monte-Carlo approach is used to represent the necessary statistics at every timestep~\cite{Doucet2001}, which makes it a computationally efficient technique in terms of storage requirements.
Compared to other data assimilation techniques based on the Kalman filter, e.g.\ the extended Kalman filter~\cite{Miller1994, Rozier2007}, the ensemble Kalman filter is found to be particularly robust with respect to larger nonlinearities~\cite{Evensen2009}.
This is relevant for level-set methods, due to strongly nonlinear events such as cusp formation, topological merging and break-up.
A practical advantage of the ensemble Kalman filter is its non-intrusive implementation with little effort required for its parallelization.
The ensemble Kalman filter has been successfully applied to a number of problems in fluid mechanics:
turbulent near-wall flow~\cite{Colburn2011};
transonic flows around airfoils and wings~\cite{Kato2015};
viscous flow around a cylinder~\cite{Mons2016};
model uncertainties in Reynolds-averaged Navier-Stokes (RANS) equations~\cite{Xiao2016};
vortex models of separated flow~\cite{Darakananda2018a};
and extinction and reignition dynamics in turbulent non-premixed combustion~\cite{Labahn2018}.

In this study, the ensemble Kalman filter is combined with a narrow-band level-set method and a fast marching method to form a computationally efficient level-set data assimilation framework~\cite{Peng1999, Sethian1996}.
In analogy to the distinction between front-tracking and front-capturing methods in describing the motion of an interface, data can also be assimilated according to various paradigms, e.g.\ as demonstrated by~\cite{Moreno2007}, \cite{Rochoux2013} and \cite{Gao2017}.
For the various paradigms, it is not a priori clear which will yield superior results.
For this reason, the theory and the derivations behind our level-set data assimilation framework are worked out in detail.
The paper is structured as follows:
In Section~\ref{sec:da}, the ensemble Kalman filter and the ensemble Kalman smoother are derived within the context of Bayesian inference.
The estimation problems in data assimilation and parameter estimation are then formulated in terms of probability distributions.
In Section~\ref{sec:ls}, various formulations for the laws of motion are discussed, including the Hamilton-Jacobi equation.
It is shown that the solutions to the Hamilton-Jacobi equation, the so-called generating functions, form a natural state space for data assimilation.
The level-set data assimilation framework based on the Hamilton-Jacobi equation is verified in one-dimensional and two-dimensional examples.
In Section~\ref{sec:flame}, statistical inference using the level-set data assimilation framework is demonstrated on the nonlinear dynamics of a ducted premixed flame.
The results and insights of this study are summarized in the conclusions~(Section~\ref{sec:conclusion}).

\FloatBarrier

%------------------------------------------------------------------------------

\section{Data assimilation and parameter estimation}
\label{sec:da}

%\hy{@MPJ @LM
%This section did not turn out as concise as I hoped for.
%The message, in my opinion, is very clear:
%There are so many ways to do statistical inference ('learning').
%Variational vs probabilistic; sequential vs batch-wise; filtering vs smoothing; assume linearity; assume normality; state estimation vs parameter estimation vs combined state and parameter estimation; etc.
%I believe that, without the Bayesian formalism, it is nigh impossible to have a discussion about whether people are solving fundamentally different inference problems (e.g. sequential filtering vs batch-wise filtering), or whether it is 'just' their implementations/approximations which are different (i.e., it does not really matter whether you use Gaussian processes or neural networks as long as you are emulating the same joint/marginal probability distributions).
%This is the reason why I put so much effort into this section.
%}
%\lmbf{I did enjoy this section. You did very good work. We will trim it down / reword it later on; don't worry for the time being.}

%This section gives a brief introduction to statistical inference, particularly Bayesian inference.
Data assimilation and parameter estimation are here treated as problems in statistical inference.
Statistical inference quantifies the degree of belief (or confidence) in a physical model, the parameters that it receives and the states that it predicts.
Statistical inference follows a probabilistic formulation, which provides precise definitions for the different tasks addressed in inference, which include filtering, smoothing and prediction.
Probabilistic formulations for statistical inference include frequentist inference and Bayesian inference.
In frequentist inference, an error functional is defined to measure the statistical distance between a candidate solution and the available data.
%Most of the times, the error functional is conveniently a sum of squares, and subject to Tikhonov regularization to make the problem less ill-posed.
%\lmbf{Recall what this is and why it works.}
%Although both inference frameworks may give equivalent results under certain conditions \lmbf{citation}, all derivations here are based on Bayes' rule (\ref{app:bayes}).
%\lmbf{The theorem has to be reported in the text here because it is central.}
Inference becomes an optimization problem to minimize this error functional.
In Bayesian inference, existing knowledge is quantified in the form of a probability distribution over candidate solutions.
%Formulation of the prior requires to make assumptions explicit, which adds to the transparency of the procedure.
When data becomes available, the probability distribution is updated, effectively combining the existing knowledge with the data.
For normal distributions under linear dynamics, both formulations of statistical inference give equivalent results~\cite{Evensen2009}.
Under these circumstances, frequentist inference may be considered more accessible because it allows the application of familiar tools from convex optimization~\cite{Boyd2004}.
Nevertheless, Cox's axioms demonstrate that probability theory, as used in Bayesian inference, gives a natural formulation for inference in the general case~\cite{Jaynes2003}.
All definitions and derivations in this section are given in terms of Bayesian inference (\ref{app:bayes}).
For more details, the reader may refer to~\cite{Evensen2009} or~\cite{Sarkka2013}.
%\hy{Mention assumptions and uncertainties.}

%------------------------------------------------------------------------------

\subsection{Probabilistic state space model}
\label{sec:da:state_space}

The state of a system at a timestep $k$ is uniquely defined by the state vector~$x_k$.
%The state vector $x_k$ comprises all physical quantities necessary to uniquely specify a point in state space.
%It is assumed that the state vectors $x_k$ form a Markov chain~(Fig.~\ref{fig:da:state_space:markov_chain}).
%I.e., a state~$x_k$ only depends on its previous state~$x_{k-1}$.
The evolution of the state is governed by a physical model~$f$ and its parameters~$\theta$.
%The dynamical model is derived from the governing equations of the physical model discretized in time.
%Note that the state vectors are hidden variables, and as such not directly observable.
Thus, our belief in the state~$x_k$ depends exclusively on our belief in (i) the previous state~$x_{k-1}$ as well as our choice of (ii) the physical model~$f$ and (iii) its parameters~$\theta$.
At the same time, the state vector $x_k$ is compared to noisy observations $y_k$ through a measurement operator $M$.
%It is assumed that the physical model continuous in space and time may be reformulated in terms of a probabilistic state space model \cite{Sarkka2013}:
Formulated in terms of a probabilistic state space model, we obtain
\begin{equation}
x_k
= x_{k-1} + \int_{t_{k-1}}^{t_k}{f(x(t), \theta)\,\ud{t}}
= G(x_{k-1}, \theta)
\sim p(x_k \mid x_{k-1}, \theta, f) \quad ,
\label{eq:da:state_space1}
\end{equation}
\begin{equation}
y_k = M(x_k) \sim p(y_k \mid x_k) \quad .
\label{eq:da:state_space2}
\end{equation}
In brief, the transition from one state to the next is governed by the operator~$G$.
The operator~$G$ is derived from the physical model~$f$, and depends on the parameters~$\theta$.
The states~$x_k$ and the observations~$y_k$ are considered realizations inside their respective probabilistic state spaces.
The degrees of belief in each are subject to the conditional probability distributions~$p(x_k \mid x_{k-1}, \theta, f)$ and~$p(y_k \mid x_k)$ respectively.
The degree of belief in the state $x_k$ is conditional on the previous state~$x_{k-1}$ and the choices in the physical model~$f$ and its parameters~$\theta$.
The degree of belief in the observation~$y_k$ is conditional on the true state~$x_k$.
%While not strictly necessary, this discussion focuses on finite-dimensional state spaces.
%Finite-dimensional state spaces arise for example when a physical model continuous in space is discretized.

The rules of the probabilistic state space model may be formalized as follows~\cite{Sarkka2013}.
Firstly, we assume that the physical model may be described by a Markov chain~\cite{Doucet2001}.
This means that the belief in a state depends only on the belief in the previous state:
\begin{enumerate}
\item The future is independent of the past given the present:
\begin{equation}
p(x_k \mid x_{0:k-1}, y_{1:k-1}, \theta, f) = p(x_k \mid x_{k-1}, \theta, f) \quad ,
\label{eq:bayes:future}
\end{equation}
where~$x_{0:k-1}$ denotes the states at all timesteps from~0 to~$k-1$, and~$y_{1:k-1}$ denotes the observations at all timesteps from~1 to~$k-1$.
\item The past is independent of the future given the present:
\begin{equation}
p(x_k \mid x_{k+1:N}, y_{k+1:N}, \theta, f) = p(x_k \mid x_{k+1}, \theta, f) \quad ,
\label{eq:bayes:past}
\end{equation}
where~$x_{k+1:N}$ and~$y_{k+1:N}$ respectively denote the states and observations at all timesteps from~$k+1$ to the final timestep~$N$.
\end{enumerate}
Secondly, observations are assumed to be conditionally independent in time.
The probability of an observation depends only on the current state:
\begin{equation}
p(y_k \mid x_{0:N}, y_{1:k-1}, y_{k+1:N}, \theta, f) = p(y_k \mid x_k) \quad .
\end{equation}
In Fig.~\ref{fig:da:state_space:markov_chain}, the relationship between states and observations is shown as well as the roles of models, parameters and measurement.

\begin{figure}
\centering
\begin{tikzpicture}[>=latex,line join=bevel,]
\begin{scope}
  \pgfsetstrokecolor{black}
  \definecolor{strokecol}{rgb}{1.0,1.0,1.0};
  \pgfsetstrokecolor{strokecol}
  \definecolor{fillcol}{rgb}{1.0,1.0,1.0};
  \pgfsetfillcolor{fillcol}
  \filldraw (0.0bp,0.0bp) -- (0.0bp,89.0bp) -- (397.0bp,89.0bp) -- (397.0bp,0.0bp) -- cycle;
\end{scope}
\begin{scope}
  \pgfsetstrokecolor{black}
  \definecolor{strokecol}{rgb}{1.0,1.0,1.0};
  \pgfsetstrokecolor{strokecol}
  \definecolor{fillcol}{rgb}{1.0,1.0,1.0};
  \pgfsetfillcolor{fillcol}
  \filldraw (0.0bp,0.0bp) -- (0.0bp,89.0bp) -- (397.0bp,89.0bp) -- (397.0bp,0.0bp) -- cycle;
\end{scope}
\begin{scope}
  \pgfsetstrokecolor{black}
  \definecolor{strokecol}{rgb}{1.0,1.0,1.0};
  \pgfsetstrokecolor{strokecol}
  \definecolor{fillcol}{rgb}{1.0,1.0,1.0};
  \pgfsetfillcolor{fillcol}
  \filldraw (0.0bp,0.0bp) -- (0.0bp,89.0bp) -- (397.0bp,89.0bp) -- (397.0bp,0.0bp) -- cycle;
\end{scope}
  \node (bar) at (393.5bp,79.5bp) [draw=dotedge,fill=white,draw=none] {$$};
  \node (measurement0) at (62.5bp,44.5bp) [draw=black,fill=white,rectangle] {$M$};
  \node (measurement1) at (198.5bp,45.5bp) [draw=black,fill=white,rectangle] {$M$};
  \node (measurement2) at (334.5bp,44.5bp) [draw=black,fill=white,rectangle] {$M$};
  \node (prediction2) at (334.5bp,79.5bp) [draw=dotedge,fill=dotfill,ellipse] {$x_{k+1}$};
  \node (prediction0) at (62.5bp,79.5bp) [draw=dotedge,fill=dotfill,ellipse] {$x_{k-1}$};
  \node (prediction1) at (198.5bp,79.5bp) [draw=dotedge,fill=dotfill,ellipse] {$\phantom{x}x_k\phantom{x}$};
  \node (model2) at (267.0bp,79.5bp) [draw=black,fill=white,rectangle] {$G, \theta$};
  \node (observations2) at (334.5bp,9.5bp) [draw=dotedge,fill=dotfill,ellipse] {$y_{k+1}$};
  \node (observations0) at (62.5bp,9.5bp) [draw=dotedge,fill=dotfill,ellipse] {$y_{k-1}$};
  \node (observations1) at (198.5bp,10.5bp) [draw=dotedge,fill=dotfill,ellipse] {$\phantom{y}y_k\phantom{y}$};
  \node (foo) at (3.5bp,79.5bp) [draw=dotedge,fill=white,draw=none] {$$};
  \node (model1) at (130.0bp,79.5bp) [draw=black,fill=white,rectangle] {$G, \theta$};
  \draw [->] (prediction0) ..controls (90.089bp,79.5bp) and (99.477bp,79.5bp)  .. (model1);
  \draw [->,dashed] (foo) ..controls (12.019bp,79.5bp) and (22.314bp,79.5bp)  .. (prediction0);
  \draw [->] (measurement0) ..controls (62.5bp,34.682bp) and (62.5bp,32.017bp)  .. (observations0);
  \draw [->] (prediction1) ..controls (227.1bp,79.5bp) and (236.39bp,79.5bp)  .. (model2);
  \draw [->] (measurement1) ..controls (198.5bp,35.682bp) and (198.5bp,33.017bp)  .. (observations1);
  \draw [->] (model1) ..controls (149.26bp,79.5bp) and (158.57bp,79.5bp)  .. (prediction1);
  \draw [->,dashed] (prediction2) ..controls (362.48bp,79.5bp) and (372.08bp,79.5bp)  .. (bar);
  \draw [->] (measurement2) ..controls (334.5bp,34.682bp) and (334.5bp,32.017bp)  .. (observations2);
  \draw [->] (prediction1) ..controls (198.5bp,67.802bp) and (198.5bp,65.326bp)  .. (measurement1);
  \draw [->] (prediction2) ..controls (334.5bp,67.148bp) and (334.5bp,64.383bp)  .. (measurement2);
  \draw [->] (prediction0) ..controls (62.5bp,67.148bp) and (62.5bp,64.383bp)  .. (measurement0);
  \draw [->] (model2) ..controls (286.14bp,79.5bp) and (295.58bp,79.5bp)  .. (prediction2);
\end{tikzpicture}
\caption{
Probabilistic state space model as a Markov chain.
The operator~$G$ and its parameters~$\theta$ govern the transition from one state to the next.
The state vector~$x_k$ is related to the observations~$y_k$ through a measurement operator~$M$.
}
\label{fig:da:state_space:markov_chain}
\end{figure}
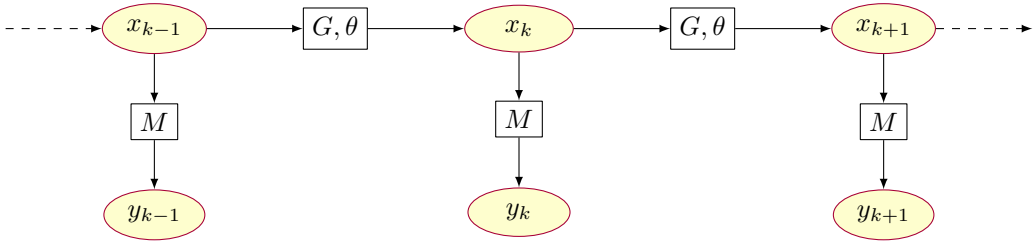

The goal of data assimilation is to find the joint probability distribution~$p(x_{0:N}, y_{1:N}, \theta, f)$.
The probabilistic state space spans all states~$x_k$ from timestep~$0$ to~$N$ and all observations~$y_k$ from timestep~$1$ to~$N$, as well as the physical model~$f$ and its parameters~$\theta$.
This joint probability distribution gives a complete statistical description.
In principle, all probability distributions of interest may be derived from this joint probability distribution~(Table~\ref{tab:da:state_space:data_assimilation}).
%In particular, the following conditional probability distributions are of interest:
%\begin{enumerate}
%\item State estimation ($p(x_{0:N} \mid y_{1:N}, \theta, f)$).
%Given a physical model and its parameters, what is our belief in a series of states?
%\item Parameter estimation ($p(\theta \mid y_{1:N}, f)$).
%Given a physical model, what is our belief in a set of parameters?
%\item Model comparison ($p(f \mid y_{1:N})$).
%Between two models, in which one do we believe more?
%\end{enumerate}
%Firstly, it is difficult to compute the joint posterior distribution due to its high dimensionality.
%Secondly, for the same reason, it would be computationally expensive to perform the necessary integrations for marginalization and Bayes' rule.
In practice, it is difficult to compute this probability distribution because of the high dimensionality of its probabilistic state space spanning multiple timesteps~\cite{Bellman2003}.
Therefore, state estimation focuses on the more direct computation of conditional probability distributions over a single timestep~(Table~\ref{tab:da:state_space:state_estimation}).
%For the state estimation, the computation of the joint probability distribution over states at all timesteps is still computationally demanding.
%Therefore, state estimation is more commonly concerned with the sequential computation of probability distributions over states at individual timesteps $k$.
%The tasks in state estimation are then distinguished as follows:
%\begin{enumerate}
%\item Filtering ($p(x_k \mid y_{1:k}, \theta, f)$).
%Given all the observations from the past and now, what is our belief in the current state?
%\item Smoothing ($p(x_k \mid y_{1:N}, \theta, f)$).
%Given all the observations from the past, the future and now, what is our belief in the current state?
%\item Prediction ($p(x_{N+1} \mid y_{1:N}, \theta, f)$).
%Given all the observations from the past and now, what is our belief in a subsequent state?
%\end{enumerate}

\begin{table}
\caption{Conditional probability distributions in data assimilation.}
\begin{center}
\begin{tabular}[hb]{p{3cm}|p{3cm}|p{6.5cm}}
Task & PDF & Description \\
\hline
State estimation & $p(x_{0:N} \mid y_{1:N}, \theta, f)$ & Given a physical model and its parameters, what is our belief in a series of states? \\
Parameter estimation & $p(\theta \mid y_{1:N}, f)$ & Given a physical model, what is our belief in a set of parameters? \\
Model comparison & $p(f \mid y_{1:N})$ & Between two physical models, in which one do we believe more?
\end{tabular}
\end{center}
\label{tab:da:state_space:data_assimilation}
\end{table}

%\begin{savenotes}
\begin{table}
\caption{Conditional probability distributions in state estimation.}
\begin{center}
\begin{tabular}[hb]{p{3cm}|p{3cm}|p{6.5cm}}
Task & PDF & Description \\
\hline
Filtering & $p(x_k \mid y_{1:k}, \theta, f)$ & Given all observations from the past and now, what is our belief in the current state? \\
%Smoothing\footnote{Smoothing in the statistical sense is unrelated to the analytical notion of smoothness and differentiability.} & $p(x_k \mid y_{1:N}, \theta, f)$ & Given all observations from the past, the future and now, what is our belief in the current state? \\
Smoothing & $p(x_k \mid y_{1:N}, \theta, f)$ & Given all observations from the past, the future and now, what is our belief in the current state? \\
Prediction & $p(x_{N+1} \mid y_{1:N}, \theta, f)$ & Given all observations from the past and now, what is our belief in a future state?
\end{tabular}
\end{center}
\label{tab:da:state_space:state_estimation}
\end{table}
%\end{savenotes}

%In the following subsections, the mathematics behind the Bayesian computations in data assimilation are outlined.
%Detailed derivations are omitted, but can be found in various monographs \cite{Evensen2009, Sarkka2013}.
%\hy{Mention older books; Stengel, Gelb, Jazwinski; maybe Bennett1992, Bennett2002.}
%More importantly, the assumptions introduced at every step of each derivation regarding the treatment of the nonlinearity in the physical models and the non-normality in the probability distributions are made explicit as they form the most critical aspects in practical data assimilation.

In the probabilistic formulation, data assimilation is easily extended to account for parameters.
In combined state and parameter estimation, the state is augmented by the parameters so that they become subject to the same inference~(Eq.~\eqref{eq:da:state_space1}):
\begin{equation}
\tilde{x}_k = \begin{pmatrix}x_k \\ \theta_k\end{pmatrix} \quad , \quad
\tilde{f}(\tilde{x}(t)) = \begin{pmatrix}f(x(t), \theta_k) \\ 0\end{pmatrix} \quad , \quad
\tilde{x}_k = \tilde{x}_{k-1} + \int_{t_{k-1}}^{t_k}{\tilde{f}(\tilde{x}(t))\,\ud{t}} \quad .
\label{tab:da:state_space:state_augmentation}
\end{equation}
The tasks in combined state and parameter estimation are given in Table~\ref{tab:da:param:state_estimation}.
The filtered and smoothed distributions in the parameters~$\theta_k$, $p(\theta_k \mid y_{1:k}, f)$ and~$p(\theta_k \mid y_{1:N}, f)$ respectively, are retrieved by marginalizing the states~$x_k$~(\ref{app:bayes}).
Note that the parameters $\theta_k$ are now time-dependent as the system traverses different regimes in state space.
This turns the strongly constrained parameter estimation into a weakly constrained combined state and parameter estimation~\cite{Evensen2009}.
Thus, the results of marginalizing the probability distributions in combined state and parameter estimation~(Table~\ref{tab:da:param:state_estimation}) are not strictly equivalent to the solutions of parameter estimation~(Table~\ref{tab:da:state_space:data_assimilation}).

\begin{table}
\caption{Conditional probability distributions in combined state and parameter estimation.}
\begin{center}
\begin{tabular}[hb]{p{3cm}|p{3cm}|p{6.5cm}}
Task & PDF & Description \\
\hline
Filtering & $p(x_k, \theta_k \mid y_{1:k}, f)$ & Given all observations from the past and now, what is our belief in the current state and set of parameters? \\
Smoothing & $p(x_k, \theta_k \mid y_{1:N}, f)$ & Given all observations from the past, the future and now, what is our belief in the current state and set of parameters?
%Prediction & $p(x_{N+1}, \theta \mid y_{1:N}, f)$ & Given all observations from the past and now, what is our belief in a future state and set of parameters?
\end{tabular}
\end{center}
\label{tab:da:param:state_estimation}
\end{table}

%------------------------------------------------------------------------------

\subsection{Bayesian filtering and smoothing}
\label{sec:da:bayes}

For the filtering problem, Bayes' rule gives
\begin{align}
p(x_k \mid y_{1:k}, \theta, f)
&= \frac{p(y_k \mid x_k, y_{1:k-1}, \theta, f)p(x_k \mid y_{1:k-1}, \theta, f)}{p(y_k \mid y_{1:k-1}, \theta, f)} \\
&= \frac{p(y_k \mid x_k)p(x_k \mid y_{1:k-1}, \theta, f)}{p(y_k \mid y_{1:k-1}, \theta, f)} \quad .
\end{align}
The prediction~$p(x_k \mid y_{1:k-1}, \theta, f)$ is given by the Chapman-Kolmogorov equation:
\begin{align}
p(x_k \mid y_{1:k-1}, \theta, f)
&= \int{p(x_k, x_{k-1} \mid y_{1:k-1}, \theta, f)\,\ud x_{k-1}} \\
&= \int{p(x_k \mid x_{k-1}, y_{1:k-1}, \theta, f)p(x_{k-1} \mid y_{1:k-1}, \theta, f)\,\ud x_{k-1}} \\
&= \int{p(x_k \mid x_{k-1}, \theta, f)p(x_{k-1} \mid y_{1:k-1}, \theta, f)\,\ud x_{k-1}} \quad .
\end{align}
The Chapman-Kolmogorov equation requires the inverse from the previous timestep~$k-1$.
In general, it is solved either numerically~\cite{Higham2001}, analytically~(Theorem~\ref{thm:da:kalman_filter}) or by a Monte-Carlo simulation~(Theorem~\ref{thm:da:ensemble_kalman_filter}).
%As a result, the Bayesian filter is sequential in nature.
The steps in the Bayesian filter may be summarized as follows:
\begin{theorem}[Bayesian filter] $ $
\label{thm:da:bayes_filter}
\begin{enumerate}
\item Prediction step:
\begin{equation}
p(x_k \mid y_{1:k-1}, \theta, f) = \int{p(x_k \mid x_{k-1}, \theta, f)p(x_{k-1} \mid y_{1:k-1}, \theta, f)\,\ud x_{k-1}} \quad .
\label{eq:da:bayes_filter:predict}
\end{equation}
\item Update step:
\begin{equation}
p(x_k \mid y_{1:k}, \theta, f) = \frac{p(y_k \mid x_k)p(x_k \mid y_{1:k-1}, \theta, f)}{p(y_k \mid y_{1:k-1}, \theta, f)} \quad .
\label{eq:da:bayes_filter:update}
\end{equation}
\end{enumerate}
\end{theorem}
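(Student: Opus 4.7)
The plan is to prove the two steps by direct application of Bayes' rule and the law of total probability, using the two structural assumptions of the probabilistic state space model: the Markov property of the state evolution (Eq.~\eqref{eq:bayes:future}) and the conditional independence of observations. Since the theorem simply collects the identities that were derived inline in the preceding paragraphs, the proof is essentially a matter of organizing those identities and pointing out exactly where each modeling assumption is invoked.

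For the update step, I would start from Bayes' rule applied to $x_k$ conditioned on $y_{1:k} = \{y_{1:k-1}, y_k\}$ along with the fixed hyperparameters $\theta$ and $f$, writing
\begin{equation*}
p(x_k \mid y_{1:k}, \theta, f) = \frac{p(y_k \mid x_k, y_{1:k-1}, \theta, f)\, p(x_k \mid y_{1:k-1}, \theta, f)}{p(y_k \mid y_{1:k-1}, \theta, f)} \quad .
\end{equation*}
The conditional independence assumption on observations then collapses the likelihood factor to $p(y_k \mid x_k)$, yielding Eq.~\eqref{eq:da:bayes_filter:update}. The denominator is simply the normalizing marginal, which can be written explicitly as $\int p(y_k \mid x_k) p(x_k \mid y_{1:k-1}, \theta, f)\,\ud x_k$ if needed.

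For the prediction step, I would apply the law of total probability to insert the hidden variable $x_{k-1}$, then factor using the product rule:
\begin{equation*}
p(x_k \mid y_{1:k-1}, \theta, f) = \int p(x_k \mid x_{k-1}, y_{1:k-1}, \theta, f)\, p(x_{k-1} \mid y_{1:k-1}, \theta, f)\,\ud x_{k-1} \quad .
\end{equation*}
The Markov property~\eqref{eq:bayes:future} removes the dependence on $y_{1:k-1}$ from the transition kernel, reducing it to $p(x_k \mid x_{k-1}, \theta, f)$ and giving Eq.~\eqref{eq:da:bayes_filter:predict}. The second factor in the integrand is exactly the filtered posterior from timestep $k-1$, which provides the recursion.

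There is no real obstacle here, since each identity follows from a single application of a standard probability rule; the only subtlety worth flagging in the write-up is to verify that the Markov and independence assumptions are stated in the appropriate conditional form (conditioned on $\theta$ and $f$, which are treated as fixed hyperparameters throughout the filter), so that the derivation is self-consistent when the state is later augmented by the parameters as in Eq.~\eqref{tab:da:state_space:state_augmentation}.
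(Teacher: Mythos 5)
Your proposal is correct and follows essentially the same route as the paper: the update step via Bayes' rule with the conditional-independence assumption collapsing the likelihood to $p(y_k \mid x_k)$, and the prediction step via marginalization over $x_{k-1}$ (Chapman--Kolmogorov) followed by the Markov property~\eqref{eq:bayes:future} to reduce the transition kernel to $p(x_k \mid x_{k-1}, \theta, f)$. No differences of substance to report.
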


At the timestep~$k=N$, the filtered and smoothed distributions are identical~(Tab.~\ref{tab:da:state_space:state_estimation}).
If the smoothed distribution at a timestep $k+1$ is known, the smoothed distribution at the previous timestep~$k$ is also known due to the Markov chain properties of the probabilistic state space model (Eq.~\eqref{eq:bayes:future}, \eqref{eq:bayes:past}).
This may be formalized as follows:
\begin{align}
p(x_k \mid y_{1:N}, \theta, f)
&= \int{p(x_k, x_{k+1} \mid y_{1:N}, \theta, f)\,\ud x_{k+1}} \\
&= \int{p(x_k \mid x_{k+1}, y_{1:N}, \theta, f)p(x_{k+1} \mid y_{1:N}, \theta, f)\,\ud x_{k+1}} \\
&= \int{p(x_k \mid x_{k+1}, y_{1:k}, \theta, f)p(x_{k+1} \mid y_{1:N}, \theta, f)\,\ud x_{k+1}} \quad .
\end{align}
$p(x_{k+1} \mid y_{1:N}, \theta, f)$ is the smoothed distribution from the subsequent timestep $k+1$.
$p(x_k \mid x_{k+1}, y_{1:k}, \theta, f)$ is computed via Bayes' rule:
\begin{align}
p(x_k \mid x_{k+1}, y_{1:k}, \theta, f)
&= \frac{p(x_{k+1} \mid x_k, y_{1:k}, \theta, f)p(x_k \mid y_{1:k}, \theta, f)}{p(x_{k+1} \mid y_{1:k}, \theta, f)} \\
&= \frac{p(x_{k+1} \mid x_k, \theta, f)p(x_k \mid y_{1:k}, \theta, f)}{p(x_{k+1} \mid y_{1:k}, \theta, f)} \quad .
\end{align}
Note that it involves the filtered distribution from the current timestep~$k$.
%Thus, the smoother 'improves' the pre-computed filtered distributions in reverse.
The steps in the Bayesian smoother may be summarized as follows:
\begin{theorem}[Bayesian smoother] $ $
\label{thm:da:bayes_smoother}
\begin{enumerate}
\item Forward sweep: Bayesian filter (Theorem~\ref{thm:da:bayes_filter}).
\item Backward step:
\begin{equation}
p(x_k \mid y_{1:N}, \theta, f) = p(x_k \mid y_{1:k}, \theta, f) \int{\frac{p(x_{k+1} \mid x_k, \theta, f)p(x_{k+1} \mid y_{1:N}, \theta, f)}{p(x_{k+1} \mid y_{1:k}, \theta, f)}\,\ud x_{k+1}} \quad .
\label{eq:da:bayes_smoother}
\end{equation}
\end{enumerate}
\end{theorem}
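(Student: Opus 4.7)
The plan is to prove the backward recursion by induction on the timestep $k$, running backwards from $k = N$ down to $k = 0$. The base case $k = N$ is immediate: the filtered and smoothed densities coincide at the final timestep (Table~\ref{tab:da:state_space:state_estimation}), so the forward sweep of the Bayesian filter (Theorem~\ref{thm:da:bayes_filter}) already provides $p(x_N \mid y_{1:N}, \theta, f)$, which anchors the recursion.

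For the inductive step, I would assume that $p(x_{k+1} \mid y_{1:N}, \theta, f)$ has been computed and then derive the expression for $p(x_k \mid y_{1:N}, \theta, f)$. The derivation follows the three-line calculation that already appears in the paragraph preceding the theorem. First, I marginalize the joint smoothing density over $x_{k+1}$ and factor the integrand using the definition of conditional probability:
\begin{equation*}
p(x_k \mid y_{1:N}, \theta, f) = \int p(x_k \mid x_{k+1}, y_{1:N}, \theta, f)\, p(x_{k+1} \mid y_{1:N}, \theta, f)\, \ud x_{k+1}.
\end{equation*}
Second, I invoke the backward Markov property~\eqref{eq:bayes:past} to drop the future observations $y_{k+1:N}$ from the conditioning of the first factor, reducing it to $p(x_k \mid x_{k+1}, y_{1:k}, \theta, f)$. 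Third, I apply Bayes' rule in the form
\begin{equation*}
p(x_k \mid x_{k+1}, y_{1:k}, \theta, f) = \frac{p(x_{k+1} \mid x_k, y_{1:k}, \theta, f)\, p(x_k \mid y_{1:k}, \theta, f)}{p(x_{k+1} \mid y_{1:k}, \theta, f)},
\end{equation*}
and use the forward Markov property~\eqref{eq:bayes:future} to replace $p(x_{k+1} \mid x_k, y_{1:k}, \theta, f)$ with $p(x_{k+1} \mid x_k, \theta, f)$. Since the filtered density $p(x_k \mid y_{1:k}, \theta, f)$ does not depend on the integration variable $x_{k+1}$, it can be factored outside the integral, yielding Eq.~\eqref{eq:da:bayes_smoother}. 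The denominator $p(x_{k+1} \mid y_{1:k}, \theta, f)$ is exactly the one-step prediction produced by the prediction step of the filter, so the backward recursion is entirely self-contained once the forward sweep has been performed.

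The main obstacle is not algebraic but a question of rigorously justifying the Markov simplifications. The statement of Eq.~\eqref{eq:bayes:past} conditions only on $x_{k+1}$, whereas the derivation requires the slightly stronger identity $p(x_k \mid x_{k+1}, y_{1:N}, \theta, f) = p(x_k \mid x_{k+1}, y_{1:k}, \theta, f)$, i.e.\ conditional independence of $x_k$ and $y_{k+1:N}$ given $x_{k+1}$. This follows from a d-separation argument on the Markov chain of Fig.~\ref{fig:da:state_space:markov_chain}, together with the assumption that each observation depends only on its contemporaneous state, but it deserves to be made explicit rather than absorbed silently into Eq.~\eqref{eq:bayes:past}. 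Once this conditional independence is established, the remainder of the proof is a routine manipulation of conditional densities.
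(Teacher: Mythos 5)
Your proposal is correct and follows essentially the same route as the paper: the in-text derivation preceding the theorem performs exactly your marginalization over $x_{k+1}$, the replacement of $p(x_k \mid x_{k+1}, y_{1:N}, \theta, f)$ by $p(x_k \mid x_{k+1}, y_{1:k}, \theta, f)$, the application of Bayes' rule with the Markov simplification $p(x_{k+1} \mid x_k, y_{1:k}, \theta, f) = p(x_{k+1} \mid x_k, \theta, f)$, and the factoring of the filtered density out of the integral, with the recursion anchored at $k=N$ where filtered and smoothed distributions coincide. Your remark that the required conditional independence of $x_k$ and $y_{k+1:N}$ given $x_{k+1}$ and $y_{1:k}$ is marginally stronger than the literal statement of Eq.~\eqref{eq:bayes:past} is a fair refinement, but it does not change the argument.
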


Note that both the Bayesian filter and smoother are sequential in nature.
In the Bayesian filter, the predicted distribution~$p(x_k \mid y_{1:k-1}, \theta, f)$ at a timestep~$k$ (Eq.~\eqref{eq:da:bayes_filter:predict}) mainly depends on the filtered distribution~$p(x_{k-1} \mid y_{1:k-1}, \theta, f)$ from the previous timestep~$k-1$~(Eq.~\eqref{eq:da:bayes_filter:update}).
In the Bayesian smoother, the smoothed distribution~$p(x_k \mid y_{1:N}, \theta, f)$ at a timestep~$k$ mainly depends on the smoothed distribution~$p(x_{k+1} \mid y_{1:N}, \theta, f)$ from the subsequent timestep~$k+1$~(Eq.~\eqref{eq:da:bayes_smoother}).
The existence of sequential algorithms for the computation of filtered and smoothed distributions significantly reduces the complexity of data assimilation~\cite{Bellman2003}.

%------------------------------------------------------------------------------

\subsection{The Kalman filter and the Rauch-Tung-Striebel smoother}
\label{sec:da:kalman}

%\hy{Derive ensemble Kalman smoother in the appendix.}

Two additional assumptions are introduced to make the computation of filtered and smoothed distributions feasible.
Firstly, the prior and the likelihood in the update step of the Bayesian filter are assumed to be normal~(Eq.~\eqref{eq:da:bayes_filter:update}):
\begin{equation}
p(x_k \mid y_{1:k-1}, \theta, f) = \mathcal{N}\left(x_k \mid \psi^f, C_{\psi\psi}^f\right) \quad ,
\label{eq:da:kalman:prior}
\end{equation}
\begin{equation}
p(y_k \mid x_k) = \mathcal{N}\left(y_k \mid Mx_k, C_{\epsilon\epsilon}\right) \quad ,
\label{eq:da:kalman:likelihood}
\end{equation}
where~$\mathcal{N}$ denotes a normal distribution with respective mean and covariance matrix.
The mean of the prior is denoted by~$\psi^f$, its covariance matrix by~$C_{\psi\psi}^f$, and the covariance matrix of the likelihood, also known as the observation error, by~$C_{\epsilon\epsilon}$.
From Eq.~\eqref{eq:da:kalman:prior} and~\eqref{eq:da:kalman:likelihood}, it follows that the filtered distribution~$p(x_k \mid y_{1:k}, \theta, f)$ is normal~(Eq.~\eqref{eq:da:bayes_filter:update}).
%For the likelihood $p(y_k \mid x_k)$, it is fair to assume that it is normal:
%Firstly, if a bias had been identifiable, the measurement functional $M$ could have been adjusted to account for the alleged bias.
%Secondly, for a sufficiently large number of observations, samples may be formed within the samples, which are subject to the central-limit theorem \cite{McKean2014}.
%The prior $p(x_k \mid y_{1:k-1}, \theta, f)$ is in general not a normal distribution.
%Treating it as such may lead to a statistically suboptimal estimation of the true state.
%Note that $\psi$, $C_{\psi\psi}$, $C_{\epsilon\epsilon}$ are in general time-dependent.
%Nevertheless, the timestep is not subscribed unless the omission would be misleading.
Secondly, the operator~$G$~(Eq.~\eqref{eq:da:state_space1}) in the prediction step~(Eq.~\eqref{eq:da:bayes_filter:predict}) is assumed to be linear in~$x$.
The result is the well-known Kalman filter~\cite{Kalman1960, Kalman1961}:

%In order to obtain the predicted distribution $p(x_k \mid y_{1:k-1}, \theta, f)$, a stochastic differential equation has to be solved in the general case \cite{Jazwinski2007}:
%\begin{equation}
%\ud{x} = f(x(t), \theta)\,\ud{t} + h(x(t))\,\ud{q(t)} \quad .
%\end{equation}
%The choice of $f$ and $\theta$ depends on the physical model.
%$h\,\ud{q}$ quantifies the unknown, epistemic uncertainty, namely due to model and parameter uncertainties \cite{Kiureghian2009}.
%\hy{There must be a better reference for aleatoric and epistemic uncertainty.}
%If $G$ is linear in $x$ (Eq.~\eqref{eq:da:state_space1}), the result is the well-known Kalman filter \cite{Kalman1960, Kalman1961}.
%In particular for a normal initial distribution $p(x_k \mid y_{1:k-1})$, it can be shown that all predicted distributions $p(x_k \mid y_{1:k-1}, \theta, f)$ are normal so that the Kalman filter gives the statistically optimal estimation of the true state at every timestep $k$.

\begin{theorem}[Kalman filter]
\label{thm:da:kalman_filter}
%\hy{Include equation for~$C_{\psi\psi}^f$?}
\begin{equation}
p(x_k \mid y_{1:k}, \theta, f) = \mathcal{N}\left(x_k \mid \psi^a, C_{\psi\psi}^a\right) \quad ,
\label{eq:da:kalman}
\end{equation}
\begin{equation}
\psi^a = \psi^f + \left(MC_{\psi\psi}^f\right)^T\left[C_{\epsilon\epsilon}+MC_{\psi\psi}^fM^T\right]^{-1}\left(y_k-M\psi^f\right) \quad ,
\label{eq:da:kalman:mean}
\end{equation}
\begin{equation}
C_{\psi\psi}^a
= C_{\psi\psi}^f - \left(MC_{\psi\psi}^f\right)^T\left[C_{\epsilon\epsilon}+MC_{\psi\psi}^fM^T\right]^{-1}\left(MC_{\psi\psi}^f\right) \quad ,
\label{eq:da:kalman:cov}
\end{equation}
\end{theorem}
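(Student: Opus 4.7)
The plan is to derive the Kalman filter update directly from Bayes' rule (Eq.~\eqref{eq:da:bayes_filter:update}), exploiting the fact that the product of two Gaussians in $x_k$ is itself proportional to a Gaussian. Since the prior $p(x_k \mid y_{1:k-1}, \theta, f)$ is normal by Eq.~\eqref{eq:da:kalman:prior} and the likelihood $p(y_k \mid x_k)$ is normal by Eq.~\eqref{eq:da:kalman:likelihood} with a mean that is linear in $x_k$, the normalizing denominator $p(y_k \mid y_{1:k-1}, \theta, f)$ does not depend on $x_k$, so the posterior factor as a function of $x_k$ has the form $\exp\{-\tfrac{1}{2} Q(x_k)\}$ with
\begin{equation}
Q(x_k) = (x_k - \psi^f)^T (C_{\psi\psi}^f)^{-1} (x_k - \psi^f) + (y_k - M x_k)^T C_{\epsilon\epsilon}^{-1} (y_k - M x_k).
\end{equation}

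First I would complete the square in $x_k$. Collecting quadratic and linear terms yields the natural (information-form) identification
\begin{equation}
(C_{\psi\psi}^a)^{-1} = (C_{\psi\psi}^f)^{-1} + M^T C_{\epsilon\epsilon}^{-1} M, \qquad
(C_{\psi\psi}^a)^{-1} \psi^a = (C_{\psi\psi}^f)^{-1} \psi^f + M^T C_{\epsilon\epsilon}^{-1} y_k,
\end{equation}
from which the posterior is manifestly Gaussian. This already establishes Eq.~\eqref{eq:da:kalman}; the remaining work is purely algebraic, to recast these expressions into the Kalman-gain form of Eqs.~\eqref{eq:da:kalman:mean}--\eqref{eq:da:kalman:cov}.

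Next I would apply the Woodbury matrix identity to the information-form covariance, obtaining
\begin{equation}
C_{\psi\psi}^a = C_{\psi\psi}^f - C_{\psi\psi}^f M^T \bigl[C_{\epsilon\epsilon} + M C_{\psi\psi}^f M^T\bigr]^{-1} M C_{\psi\psi}^f,
\end{equation}
which is Eq.~\eqref{eq:da:kalman:cov} after identifying $(MC_{\psi\psi}^f)^T = C_{\psi\psi}^f M^T$ (using symmetry of $C_{\psi\psi}^f$). For the mean, I would substitute $(C_{\psi\psi}^f)^{-1}\psi^f = (C_{\psi\psi}^a)^{-1}\psi^f - M^T C_{\epsilon\epsilon}^{-1} M \psi^f$ into the information-form expression for $\psi^a$ to obtain $\psi^a = \psi^f + C_{\psi\psi}^a M^T C_{\epsilon\epsilon}^{-1}(y_k - M \psi^f)$, and then invoke the push-through identity $C_{\psi\psi}^a M^T C_{\epsilon\epsilon}^{-1} = C_{\psi\psi}^f M^T [C_{\epsilon\epsilon} + M C_{\psi\psi}^f M^T]^{-1}$ (a direct consequence of the Woodbury formula above) to reach Eq.~\eqref{eq:da:kalman:mean}.

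The prediction step plays only a supporting role here: since $G$ is linear in $x$, if the posterior at step $k{-}1$ is Gaussian then so is the prior at step $k$ through the Chapman--Kolmogorov integral~\eqref{eq:da:bayes_filter:predict}, justifying the hypothesis~\eqref{eq:da:kalman:prior} inductively. The only real obstacle is the bookkeeping in the Woodbury manipulations — in particular, verifying the push-through identity carefully so that the Kalman gain $K = (MC_{\psi\psi}^f)^T[C_{\epsilon\epsilon} + MC_{\psi\psi}^f M^T]^{-1}$ emerges in the same form in both the mean update~\eqref{eq:da:kalman:mean} and the covariance update~\eqref{eq:da:kalman:cov}. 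Everything else is an exercise in completing the square for jointly Gaussian random vectors.
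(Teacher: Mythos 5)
Your derivation is correct, and it is worth noting that the paper does not actually prove Theorem~\ref{thm:da:kalman_filter} at all: it states the update as the classical result of Kalman and Bucy, simply recording the two hypotheses it rests on, namely the normal prior~\eqref{eq:da:kalman:prior} and normal likelihood~\eqref{eq:da:kalman:likelihood} entering Bayes' rule~\eqref{eq:da:bayes_filter:update}, together with linearity of $G$ in the prediction step~\eqref{eq:da:bayes_filter:predict}. Your completing-the-square argument in information form, followed by the Woodbury identity for~\eqref{eq:da:kalman:cov} and the push-through identity $C_{\psi\psi}^a M^T C_{\epsilon\epsilon}^{-1} = C_{\psi\psi}^f M^T\left[C_{\epsilon\epsilon}+MC_{\psi\psi}^fM^T\right]^{-1}$ for~\eqref{eq:da:kalman:mean}, is the standard textbook route and fills that gap correctly; your closing remark that linearity of $G$ propagates Gaussianity through the Chapman--Kolmogorov integral is exactly the inductive justification of hypothesis~\eqref{eq:da:kalman:prior} that the paper leaves implicit. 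The only caveat, cosmetic rather than substantive, is that your information-form step requires $C_{\psi\psi}^f$ and $C_{\epsilon\epsilon}$ to be nonsingular, whereas the gain form stated in the theorem only needs the innovation covariance $C_{\epsilon\epsilon}+MC_{\psi\psi}^fM^T$ to be invertible; under the usual positive-definiteness assumptions this is immaterial, but a derivation via the joint Gaussian distribution of $(x_k, y_k)$ and conditioning would avoid the restriction entirely.
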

where the superscript~$f$ denotes 'forecast'~(everything pertaining to the prediction), and the superscript~$a$ denotes 'analysis'~(everything pertaining to the update).
If the physical model is nonlinear, the prediction of the covariance matrix~$C_{\psi\psi}^f$ for the Kalman filter may be generalized in several ways.
In the extended Kalman filter, the predicted covariance matrix~$C_{\psi\psi}^f$ is computed by linearizing~$f$~\cite{Gelb1974}.
In strongly nonlinear dynamical systems, the predictions are found to be poor~\cite{Evensen2009}.
Higher-order extended Kalman filters are available~\cite{Miller1994}.
Nevertheless, drawbacks include their significant storage requirements, which increase exponentially with the order of approximation.
An alternative is the ensemble Kalman filter~\cite{Evensen1994, Burgers1998}.
Instead of a mean~$\psi$ and a covariance matrix~$C_{\psi\psi}$, a distribution is represented by a sample~$\left(\psi^j\right)_{j = 1 \dots n}$.
During the prediction step, the ensemble members~$\psi^j$ evolve in time independently.
Before the update step, the statistics may be recovered from the sample as follows:
\begin{equation}
\overline{\psi} \approx \frac{1}{n}\sum_{j=1}^n{\psi^j} \quad , \quad
\Psi = \begin{pmatrix}\psi^1-\overline{\psi} \quad , \quad \psi^2-\overline{\psi} \quad , \quad \cdots \quad , \quad \psi^n-\overline{\psi}\end{pmatrix} \quad , \quad
C_{\psi\psi}
%\approx C_{\psi\psi}^e
%= \frac{1}{n-1}\sum_{j=1}^n{\left(\left(\psi^f\right)_j-\overline{\psi^f}\right)\left(\left(\psi^f\right)_j-\overline{\psi^f}\right)^T} \quad ,
\approx \frac{1}{n-1}\Psi\Psi^T \quad .
\label{eq:da:ensemble}
\end{equation}
%\begin{equation}
%C_{\epsilon\epsilon}
%\approx C_{\epsilon\epsilon}^e
%= \frac{1}{n-1}\sum_{j=1}^n{\left(\left(\psi^f\right)_j-\overline{\psi^f}\right)\left(\left(\psi^f\right)_j-\overline{\psi^f}\right)^T} \quad .
%\end{equation}
The sample covariance matrix~$C_{\psi\psi}$ involves division by~$n-1$ instead of~$n$ in order to avoid a sample bias.

Various implementations of the ensemble Kalman filter exist, which differ in the update step.
%In the straightforward implementation of the ensemble Kalman filter \cite{Evensen1994}, the predicted covariance matrix $C_{\psi\psi}$ is formed (Eq.~\eqref{eq:da:ensemble}), and every ensemble member is individually updated (Eq.~\eqref{eq:da:kalman:mean}).
In the straightforward implementation of the ensemble Kalman filter~\cite{Evensen1994}, every ensemble member is individually updated~(Eq.~\eqref{eq:da:kalman:mean}).
It can be shown that the observations must be randomly perturbed in order to guarantee a statistically consistent analysis scheme~\cite{Burgers1998}.
In order to avoid the introduction of randomly generated numbers, the square-root filter is used here~\cite{Whitaker2002}.
The square-root filter belongs to a larger family of ensemble Kalman filters called ensemble-transform Kalman filters~\cite{Tippett2003, Livings2008}.
Unlike the straightforward implementation of the ensemble Kalman filter, the mean and the deviations of the ensemble members are updated.
This requires the singular value decomposition of a symmetric, positive, semi-definite matrix ($V \Sigma V^T$, where $V$ is orthonormal and $\Sigma$ diagonal), but no spurious errors due to the random perturbation of the observations are introduced.

\begin{theorem}[Square-root filter]
\label{thm:da:ensemble_kalman_filter}
\begin{equation}
\left(\psi^{a}\right)^j = \overline{\psi^a} + \left(\Psi^a\right)_j \quad ,
\end{equation}
\begin{equation}
\overline{\psi^a} = \overline{\psi^f} + \Psi^f\left(M\Psi^f\right)^T\left[(n-1)C_{\epsilon\epsilon}+M\Psi^f\left(M\Psi^f\right)^T\right]^{-1}\left(y_k-M\overline{\psi^f}\right) \quad ,
\end{equation}
\begin{equation}
\Psi^a = \Psi^fV\left[\mathbb{I} - \Sigma\right]^\frac{1}{2}V^T \quad , \quad V \Sigma V^T = \left(M\Psi^f\right)^T\left[(N-1)C_{\epsilon\epsilon}+M\Psi^f\left(M\Psi^f\right)^T\right]^{-1}M\Psi^f \quad ,
\end{equation}
where $\mathbb{I}$ is the identity matrix.
\end{theorem}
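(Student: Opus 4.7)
The plan is to show that the update rules in the statement reproduce, at the level of sample statistics, the Kalman update of Theorem~\ref{thm:da:kalman_filter}. Concretely, I will verify that (i) the updated sample mean $\overline{\psi^a}$ coincides with the Kalman mean update (Eq.~\eqref{eq:da:kalman:mean}) when one substitutes the ensemble mean $\overline{\psi^f}$ for $\psi^f$ and the sample covariance $C_{\psi\psi}^f = \frac{1}{n-1}\Psi^f(\Psi^f)^T$ for the forecast covariance, and (ii) the sample covariance of the updated deviations $\frac{1}{n-1}\Psi^a(\Psi^a)^T$ coincides with the Kalman covariance update (Eq.~\eqref{eq:da:kalman:cov}) under the same substitution. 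Since the first coordinate of the decomposition $\left(\psi^{a}\right)^j = \overline{\psi^a} + \left(\Psi^a\right)_j$ is settled by (i) and the second by (ii), this suffices.

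For the mean, I would insert $C_{\psi\psi}^f = \frac{1}{n-1}\Psi^f(\Psi^f)^T$ into Eq.~\eqref{eq:da:kalman:mean}, factor out $\tfrac{1}{n-1}$ from the gain and absorb a matching factor $n-1$ into the bracketed inverse, which immediately yields the expression stated for $\overline{\psi^a}$. The same substitution turns the Kalman covariance update into
\begin{equation}
C_{\psi\psi}^a = \tfrac{1}{n-1}\Psi^f\bigl[\,\mathbb{I} - (M\Psi^f)^T\left[(n-1)C_{\epsilon\epsilon}+M\Psi^f(M\Psi^f)^T\right]^{-1}M\Psi^f\,\bigr](\Psi^f)^T = \tfrac{1}{n-1}\Psi^f(\mathbb{I}-V\Sigma V^T)(\Psi^f)^T,
\end{equation}
where the second equality uses the SVD $V\Sigma V^T$ defined in the theorem. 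The task then reduces to producing an update matrix $T$ with $\Psi^a = \Psi^f T$ and $TT^T = \mathbb{I} - V\Sigma V^T$.

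The symmetric choice $T = V(\mathbb{I}-\Sigma)^{1/2}V^T$ is the natural one: using orthonormality of $V$, one has $TT^T = V(\mathbb{I}-\Sigma)^{1/2}V^T V(\mathbb{I}-\Sigma)^{1/2}V^T = V(\mathbb{I}-\Sigma)V^T = \mathbb{I} - V\Sigma V^T$, so $\frac{1}{n-1}\Psi^a(\Psi^a)^T$ matches the Kalman covariance exactly. I would also point out the two facts that make this square root well defined and that justify the preference for the symmetric form: the eigenvalues of $\Sigma$ lie in $[0,1]$ because $M\Psi^f\left[(n-1)C_{\epsilon\epsilon}+M\Psi^f(M\Psi^f)^T\right]^{-1}(M\Psi^f)^T$ is a contraction (this is the usual Kalman-gain argument applied to the symmetric Riccati-type update), so $\mathbb{I}-\Sigma$ is positive semi-definite; and $T = T^T$ implies $\Psi^a \mathbf{1} = \Psi^f T \mathbf{1}$ preserves the zero-sum property of the deviations provided $\Psi^f\mathbf{1}=0$, so the updated ensemble has mean exactly $\overline{\psi^a}$ and does not drift.

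The only genuinely delicate step is the covariance part, and specifically the verification that $\mathbb{I}-\Sigma \succeq 0$ so the square root is real; everything else is substitution and linear algebra. I would conclude by noting that any orthogonal rotation $TU$ with $U^TU=\mathbb{I}$ would also satisfy $TT^T = \mathbb{I}-V\Sigma V^T$, so the theorem fixes a particular member of the ensemble-transform family; the symmetric, mean-preserving choice is what is used in practice and what is stated here.
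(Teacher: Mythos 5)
Your derivation is correct, and there is in fact no in-paper proof to compare it against: Theorem~\ref{thm:da:ensemble_kalman_filter} is quoted from the ensemble square-root / ensemble-transform literature cited alongside it (Whitaker and Hamill; Tippett et al.; Livings et al.), so your argument fills in exactly the standard derivation those references contain. The substance matches: insert the sample statistics of Eq.~\eqref{eq:da:ensemble} into the Kalman update of Theorem~\ref{thm:da:kalman_filter}, absorb the factors of $n-1$ into the bracketed inverse (the $(N-1)$ in the statement is evidently the same quantity as $(n-1)$), and choose the symmetric transform $T=V(\mathbb{I}-\Sigma)^{1/2}V^T$ so that $\tfrac{1}{n-1}\Psi^a(\Psi^a)^T$ reproduces Eq.~\eqref{eq:da:kalman:cov}. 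One step deserves a tighter argument than you give it: the zero-sum property of the updated deviations, which is what makes the first display of the theorem consistent with $\overline{\psi^a}$ being the actual ensemble mean, does not follow from $T=T^T$ alone. It follows because $\Psi^f\mathbf{1}=0$ implies $M\Psi^f\mathbf{1}=0$, hence $V\Sigma V^T\mathbf{1}=0$, hence $(\mathbb{I}-\Sigma)^{1/2}V^T\mathbf{1}=V^T\mathbf{1}$ and $T\mathbf{1}=\mathbf{1}$, so $\Psi^a\mathbf{1}=\Psi^f T\mathbf{1}=0$; this is precisely the point of the ``unbiased'' square-root filter discussion in Livings et al. Your positive-semidefiniteness claim for $\mathbb{I}-\Sigma$ is fine and can be made precise by noting that $V\Sigma V^T$ has the same nonzero spectrum as $\left[(n-1)C_{\epsilon\epsilon}+M\Psi^f(M\Psi^f)^T\right]^{-1/2}M\Psi^f(M\Psi^f)^T\left[(n-1)C_{\epsilon\epsilon}+M\Psi^f(M\Psi^f)^T\right]^{-1/2}\preceq\mathbb{I}$ whenever $C_{\epsilon\epsilon}$ is positive definite.
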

Following Eq.~\eqref{eq:da:kalman:prior} and~\eqref{eq:da:kalman:likelihood}, the Bayesian smoother~(Theorem~\ref{thm:da:bayes_smoother}) becomes the Rauch-Tung-Striebel smoother, also known as Kalman smoother~(Theorem~\ref{thm:da:kalman_smoother}).
It can be shown that the smoothed distribution becomes a normal distribution with mean~$\psi^s$ and covariance matrix~$C_{\psi\psi}^s$.
In order to again avoid the shortcomings of assuming linearity, an ensemble Kalman smoother is presented here~(Theorem~\ref{thm:da:ensemble_kalman_smoother}).
%\hy{The derivation is given in~\ref{app:bayes}.}

\begin{theorem}[Rauch-Tung-Striebel smoother]
\label{thm:da:kalman_smoother}
\begin{equation}
p(x_k \mid y_{1:k}) = \mathcal{N}\left(x_k \mid \psi^s, C_{\psi\psi}^s\right) \quad ,
\end{equation}
\begin{equation}
\psi^s = \psi^a + \left(\big(C_{\psi\psi}^f\big)_{k+1}^{-1}GC_{\psi\psi}^a\right)^T\left[\psi_{k+1}^s-\psi_{k+1}^f\right] \quad ,
\end{equation}
\begin{equation}
C_{\psi\psi}^s
= C_{\psi\psi}^a - \left(\big(C_{\psi\psi}^f\big)_{k+1}^{-1}GC_{\psi\psi}^a\right)^T\left[\big(C_{\psi\psi}^f\big)_{k+1}-\big(C_{\psi\psi}^s\big)_{k+1}\right]\left(\big(C_{\psi\psi}^f\big)_{k+1}^{-1}GC_{\psi\psi}^a\right) \quad .
\end{equation}
\end{theorem}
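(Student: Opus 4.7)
The plan is to derive the Rauch-Tung-Striebel equations from the Bayesian smoother (Theorem~\ref{thm:da:bayes_smoother}) under the same assumptions that reduce the Bayesian filter to the Kalman filter, namely Gaussian prior and likelihood (Eq.~\eqref{eq:da:kalman:prior}, \eqref{eq:da:kalman:likelihood}) and a linear evolution operator~$G$. I would proceed by backward induction on the timestep~$k$, starting from~$k=N$, where the smoothed distribution is identical to the filtered distribution, which is already normal by Theorem~\ref{thm:da:kalman_filter} with mean~$\psi^a$ and covariance~$C_{\psi\psi}^a$. This establishes the base case.

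For the inductive step I assume that the smoothed distribution at~$k+1$ is Gaussian, $p(x_{k+1}\mid y_{1:N},\theta,f)=\mathcal{N}(x_{k+1}\mid\psi_{k+1}^s,(C_{\psi\psi}^s)_{k+1})$, and use the backward recursion~\eqref{eq:da:bayes_smoother}. Because $G$ is linear, the transition density $p(x_{k+1}\mid x_k,\theta,f)$ is Gaussian with mean $Gx_k$, and the one-step prediction is $p(x_{k+1}\mid y_{1:k},\theta,f)=\mathcal{N}(\psi_{k+1}^f,(C_{\psi\psi}^f)_{k+1})$ with $\psi_{k+1}^f=G\psi^a$ and the standard forecast covariance (plus model noise, if any). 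The joint distribution of $(x_k,x_{k+1})$ conditioned on $y_{1:k}$ is therefore Gaussian with cross-covariance $C_{\psi\psi}^a G^T$, so the standard Gaussian conditioning identity yields
\begin{equation}
p(x_k\mid x_{k+1},y_{1:k},\theta,f)=\mathcal{N}\left(x_k\mid\psi^a+K(x_{k+1}-\psi_{k+1}^f),\,C_{\psi\psi}^a-KGC_{\psi\psi}^a\right),
\end{equation}
with smoother gain $K=C_{\psi\psi}^a G^T\big((C_{\psi\psi}^f)_{k+1}\big)^{-1}=\big((C_{\psi\psi}^f)_{k+1}^{-1}GC_{\psi\psi}^a\big)^T$, which is precisely the combination appearing in the theorem statement.

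Substituting this conditional Gaussian together with the inductive smoothed distribution into Eq.~\eqref{eq:da:bayes_smoother} and marginalising~$x_{k+1}$ is a routine integral of a product of Gaussians. The mean of~$x_k$ becomes $\psi^a+K(\psi_{k+1}^s-\psi_{k+1}^f)$, matching the stated~$\psi^s$. The covariance is assembled via the law of total covariance: the conditional covariance $C_{\psi\psi}^a-KGC_{\psi\psi}^a$ plus $K(C_{\psi\psi}^s)_{k+1}K^T$, which on substituting $KGC_{\psi\psi}^a=K(C_{\psi\psi}^f)_{k+1}K^T$ (from the definition of~$K$) rearranges to
\begin{equation}
C_{\psi\psi}^s=C_{\psi\psi}^a-K\big[(C_{\psi\psi}^f)_{k+1}-(C_{\psi\psi}^s)_{k+1}\big]K^T,
\end{equation}
i.e.\ the expression in the theorem. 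This closes the induction.

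The main obstacle is bookkeeping rather than conceptual: carefully tracking the Gaussian conditioning of the joint $(x_k,x_{k+1})\mid y_{1:k}$, and rewriting the covariance update so that the symmetric quadratic form in~$K$ collapses into the compact bracketed difference $(C_{\psi\psi}^f)_{k+1}-(C_{\psi\psi}^s)_{k+1}$. A subtlety worth flagging is that the derivation implicitly assumes a model-noise contribution (otherwise $(C_{\psi\psi}^f)_{k+1}=GC_{\psi\psi}^a G^T$ is singular whenever $G$ is), and that the identity $K(C_{\psi\psi}^f)_{k+1}=C_{\psi\psi}^a G^T$ is what links the gain in the mean update to the bracket in the covariance update. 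Once these are in hand, the rest reduces to Gaussian algebra already used in the proof of Theorem~\ref{thm:da:kalman_filter}.
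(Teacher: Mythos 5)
Your derivation is correct and follows the same route the paper intends: it specializes the Bayesian smoother (Theorem~\ref{thm:da:bayes_smoother}) to Gaussian distributions with a linear operator~$G$, using Gaussian conditioning of the joint $(x_k, x_{k+1})$ given $y_{1:k}$ and backward induction from $k=N$, with the gain $K=C_{\psi\psi}^a G^T\big((C_{\psi\psi}^f)_{k+1}\big)^{-1}=\big((C_{\psi\psi}^f)_{k+1}^{-1}GC_{\psi\psi}^a\big)^T$ and the identity $KGC_{\psi\psi}^a=K(C_{\psi\psi}^f)_{k+1}K^T$ collapsing the covariance into the stated bracketed form. The paper only asserts this result ("it can be shown"), so your write-up supplies exactly the standard Gaussian algebra it leaves implicit, including the sensible caveat about process noise keeping $(C_{\psi\psi}^f)_{k+1}$ invertible.
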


\begin{theorem}[Ensemble Kalman smoother]
\label{thm:da:ensemble_kalman_smoother}
\begin{equation}
\left(\psi^s\right)^j = \left(\psi^a\right)^j + \Psi^a\big(\Psi^f\big)_{k+1}^{-1}\left[\left(\psi^s\right)_{k+1}^j-\left(\psi^f\right)_{k+1}^j\right] \quad .
\label{eq:da:ensemble_kalman_smoother}
\end{equation}
\end{theorem}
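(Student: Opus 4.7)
The plan is to derive Theorem~\ref{thm:da:ensemble_kalman_smoother} from the Rauch--Tung--Striebel smoother (Theorem~\ref{thm:da:kalman_smoother}) by re-expressing the smoother gain in terms of the ensemble anomaly matrices defined in Eq.~\eqref{eq:da:ensemble}. This mirrors how the square-root filter (Theorem~\ref{thm:da:ensemble_kalman_filter}) was obtained from the Kalman filter, and proceeds by identifying the combination $C_{\psi\psi}^a G^T$ as the analysis--forecast cross-covariance and $(C_{\psi\psi}^f)_{k+1}$ as the forecast covariance, both of which admit natural Monte-Carlo estimators.

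Concretely, I would first transpose the bracket in Theorem~\ref{thm:da:kalman_smoother} to rewrite its mean update as $\psi^s = \psi^a + K^s\,[\psi_{k+1}^s - \psi_{k+1}^f]$ with smoother gain $K^s = C_{\psi\psi}^a G^T (C_{\psi\psi}^f)_{k+1}^{-1}$. Substituting the sample estimators from Eq.~\eqref{eq:da:ensemble} and using the identity $\Psi_{k+1}^f = G\Psi^a$ (exact for linear $G$, and the same tangent-linear approximation inherent in the square-root filter otherwise), one obtains $C_{\psi\psi}^a G^T \approx \tfrac{1}{n-1}\Psi^a(\Psi_{k+1}^f)^T$ together with $(C_{\psi\psi}^f)_{k+1}^{-1} \approx (n-1)[\Psi_{k+1}^f(\Psi_{k+1}^f)^T]^{-1}$. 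A short SVD calculation, $\Psi_{k+1}^f = U\Sigma V^T$, then collapses the product $(\Psi_{k+1}^f)^T[\Psi_{k+1}^f(\Psi_{k+1}^f)^T]^{-1}$ to the Moore--Penrose pseudoinverse $(\Psi_{k+1}^f)^{+}$, giving $K^s \approx \Psi^a(\Psi_{k+1}^f)^{-1}$ in the notation of the theorem.

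Because $K^s$ is a single linear operator, applying it member-wise to the innovations $(\psi^s)_{k+1}^j - (\psi^f)_{k+1}^j$ and adding to $(\psi^a)^j$ immediately produces Eq.~\eqref{eq:da:ensemble_kalman_smoother}. For consistency I would verify that taking the sample mean of Eq.~\eqref{eq:da:ensemble_kalman_smoother} recovers the RTS mean update, and that forming the sample covariance of the updated ensemble reproduces the RTS covariance update (both follow after one further use of $G\Psi^a = \Psi_{k+1}^f$), so that the ensemble update is statistically equivalent to the Gaussian smoother in the linear limit.

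The main obstacle is making the collapse $K^s \approx \Psi^a(\Psi_{k+1}^f)^{-1}$ rigorous in the realistic regime $N_x \gg n$, where $\Psi_{k+1}^f$ is tall and $\Psi_{k+1}^f(\Psi_{k+1}^f)^T$ is singular. One must argue that the correct reading of $(\Psi_{k+1}^f)^{-1}$ is the Moore--Penrose pseudoinverse and, crucially, that restricting the smoother update to the column span of the forecast anomalies is consistent with the ensemble Kalman philosophy that updates are confined to the ensemble subspace. A secondary subtlety, inherited from the filter, is that $\Psi_{k+1}^f = G\Psi^a$ is only a statistically linearized approximation when $G$ is nonlinear; as in Theorem~\ref{thm:da:ensemble_kalman_filter}, this is the price paid for a sample-based representation and does not affect the algebraic form of the smoother update.
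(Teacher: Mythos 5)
Your derivation is correct, and it is worth noting that the paper itself never proves Theorem~\ref{thm:da:ensemble_kalman_smoother}: it is stated as the ensemble analogue of the Rauch--Tung--Striebel smoother (Theorem~\ref{thm:da:kalman_smoother}) without an explicit argument, so your proposal supplies exactly the justification the paper leaves implicit. Your route is the standard one: transpose the RTS gain to $K^s = C_{\psi\psi}^a G^T \big(C_{\psi\psi}^f\big)_{k+1}^{-1}$, insert the sample estimators of Eq.~\eqref{eq:da:ensemble} together with $\Psi_{k+1}^f = G\Psi^a$, and collapse $\left(\Psi_{k+1}^f\right)^T\left[\Psi_{k+1}^f\left(\Psi_{k+1}^f\right)^T\right]^{-1}$ to the Moore--Penrose pseudoinverse, which is indeed the only sensible reading of $\big(\Psi^f\big)_{k+1}^{-1}$ in Eq.~\eqref{eq:da:ensemble_kalman_smoother} since the anomaly matrix is tall. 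You also correctly flag the two real subtleties: the rank deficiency of $\Psi_{k+1}^f\left(\Psi_{k+1}^f\right)^T$ when the state dimension exceeds the ensemble size (handled by confining the update to the ensemble subspace), and the fact that $\Psi_{k+1}^f = G\Psi^a$ is only a statistical linearization for nonlinear $G$ — though one can sidestep the latter entirely by estimating the cross-covariance $C_{\psi\psi}^a G^T$ directly as $\tfrac{1}{n-1}\Psi^a\left(\Psi_{k+1}^f\right)^T$, which is closer to how ensemble smoothers are justified in practice. The one place where your consistency check is lighter than it looks is the covariance verification: forming the sample covariance of the member-wise update produces cross terms $\tfrac{1}{n-1}\left[\Psi_{k+1}^s-\Psi_{k+1}^f\right]\left(\Psi^a\right)^T$ that do not vanish identically and must be resolved through the projection $\left(\Psi_{k+1}^f\right)^{+}\Psi_{k+1}^f$ onto the ensemble subspace; this is a known but nontrivial equivalence (the ensemble RTS versus ensemble Kalman smoother), and it affects only the claimed statistical equivalence, not the validity of the update formula you derive.
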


\section{Level-set methods}
\label{sec:ls}

When data assimilation is applied to the motion of an interface, the question arises as to what constitutes its probabilistic state space.
Refining the distinction between front-tracking and front-capturing methods, there are at least three ways to view the motion of an interface~\cite{Sethian2001, Sethian2003}:
\begin{description}
\item[Geometric view.]
The interface is parameterized and discretized so that one follows the motion of the whole interface by solving the laws of motion for a sufficient number of points on the interface.
\item[Set-theoretic view.]
A characteristic function is defined over the whole domain.
The characteristic function assumes one of two values, depending on whether the point at the location in question is inside or outside the region enclosed by the interface.
\item[Analytic view~(also known as ``analysis view''~\cite{Sethian2003}).]
A level-set function is defined over the whole domain.
The interface is reconstructed by identifying the position of a particular level set.
\end{description}
Beyond the description of motion, these three views also pertain to the assimilation of data.
In fact, all three views have been employed in earlier level-set data assimilation frameworks:
While using a level-set method when solving for the motion of the interface, \cite{Rochoux2013} take each entry in the innovation vector~$y_k - M\psi^f$~(Eq.~\eqref{eq:da:kalman}) as the distance between one point on the predicted interface and one point on the observed interface.
This framework corresponds to the geometric view.
\cite{Gao2017} compute~$y_k$ and~$\psi^f$ in the innovation vector from a progress variable.
The progress variable assumes values between zero and one, with most intermediate values assumed near the interface.
This framework is an approximation of the set-theoretic view.
\cite{Moreno2007} compute~$y_k$ and~$\psi^f$ from a level-set function.
This framework corresponds to the analytic view, and comes closest to the level-set data assimilation framework presented in this study.
The motion of an interface depends on the values of the level-set function in the immediate environment of the interface, whereas the values of the level-set function are not unique away from the interface.
The level-set data assimilation framework presented in this study (i) introduces an additional constraint in agreement with the the analytic view, (ii) validates the choice of the constraint for two analytical test cases in one and two dimensions respectively, and (iii) addresses shortcomings in the aforementioned frameworks.

In order to construct the probabilistic state space, the derivation of the level-set method is revisited in this section.
%By taking the laws of motion, the governing equations for the analytic view are derived.
The centerpiece of this derivation is the Hamilton-Jacobi equation.
In theory, the level-set method may alternatively be derived as the transport of a passive scalar quantity~\cite[Chapter~2]{Peters2000}.
Although the value of the level-set function is well defined at the interface, the choice of values for the level-set function away from the interface is in general not unique, and would require an ad-hoc constraint.
It is demonstrated how the necessary constraint to the Hamilton-Jacobi equation naturally follows from the choice of phase space.
Finally, it is shown that the solutions to the Hamilton-Jacobi equation, the so-called generating functions, form the appropriate probabilistic state space for data assimilation.
Two complimentary level-set algorithms are combined to solve the Hamilton-Jacobi equation computationally efficiently: the narrow-band method~\cite{Peng1999} and the fast-marching method~\cite{Sethian1996}.
Along with the ensemble Kalman filter~(Theorem~\ref{thm:da:ensemble_kalman_filter}), the three algorithms form the backbone of our level-set data assimilation framework.

\subsection{Hamilton-Jacobi equation}
\label{sec:ls:hj}

The laws of motion of an interface shall be given by
\begin{equation}
\tdiff{r}{t} = u(r) \quad ,
\label{eq:ls:hj:newton}
\end{equation}
where~$r$ is the position of one point on the interface, and~$u$ is the velocity.
%For simplicity, we consider a hyperbolic problem where the magnitude of the velocity is assumed to be at most a function of the position $x$ \cite{Sethian1996}.
%\hy{Forest fire.}
In Hamiltonian mechanics, motion is described in phase space using generalized coordinates and momenta~\cite[Chapter~3]{Arnold1989}.
With interfaces in mind, a natural choice for the phase space is to use the position~$r$ as generalized coordinates and the normal vector~$n$ as generalized momenta.

%The position $x$ and the normal vector $n$ form a vector in phase space.
%In Hamiltonian mechanics, the position $x$ represents generalized coordinates, and the normal vector $n$ represents generalized momenta.
%For the Hamilton-Jacobi equation, we derive a generating function $G(x, t)$.
%With the motion of a surface and data assimilation in mind, we restrict ourselves to solutions with the following properties:
For the Hamilton-Jacobi equation, the generating function~$G(r, t)$ shall satisfy the following properties:
\begin{itemize}
\item The initial interface~$r(0)$ is given by a level set~$G(r(0), 0) = \mathrm{const}$.
\item The moving interface is identified as the level set~$G(r(t), t) = G(r(0), 0)$.
%\item At any given time, all level sets obey Huygens' principle with respect to the level set $G(x(t), t) = G(x(0), 0)$.
%Physically speaking, every point on the surface is interpreted as the source of a wave.
%As the wave propagates throughout the domain, it connects the point on the surface to every other point in the domain.
%For unity wave speed, Huygens' principle is equivalent to the eikonal equation $|\nabla G| = 1$.
%\hy{Find reference for Huygens' principle.}
\item Huygens' principle~\cite[Chapter~9]{Arnold1989} states that~$\nabla G = n$.
Given the choice of phase space, this is a necessary relationship between the generating function and the generalized coordinates and momenta.
\end{itemize}
%The first two properties imply the ability of the G-equation to capture fronts \cite{Osher1988}.
%%The third property may require more arbitration.
%The third property along with the second property implies $\nabla G = n$.
%This is a necessary relationship between the generating function and its generalized coordinates and momenta \cite{Landau1969}.
%Geometrically speaking, Huygens' principle gives an eikonal field \cite{Sethian1996}.
%This is highly relevant because data assimilation applied to the motion of a surface should involve the notion of distance.
%%Numerically speaking, maintaining a slope close to unity stabilizes the narrow-band method \cite{Sussman1994}.
%%The narrow-band method is discussed in Section~\ref{sec:ls:da}.
The first two properties are present in every front-capturing method~\cite{Peters2000}.
The third property, Huygens' principle, follows from Hamiltonian mechanics~(\ref{app:hj}).
Physically speaking, every point on the interface is interpreted as the source of a wave.
As the waves propagate through the domain, they connect each point on the interface to every other point in the domain.
Geometrically speaking, Huygens' principle gives an eikonal field~\cite{Sethian1999}.
%This notion of distance in the generating function is critical in its choice for the probabilistic state space.
%This notion of distance is critical in choosing the generating function to represent the state of an interface in the probabilistic state space.

It remains to show how the generating function translates into a level-set method.
The Lagrangian~$\mathcal{L}(r, \dot{r}, t)$ is given by~(\ref{app:hj}, Lemma~\ref{thm:hj:lagrange})
\begin{equation}
\mathcal{L}(r(t), \dot{r}(t), t) = 0 \quad ,
\label{eq:ls:hj:lagrange}
\end{equation}
\begin{equation}
\mathrm{s.t.} \quad G(r(t), t) - G(r(0), 0) = 0 \quad ,
\label{eq:ls:hj:hamilton1}
\end{equation}
\begin{equation}
\mathrm{s.t.} \quad n(t) \cdot n(t) - 1 = 0 \quad .
\label{eq:ls:hj:hamilton2}
\end{equation}
The Hamiltonian~$\mathcal{H}(r, n, t)$, subject to the same constraints as the Lagrangian~$\mathcal{L}(r, \dot{r}, t)$~(Eq.~\eqref{eq:ls:hj:hamilton1}, \eqref{eq:ls:hj:hamilton2}), is given by~(\ref{app:hj}, Lemma~\ref{thm:hj:hamilton})
\begin{equation}
\mathcal{H}(r(t), n(t), t) = u(r(t)) \cdot n(t) \quad .
\label{eq:ls:hj:hamilton}
\end{equation}
Finally, the Hamilton-Jacobi equation is given by~(\ref{app:hj}, Theorem~\ref{thm:hj:g})
\begin{equation}
\pdiff{G}{t} + u(r(t)) \cdot n(t) = 0 \quad .
\label{eq:ls:hj:g}
\end{equation}
The solution to the Hamilton-Jacobi equation, the generating function~$G(r, t)$, is defined over the whole domain.
As such, it represents the state of the interface in the probabilistic state space.
Eq.~\eqref{eq:ls:hj:g} is solved near the interface.
Away from the interface, the generating function is constrained by~(Eq.~\eqref{eq:ls:hj:hamilton2})
\begin{equation}
|\nabla G| = 1 \quad .
\label{eq:ls:hj:eikonal}
\end{equation}
Eq.~\eqref{eq:ls:hj:eikonal} is an eikonal equation~\cite{Sethian1999}.
The solutions to this eikonal equation are signed distance functions.

\subsection{Level-set data assimilation framework}
\label{sec:ls:da}

The level-set data assimilation framework presented in this study combines three algorithms:
(i)~the narrow-band method to solve the Hamilton-Jacobi equation near the interface~\cite{Peng1999}, (ii)~the fast-marching method to extend the eikonal field from the narrow band to the whole domain~\cite{Sethian1996}, and (iii)~the ensemble Kalman filter and smoother to assimilate data~\cite{Evensen2009}.
%The level-set filtering and smoothing frameworks are shown in Fig.~\ref{fig:ls:da}.
%The individual steps are outlined below.
%For details on implementation and numerical schemes, we refer to the cited publications.

\begin{figure}[ht]
\centering
\begin{subfigure}[ht]{0.45\textwidth}
\centering
\includegraphics[scale=0.6]{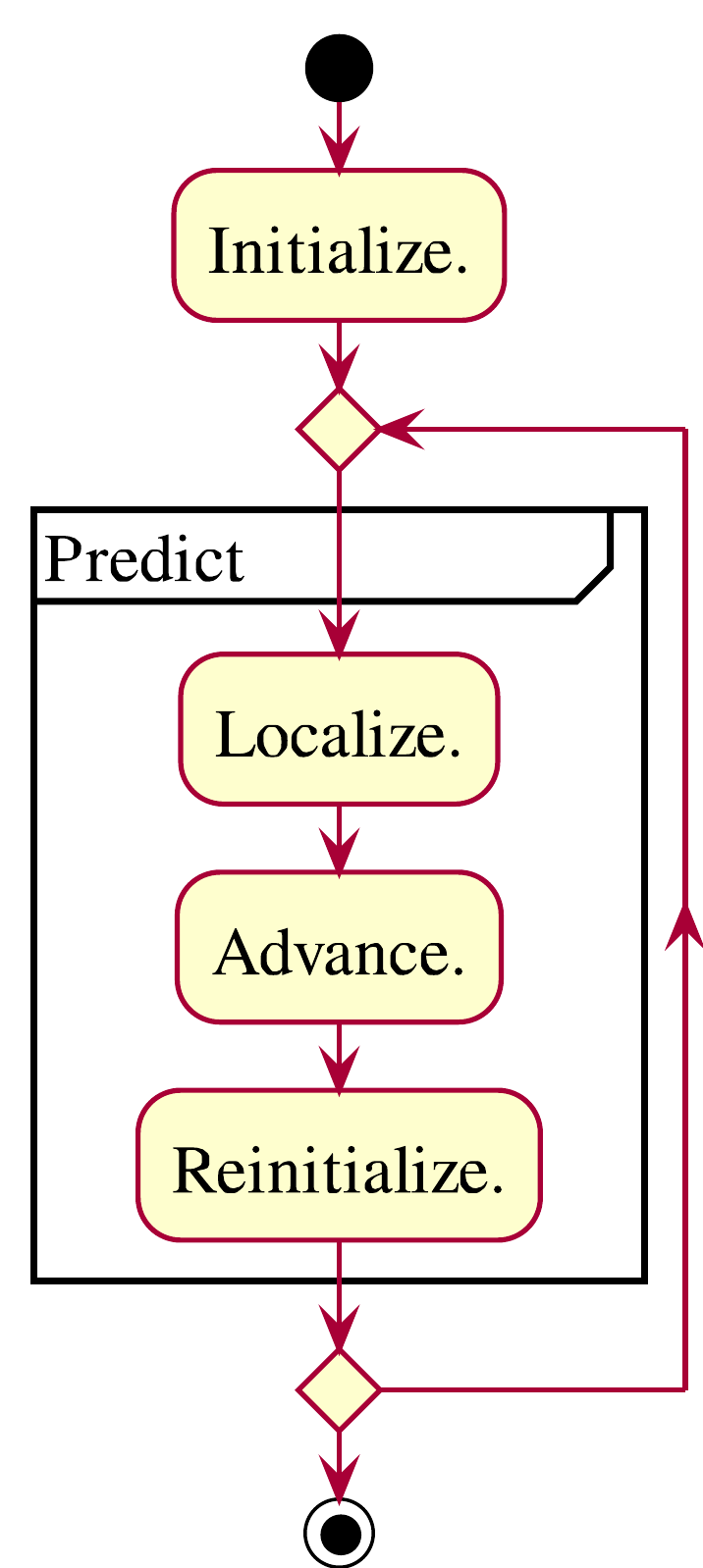}
\end{subfigure}
\begin{subfigure}[ht]{0.45\textwidth}
\centering
\includegraphics[scale=0.6]{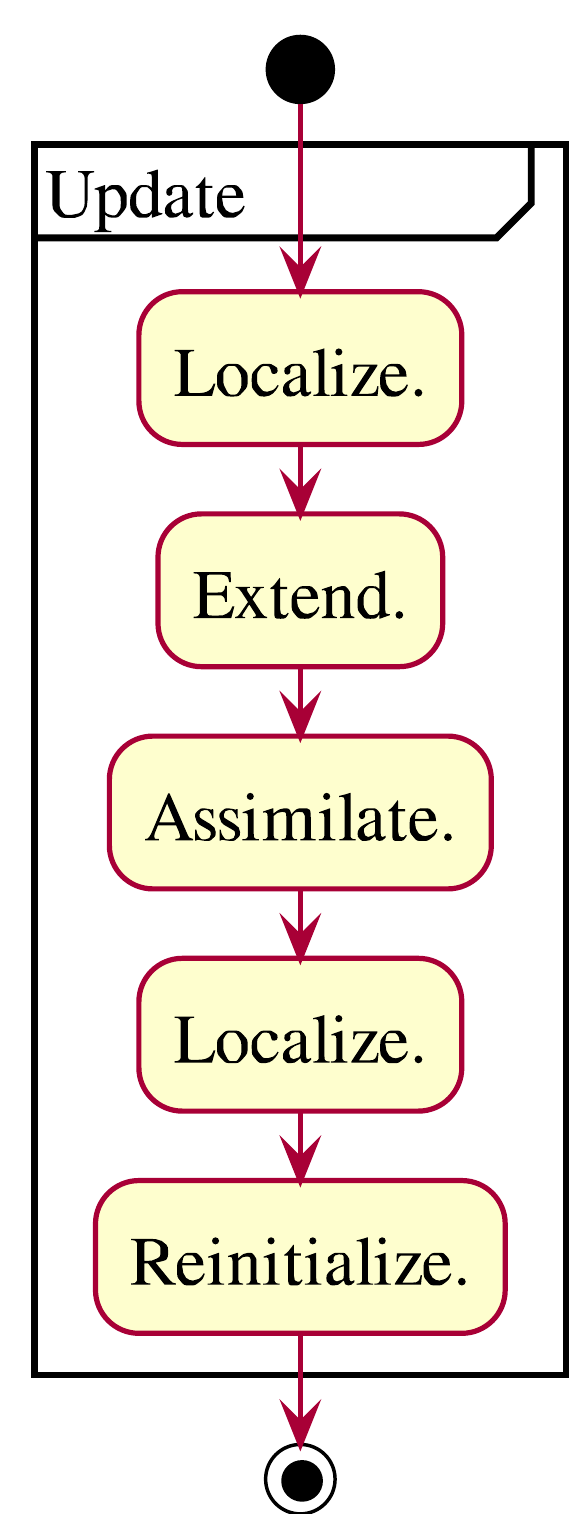}
\end{subfigure}
\caption{Activity diagrams of narrow-band method~(left) and level-set data assimilation~(right).}
%Explanations of individual activities are given in Section~\ref{sec:ls:da}.
%Note that our notion of 'Extend' differs from the one referred to in the narrow-band method \cite{Peng1999}.
%In the narrow-band method, extension is concerned with the magnitude near the surface of quantities only defined on the surface.
%In our level-set data assimilation framework, we generate an eikonal field over the whole domain using the fast-marching method \cite{Sethian1996}.}
\label{fig:ls:noda}
\end{figure}

The narrow-band method forms the backbone of the level-set data assimilation framework~\cite{Peng1999}.
It solves the Hamilton-Jacobi equation~(Eq.~\ref{eq:ls:hj:g}) in a narrow band near the interface.
%In the limit of an infinitely narrow band, the generating function would give exact information about the positions of points on the surface as well as the normal vectors at these points.
%Hence, a narrow band around the surface is sufficient to predict the position of the surface at the next timestep \cite{Osher1988}.
Using a narrow band reduces the computational cost by one order of magnitude compared to finding the generating function over the whole domain~\cite{Adalsteinsson1995}.
In a numerical simulation, the spatial resolution shall be denoted by~$\Delta x$ and the temporal resolution by~$\Delta t$.
The temporal resolution has to satisfy the Courant-Friedrich-Lewy condition~$u\Delta t < \Delta x$ at every point on the interface~\cite{Peng1999}.
The narrow-band method consists of the following steps~(Fig.~\ref{fig:ls:noda}, left):
\begin{description}
\item[Initialize.]
Before the first timestep, the signed distance field has to be known in a sufficiently wide neighborhood of the initial interface.
It may be obtained from either a previously computed generating function or exact knowledge of the position of the initial interface.
\item[Localize.]
At the timestep~$k$, a band~$T^k = \{r : G(r, t_k) < \gamma\Delta x\}$ is formed, where~$\gamma$ is a constant chosen according to the width of the narrow band.
A second, enveloping band~$N^k = \{r : G(r+r', t_k) < \gamma\Delta x \, , \, \exists |r'| < \Delta x\}$ is formed.
It follows from the Courant-Friedrich-Lewy condition that the band~$N^k$ contains the band~$T^{k+1}$ formed at the next timestep~$k+1$.
\item[Advance.]
The generating function over the band~$N^k$ is evolved by one timestep.
%In general, the speed $U$ is not only a function of position but also curvature and other local properties of the surface.
%Thus, the G-equation has to be treated as a parabolic Hamilton-Jacobi equation \cite{Osher1988}.
Note that the solution~$\widetilde{G}^{k+1}(r)$ is not the generating function at the next timestep~$k+1$.
While the position of the interface correctly coincides with the level set of the solution~$\widetilde{G}^{k+1}(r)$, Huygens' principle is no longer satisfied~\cite{Sussman1994}.
\item[Reinitialize.]
The following Hamilton-Jacobi equation is solved over the band~$N^k$ until steady state is reached~\cite{Sussman1994}:
\begin{equation}
\pdiff{\widetilde{G}}{\tilde{t}} + \mathrm{sgn}\left(\widetilde{G}\left(r, \tilde{t}\right)\right)\left(\left|\nabla\widetilde{G}\right|-1\right) = 0 \quad ,
\end{equation}
\begin{equation}
\widetilde{G}(r, 0) = \widetilde{G}^{k+1}(r) \quad .
\end{equation}
The steady-state solution~$\widetilde{G}(r, \tilde{t}\to\infty)$ gives the generating function~$G(r, t_{k+1})$ over the band~$N^k$ at the timestep~$k+1$.
Note that this Hamilton-Jacobi equation is a numerical continuation of the eikonal equation~(Eq.~\eqref{eq:ls:hj:eikonal}), where~$\widetilde{t}$ denotes a pseudo-time~\cite{Ascher1995}.
\end{description}

When observations are available, the prediction step is followed by an update step~(Fig.~\ref{fig:ls:noda}, right).
Like the prediction step, the update step involves localization and reinitialization so that the narrow band follows the position of the interface.
Instead of the advancement step, there is an extension step and an assimilation step:
\begin{description}
\item[Extend.]
While a narrow band is sufficient to capture the motion of an interface, data assimilation has to be performed in one probabilistic state space for all interfaces.
In a numerical simulation, the state comprises the values of the generating function at every grid point.
The fast-marching method extends the generating function from the narrow band to the whole domain, thus returning a well-defined state.
The complexity of the fast-marching method is~$\mathcal{O}(m^2\log(m))$ in two dimensions or~$\mathcal{O}(m^3\log(m))$ in three dimensions, where~$m$ denotes the number of grid points in one dimension~\cite{Sethian1996}.
While the generating function over the whole domain is relatively expensive to compute, the fast-marching method is only called when observations are available.
If combined with the narrow-band method (complexity~$\mathcal{O}(m)$ or~$\mathcal{O}(m^2)$~\cite{Peng1999}), this level-set method becomes affordable overall.
%Higher-order implementations of the fast-marching method exist \cite{Sethian2000}.
\item[Assimilate.]
The ensemble Kalman filter~(Theorem~\ref{thm:da:ensemble_kalman_filter}) and/or smoother~(Theorem~\ref{thm:da:ensemble_kalman_smoother}) are applied.
The assimilation step is followed by a localization step and a reinitialization step in order to prepare the narrow band for the next prediction step.
Note that the result of data assimilation is in general not an eikonal field.
While the band localized after data assimilation may in theory be too narrow, it can be shown that the slopes in the filtered solutions are in expectation less than or equal to unity~(\ref{app:ls}).
%\begin{equation}
%\mathbb{E}\left[\left|D\psi^a\right|\right] < 1 \quad ,
%\end{equation}
%where $D$ denotes a differentiation matrix.
%In the localization step following the assimilation step, the bands $T^{k,a}$ and $N^{k, a}$ are thus in expectation too wide.
%After the subsequent reinitialization step, the bands $T^{k+1}$ and $N^{k+1}$ are reduced to the correct width in the localization step at the beginning of the next timestep.
\end{description}

\begin{figure}[ht]
\centering
\begin{subfigure}[ht]{0.45\textwidth}
\centering
\includegraphics[scale=0.6]{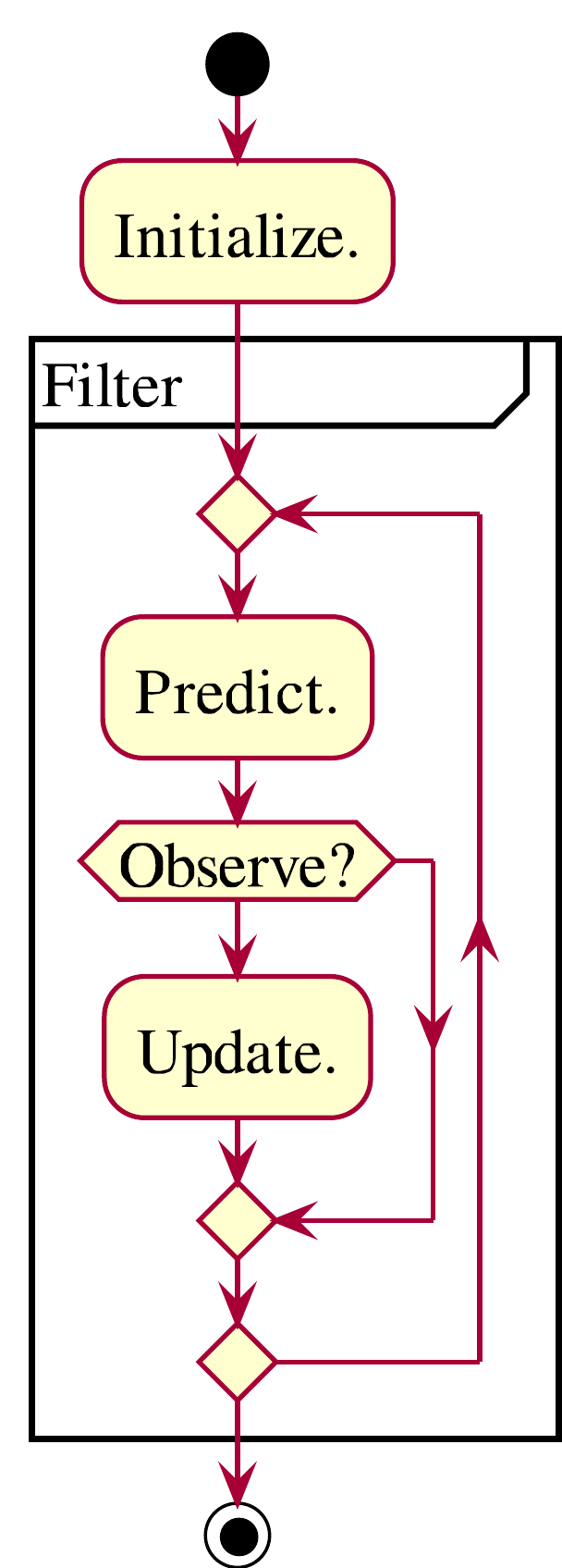}
\end{subfigure}
\begin{subfigure}[ht]{0.45\textwidth}
\centering
\includegraphics[scale=0.6]{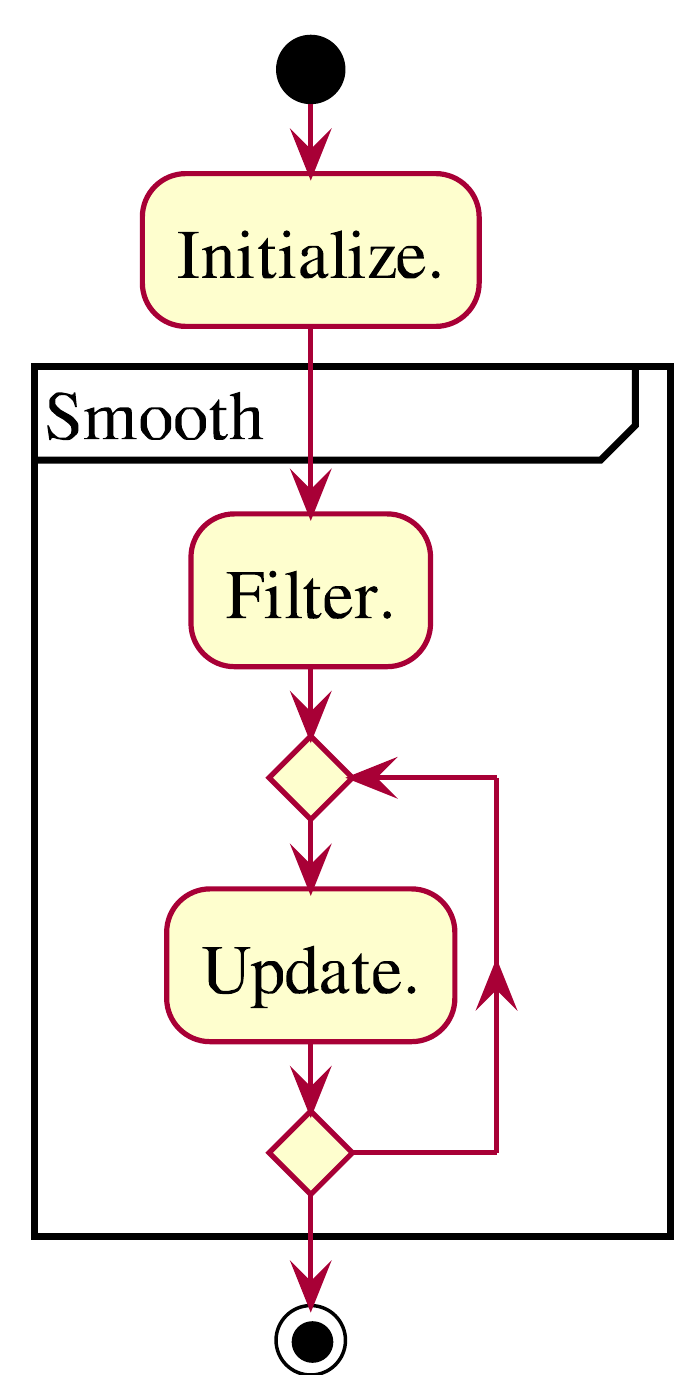}
\end{subfigure}
\caption{Activity diagrams of filtering~(left) and smoothing~(right) frameworks.
The prediction and update steps are defined in Fig.~\ref{fig:ls:noda}.
Note how the smoothing framework~(right) builds on top of the filtering framework~(left).}
\label{fig:ls:da}
\end{figure}

Combining the building blocks of the level-set method and data assimilation gives the level-set filtering and smoothing frameworks (Fig.~\ref{fig:ls:da}).
In the following subsections, the update step in the level-set filtering framework is verified using two analytical test cases:
In a one-dimensional test case, the level-set filtering framework, which represents the analytic view of data assimilation, is contrasted with the geometric and set-theoretic views of data assimilation.
It is shown that the analytic view of data assimilation works correctly, which sets the stage for its generalization to higher dimensions.
Afterwards, the level-set filtering framework is applied to a two-dimensional test case, and the effect of the observations on the shape of the interface is discussed.
It is shown how various parameters such as the number of observations and the observation error may affect the performance of the level-set data assimilation framework.
%The Kalman filter is here chosen over the square-root filter in order to study data assimilation without sampling errors.
%Similarly, the location of the surface is here continuous in order to avoid discretization errors.
%The derivation of the Kalman filter generalized to state spaces with functions as vectors is given in \ref{app:bayes}.

%------------------------------------------------------------------------------

\subsection{One-dimensional test case}
\label{sec:ls:ex1d}

The predicted and observed positions~$x$ and~$y$ of an interface in one dimension are respectively given by~(Fig.~\ref{fig:ls:ex1d})
\begin{equation}
x \sim \mathcal{N}(x \mid \mu_\psi, C_{\psi\psi}) \quad , \quad \mu_\psi = 0 \quad , \quad C_{\psi\psi} = 0.64 \quad ,
\label{eq:ls:ex1d:psi}
\end{equation}
\begin{equation}
y = x \sim \mathcal{N}(x \mid \mu_\epsilon, C_{\epsilon\epsilon}) \quad , \quad \mu_\epsilon = 1 \quad , \quad C_{\epsilon\epsilon} = 0.36 \quad .
\label{eq:ls:ex1d:epsilon}
\end{equation}
In one dimension, an interface reduces to a point.
The normal vector on the interface is unique up to its orientation.
Geometric data assimilation using the Kalman filter~(Theorem~\ref{thm:da:kalman_filter}) reduces to optimal interpolation of the positions~$\mu_\psi$ and~$\mu_\epsilon$~(Eq.~\eqref{eq:da:kalman:mean}):
\begin{equation}
x = \mu_\psi + \frac{C_{\psi\psi}}{C_{\epsilon\epsilon}+C_{\psi\psi}}\left(\mu_\epsilon-\mu_\psi\right) = 0.64 \quad .
\label{eq:ls:ex1d:phase_space}
\end{equation}
The filtered position~$x$ is closer to the observation than to the prediction in accordance with the variances~$C_{\psi\psi}$ and~$C_{\epsilon\epsilon}$~(Fig.~\ref{fig:ls:ex1d}, left).

For set-theoretic data assimilation, the characteristic function~$\chi$ is assumed to be bounded by~$\pm 1$ with the interface at~$\chi=0$:
%The surface is marked by a sharp gradient, e.g.\ in shock waves or premixed flames.
%In the limiting case, the progress variable resembles a Heaviside function:
\begin{equation}
\chi(x) =
\begin{cases}
-1 & x < c \\
0 & x = c \\
1 & x > c
\end{cases}
\quad , \quad c \sim \mathcal{N}(c \mid \mu_\psi, C_{\psi\psi}) \quad .
\end{equation}
The characteristic function~$\chi$ follows a Bernoulli distribution, which is the appropriate probability distribution for a random variable with two possible outcomes, at every location~$x$~\cite[Chapter~6]{MacKay2003}:
\begin{equation}
p\left(\chi(x)=1\right) = \Phi\left(\frac{x-\mu_\psi}{\sqrt{C_{\psi\psi}}}\right) \quad , \quad p\left(\chi(x)=-1\right) = 1 - \Phi\left(\frac{x-\mu_\psi}{\sqrt{C_{\psi\psi}}}\right) \quad ,
\end{equation}
where $\Phi$ denotes the cumulative density function of the normal distribution.
The characteristic function~$\chi$ does not follow a normal distribution whereas the Kalman filter requires the mean~$\overline{\chi}$ and the covariance function~$\mathcal{C}_{\chi\chi}$:
\begin{equation}
\overline{\chi}(x) = 2\Phi\left(\frac{x-\mu_\psi}{\sqrt{C_{\psi\psi}}}\right)-1 \quad , \quad \mathcal{C}_{\chi\chi}(x_1, x_2) = 4\Phi\left(\frac{\min(x_1, x_2)-\mu_\psi}{\sqrt{C_{\psi\psi}}}\right)\left(1-\Phi\left(\frac{\max(x_1, x_2)-\mu_\psi}{\sqrt{C_{\psi\psi}}}\right)\right) \quad ,
\end{equation}
Note that the mean~$\overline{\chi}$ is not a characteristic function.
%Also note that the covariance function $\mathcal{C}_{\chi\chi}$ is not differentiable at $x_1=x_2$.
Set-theoretic data assimilation using the Kalman filter~(Theorem~\ref{thm:da:kalman_filter}) gives
\begin{align}
\chi(x)
&= 2\Phi\left(\frac{x-\mu_\psi}{\sqrt{C_{\psi\psi}}}\right)-1 - \frac{4\Phi\left(\frac{\min(x, \mu_\epsilon)-\mu_\psi}{\sqrt{C_{\psi\psi}}}\right)\left(1-\Phi\left(\frac{\max(x, \mu_\epsilon)-\mu_\psi}{\sqrt{C_{\psi\psi}}}\right)\right)}{1+4\Phi\left(\frac{\mu_\epsilon-\mu_\psi}{\sqrt{C_{\psi\psi}}}\right)\left(1-\Phi\left(\frac{\mu_\epsilon-\mu_\psi}{\sqrt{C_{\psi\psi}}}\right)\right)}\left(2\Phi\left(\frac{\mu_\epsilon-\mu_\psi}{\sqrt{C_{\psi\psi}}}\right)-1\right) \\
&= 2\Phi\left(1.25x\right)-1 - \frac{4\Phi\left(1.25\min(x, 1)\right)\left(1-\Phi\left(1.25\max(x, 1)\right)\right)}{1+4\Phi\left(1.25\right)\left(1-\Phi\left(1.25\right)\right)}\left(2\Phi\left(1.25\right)-1\right) \quad .
\label{eq:ls:ex1d:progress_variable}
\end{align}
The interface is localized at~$x\approx0.139$~(Fig.~\ref{fig:ls:ex1d}, middle).
The filtered position~$x$ is closer to the prediction than to the observation although the prediction variance~$C_{\psi\psi}$ is larger than the observation variance~$C_{\epsilon\epsilon}$.
Note that the filtered position~$x$ is independent of the variance~$C_{\epsilon\epsilon}$.

For analytic data assimilation, the predicted and observed positions~$x$ and~$y$ are viewed as points on the level sets of the predicted and observed generating functions respectively.
It follows from Eq.~\eqref{eq:ls:ex1d:psi} that the predicted generating function is given by
\begin{equation}
G(x) \sim \mathcal{N}(G(x) \mid x-\mu_\psi, C_{\psi\psi}) \quad .
\end{equation}
The mean~$\overline{G}$ and the covariance function~$\mathcal{C}_{GG}$ are given by
\begin{equation}
\overline{G}(x) = x - \mu_\psi \quad , \quad \mathcal{C}_{GG}(x_1, x_2) \equiv C_{\psi\psi} \quad .
\end{equation}
Analytic data assimilation using the Kalman filter~(Theorem~\ref{thm:da:kalman_filter}) on the generating function gives
\begin{equation}
G(x) = x - \mu_\psi - \frac{C_{\psi\psi}}{C_{\epsilon\epsilon}+C_{\psi\psi}}\left(\mu_\epsilon-\mu_\psi\right) = x - 0.64 \quad .
\label{eq:ls:ex1d:generating_function}
\end{equation}
The filtered position~$x$ is given by the level set of the filtered generating function~$G$~(Eq.~\eqref{eq:ls:ex1d:generating_function}), which is identical to the result from geometric data assimilation~(Fig.~\ref{fig:ls:ex1d}, right).%~(Eq.~\eqref{eq:ls:ex1d:phase_space}).
%The level-set data assimilation framework is thus verified in one dimension~(Fig.~\ref{fig:ls:ex1d}, right).

\begin{figure}[ht]
\centering
\includegraphics[width=0.8\textwidth]{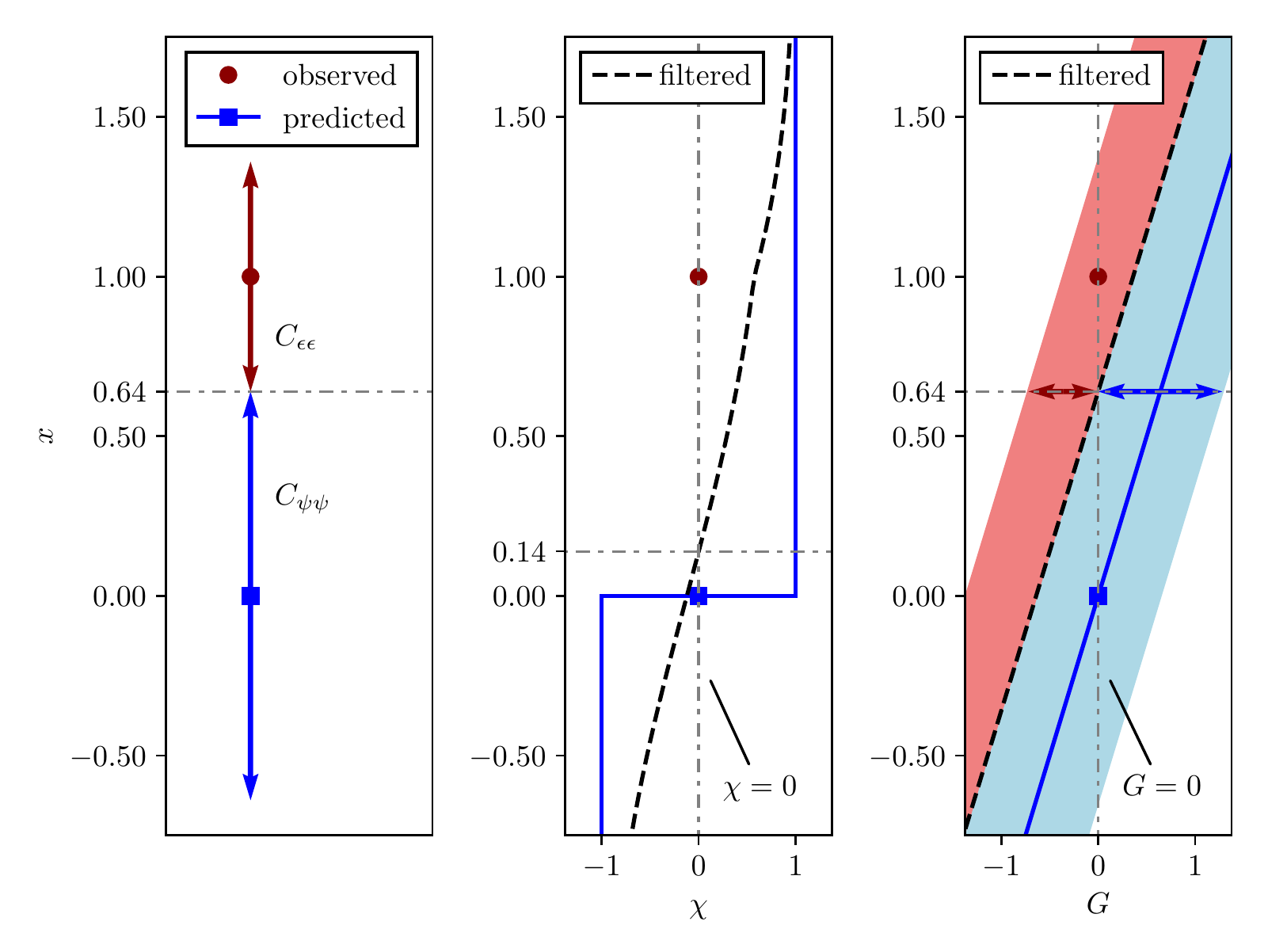}
\centering
\begin{tabular}{p{0.05\textwidth}p{0.225\textwidth}p{0.225\textwidth}p{0.225\textwidth}}
& \centering (a) geometric view & \centering (b) set-theoretic view & \centering (c) analytic view
\end{tabular}
\caption{Data assimilation from geometric~(left), set-theoretic~(middle) and analytic view~(right).
%The set-theoretic view relies on a characteristic function~$\chi$.
%The analytic view relies on a level-set function~G.
The blue square at~$x=0$ marks the predicted position of the interface.
The red circle at~$x=1$ marks the observed position of the interface.
The grey, dash-dotted lines give the coordinates of the filtered position of the interface.
In one dimension, data assimilation from the geometric view (left) reduces to optimal interpolation of the $x$ coordinates, weighted by $C_{\psi\psi}$ and $C_{\epsilon\epsilon}$ respectively.
The geometric solution may be considered the most intuitive one.
In the middle figure, the Kalman filter is applied to the characteristic function~$\chi$~(blue line).
The set-theoretic solution (dashed line) localizes the interface at~$x\approx0.139$, significantly differently from the geometric solution.
This shows that oft-used physical quantities, which exhibit behavior similar to a characteristic function in the vicinity of the interface of interest, are not suitable for data assimilation.
In the right figure, the Kalman filter is applied to the generating function~$G$~(blue line).
The analytic solution (dashed line) localizes the interface at~$x=0.64$, identically to the the geometric solution.
}
\label{fig:ls:ex1d}
\end{figure}

In summary, geometric, set-theoretic and analytic data assimilation are compared.
Set-theoretic data assimilation gives implausible results in one dimension, and is thus discarded from any further consideration.
Geometric and analytic data assimilation give consistent results in one dimension.
Nevertheless, the generalization of geometric data assimilation to higher dimensions is not straightforward because it requires a one-to-one correspondence between the predicted interface and the observations.
In the next test case, the generalization of analytic data assimilation, as the most feasible framework, to higher dimensions is investigated.

\FloatBarrier

%------------------------------------------------------------------------------

\subsection{Two-dimensional test case}
\label{sec:ls:ex2d}

The distance field $S$ of a corner in $(x_0, y_0)$ is given by
\begin{equation}
S(x, y; x_0, y_0) =
\begin{cases}
x - x_0 & \frac{\pi}{4} \leq \theta \leq \pi \\
y - y_0 & -\frac{\pi}{2} \leq \theta \leq \frac{\pi}{4} \\
-\sqrt{(x-x_0)^2 + (y-y_0)^2} & \pi \leq \theta \leq \frac{3\pi}{2}
\end{cases} \quad .
\label{eq:ls:ex2d:S}
\end{equation}
For prediction, the coordinates of the corner $(x_0, y_0)$ are taken to be independent and normally distributed:
\begin{equation}
x_0 \sim \mathcal{N}(x_0 \mid 0, 1) \quad , \quad y_0 \sim \mathcal{N}(y_0 \mid 0, 1) \quad .
\label{eq:ls:ex2d:S0}
\end{equation}
A sketch of the corner and its median distance field $S(x, y; 0, 0)$ are given in Fig.~\ref{fig:ls:ex2d}.

\begin{figure}[ht]
\centering
\raisebox{-0.5\height}{\includegraphics[width=0.325\linewidth]{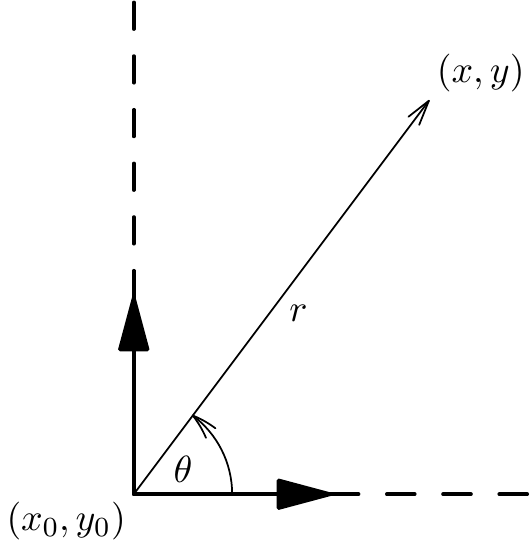}}
~
\raisebox{-0.5\height}{\includegraphics[width=0.625\linewidth]{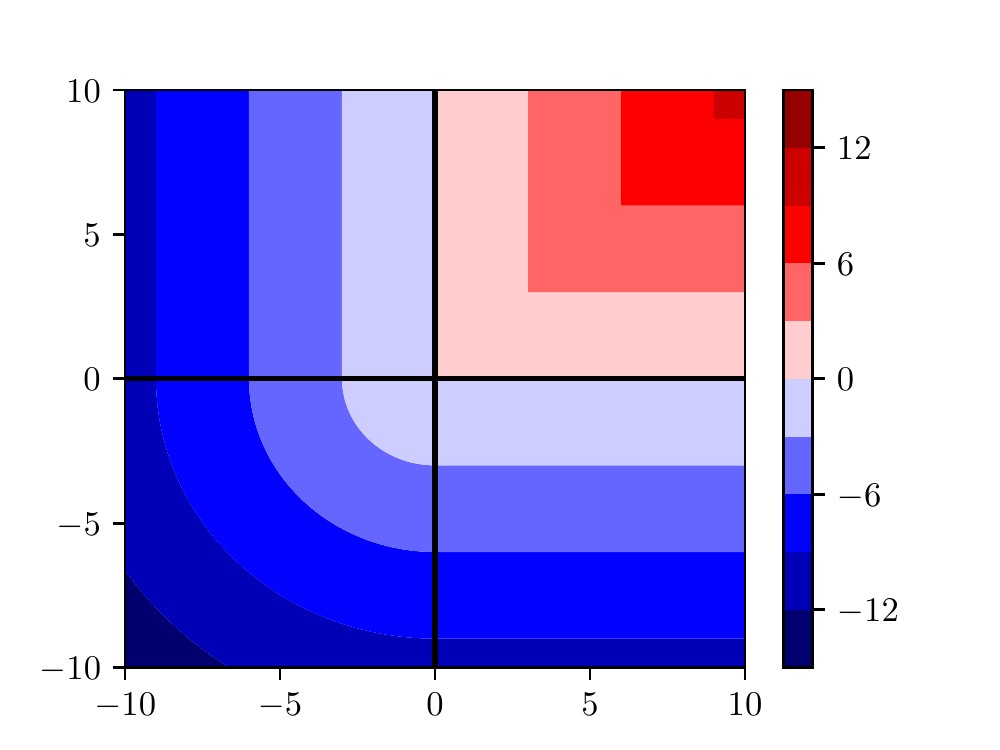}}
\caption{Sketch of a corner (left) and its median distance field $S(x, y; 0, 0)$ (right).
The walls of the corner (left, dashed line), which are represented by the zero-level set in analytic data assimilation, are parallel to the positive $x$ and $y$ axes and share their orientations (right).
The points with $0<\theta<\frac{\pi}{2}$ are defined to have positive distance.}
\label{fig:ls:ex2d}
\end{figure}

For analytic data assimilation, the predicted position~$x$ is again viewed as the level set of a predicted generating function.
It follows from Eq.~\eqref{eq:ls:ex2d:S} and \eqref{eq:ls:ex2d:S0} that the predicted generating function is given by
\begin{equation}
G(x, y) \sim \mathcal{N}(G(x, y) \mid \overline{G}(x, y), C_{GG}(x_1, y_1, x_2, y_2)) \quad ,
\end{equation}
where the mean $\overline{G}$ and the covariance function $C_{GG}$ are given by
\begin{align}
\overline{G}(x, y)
&= \mathbb{E}_{x_0, y_0}[S](x, y) \\
&= \int_{-\infty}^{\infty}{\int_{-\infty}^{\infty}{S(x, y; \xi, \eta)p(x_0=\xi, y_0=\eta)\,\ud\xi}\ud\eta} \quad ,
\label{eq:ls:ex2d:G_bar}\\
C_{GG}(x_1, y_1, x_2, y_2)
&= \mathbb{E}_{x_0, y_0}\left[\left(S-\mathbb{E}_{x_0, y_0}[S]\right)\times\left(S-\mathbb{E}_{x_0, y_0}[S]\right)\right](x_1, y_1, x_2, y_2) \\
&= \mathbb{E}_{x_0, y_0}\left[S \times S\right](x_1, y_1, x_2, y_2) - \mathbb{E}_{x_0, y_0}[S](x_1, y_1) \mathbb{E}_{x_0, y_0}[S](x_2, y_2) \\
&= \int_{-\infty}^{\infty}{\int_{-\infty}^{\infty}{S(x_1, y_1; \xi, \eta)S(x_2, y_2; \xi, \eta)p(x_0=\xi, y_0=\eta)\,\ud\xi}\ud\eta} \nonumber\\
&\quad - \mathbb{E}_{x_0, y_0}[S](x_1, y_1) \mathbb{E}_{x_0, y_0}[S](x_2, y_2) \quad ,
\label{eq:ls:ex2d:G_prime}
\end{align}
where~$\mathbb{E}_{x_0, y_0}$ denotes the expected value over the random variables~$x_0$ and~$y_0$, and the '$\times$'~symbol denotes the Cartesian product of two random variables.
The mean $\overline{G}$ and the covariance function $C_{GG}$ are computed from Eq.~\eqref{eq:ls:ex2d:G_bar} and \eqref{eq:ls:ex2d:G_prime} using Gauss-Hermite quadrature \cite{Szegoe1939}.
Alternatively, the mean and the variance function can be computed from the following one-dimensional integrals on the unit circle:
\begin{align}
\mathbb{E}_{x_0, y_0}[S](x, y)
&= \int_{\pi/4}^{\pi}{\cos(\theta)g(\theta)\,\ud{\theta}} + \int_{-\pi/2}^{\pi/4}{\sin(\theta)g(\theta)\,\ud{\theta}} - \int_{\pi}^{3\pi/2}{g(\theta)\,\ud{\theta}} \quad , \\
\mathbb{E}_{x_0, y_0}\left[S^2\right](x, y)
&= \int_{\pi/4}^{\pi}{\cos(\theta)^2h(\theta)\,\ud{\theta}} + \int_{-\pi/2}^{\pi/4}{\sin(\theta)^2h(\theta)\,\ud{\theta}} + \int_{\pi}^{3\pi/2}{h(\theta)\,\ud{\theta}} \quad .
\end{align}
The auxiliary functions $g$ and $h$ are given by
\begin{equation}
g(\theta) = \varphi(w) \left[\left(v^2 + 1\right) \Phi(v) + v \varphi(v)\right] \quad , \quad h(\theta) = \varphi(w) \left[\left(v^2 + 2\right) \varphi(v) + \left(3v + v^3\right) \Phi(v)\right] \quad ,
\end{equation}
where $\varphi$ and $\Phi$ denote the probability density function and cumulative distribution function of the normal distribution.
%\begin{equation*}
%\varphi(x) = \frac{1}{\sqrt{2\pi}}\exp\left(-\frac{x^2}{2}\right) \quad , \quad
%\Phi(x) = \int_{-\infty}^{x}{\varphi(z)\,\ud{z}} \quad ,
%\end{equation*}
The natural coordinates $(v, w)$ are given by
\begin{equation}
v = x\cos(\theta) + y\sin(\theta) \quad , \quad
w = x\sin(\theta) - y\cos(\theta) \quad .
\end{equation}
The mean and the standard deviation of the predicted generating function are shown in Fig.~\ref{fig:ls:ex2d:prior}.
Artefacts from the interaction between the nonlinearity of distance fields and the assumption of normal distributions are visible:
The mean of the predicted generating function~(Fig.~\ref{fig:ls:ex2d:prior}, left) is rounded in the corner whereas distance fields are supposed to have right angles~(Fig.~\ref{fig:ls:ex2d}).
The standard deviation of the predicted generating function~(Fig.~\ref{fig:ls:ex2d:prior}, right) assumes non-unity values along~$\theta=45^\circ$ although the coordinates of the corner are independent and normally distributed~(Eq.~\eqref{eq:ls:ex2d:S0}).
The effects of the artefacts in the predicted generating function are now examined in various observation scenarios.

\begin{figure}[ht]
\centering
\includegraphics[width=0.45\linewidth]{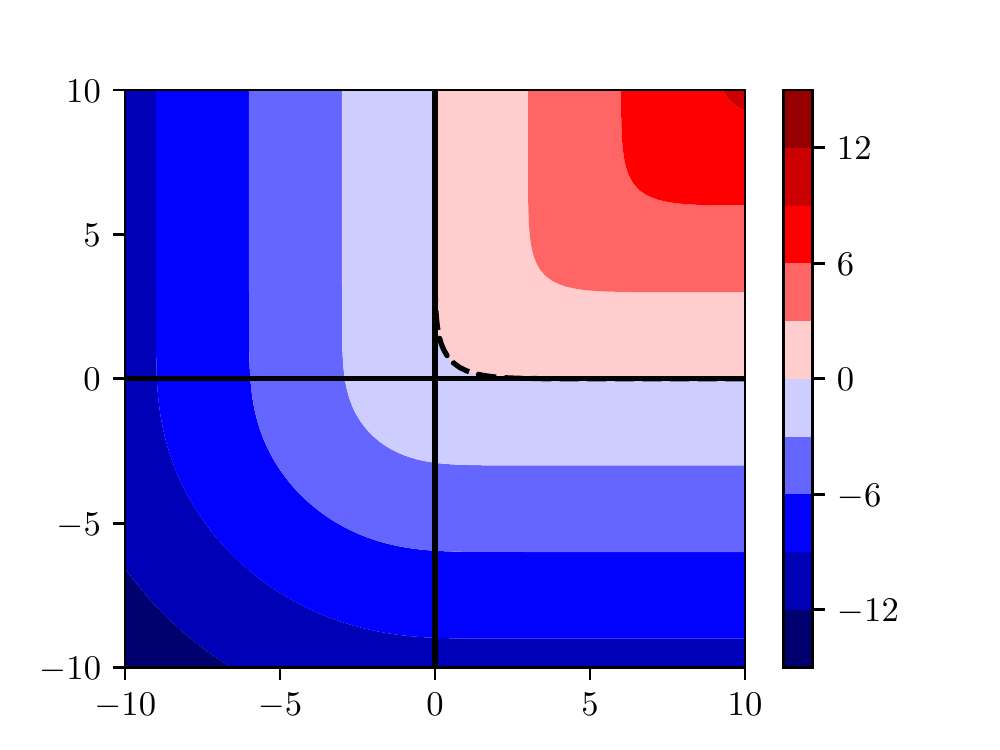}
~
\includegraphics[width=0.45\linewidth]{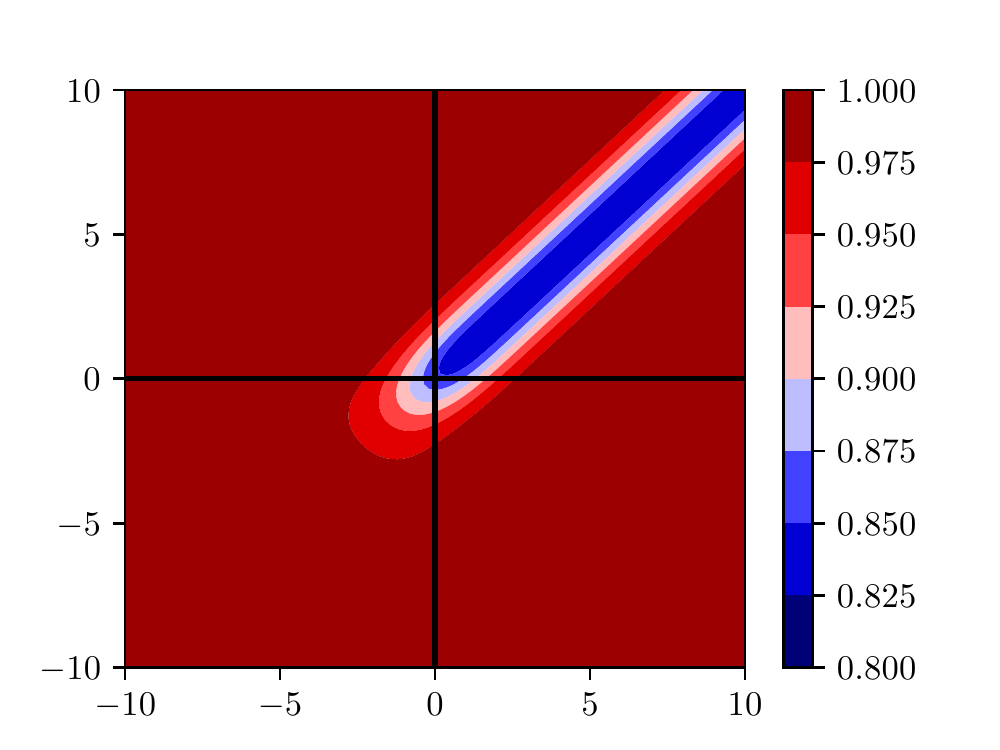}
\caption{Mean (left) and standard deviation (right) of predicted generating function.
The zero-level set in the mean is marked by a dashed line.
Away from $\theta=45^\circ$, the mean of the generating function comes close to its median (Fig.~\ref{fig:ls:ex2d}, right).
Note that the distance field $S$ (Eq.~\eqref{eq:ls:ex2d:S}) is smooth but not differentiable on $\theta=45^\circ$.
This explains why the right angles observed in the median are not preserved but are rounded in the mean.
At the origin, the right angle is thus offset by $\frac{1}{2\sqrt{\pi}}+\frac{1}{4}\sqrt{\frac{\pi}{2}} \approx 0.5954$.
Away from the origin, the right angles are asymptotically offset by $\frac{1}{\sqrt{\pi}} \approx 0.5642$.
The same can be observed in the standard deviation.
Away from $\theta=45^\circ$, the standard deviation is close to unity, which is identical to the standard deviation in the coordinates of the corner (Eq.~\eqref{eq:ls:ex2d:S0}).
At the origin, the standard deviation assumes a value of $\sqrt{\frac{5}{4} - \frac{3}{4\pi} - \frac{\pi}{32} - \frac{1}{4\sqrt{2}}} \approx 0.8581$.
Away from the origin, the standard deviation asymptotically assumes a value of $\sqrt{1-\frac{1}{\pi}} \approx 0.8256$.}
\label{fig:ls:ex2d:prior}
\end{figure}

Unlike the one-dimensional example, where one point fully characterizes an interface, observations in two (and higher) dimensions vary in resolution, ranging from the position of one point on the interface to a full sampling of the interface.
In Fig.~\ref{fig:ls:ex2d:post:1}, the influence of one observation depending on its location is investigated.
It is evident that one observation is insufficient to represent a corner.
Nevertheless, analytic data assimilation gives qualitatively meaningful results in that the observation shifts the vertical and horizontal edges proportionally depending on the location of the observation.
This is fundamentally different from geometric data assimilation~(Fig.~\ref{fig:ls:ex1d}, left), where every point on the interface is required to correspond to an observation.
This makes the presented (analytic) level-set data assimilation framework fully front-capturing (rather than front-tracking), both with respect to the description of the motion of the interface and the assimilation of data.
As the number of observation points increases, the right angle in the corner is retrieved~(Fig.~\ref{fig:ls:ex2d:post:n}).

\begin{figure}[ht]
\centering
\begin{subfigure}[t]{0.2\textwidth}
\includegraphics[height=\textwidth]{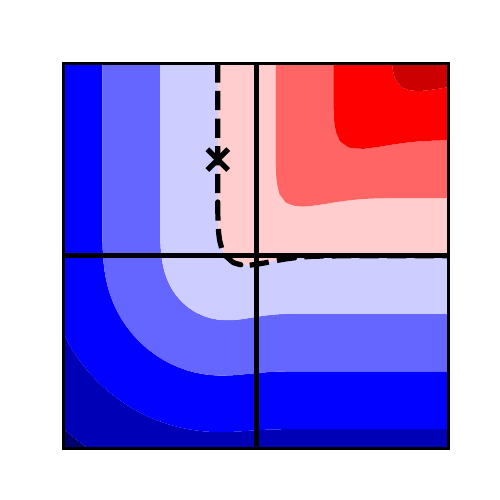}
\caption{$(-2, 5)$}
\end{subfigure}
\begin{subfigure}[t]{0.2\textwidth}
\includegraphics[height=\textwidth]{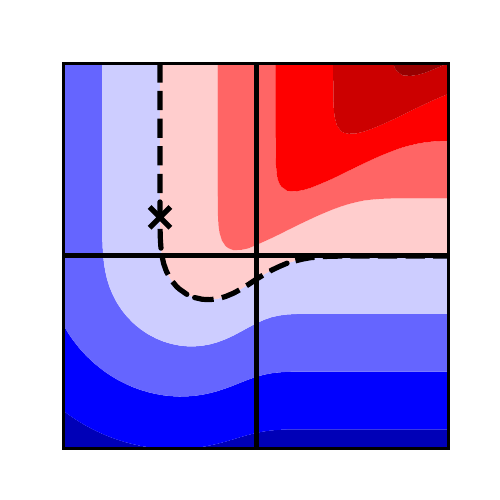}
\caption{$(-5, 2)$}
\end{subfigure}
\begin{subfigure}[t]{0.2\textwidth}
\includegraphics[height=\textwidth]{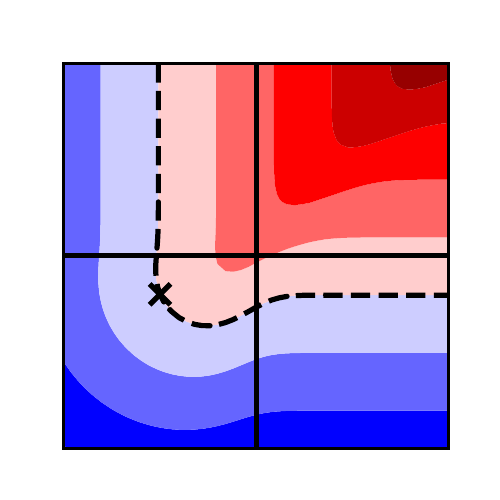}
\caption{$(-5, -2)$}
\end{subfigure}
\begin{subfigure}[t]{0.2\textwidth}
\includegraphics[height=\textwidth]{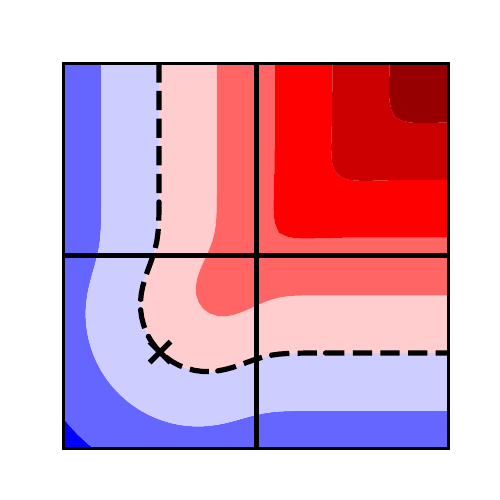}
\caption{$(-5, -5)$}
\end{subfigure}
\begin{subfigure}[t]{0.1\textwidth}
\raisebox{-0.01\linewidth}{\includegraphics[height=2\textwidth]{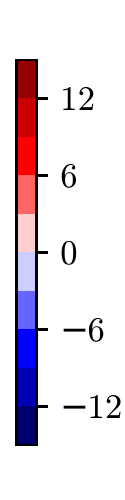}}
\end{subfigure}
\caption{
Means of filtered generating functions for one observation of the interface at various locations.
The observed location is marked by a cross.
Its coordinates are specified in each caption.
The zero-level set is marked by a dashed line.
Without loss of generality, the observation error is set to zero.
(a)~The observation is located significantly closer to the vertical than the horizontal edge of the predicted generating function~(Fig.~\ref{fig:ls:ex2d:prior}, left).
Hence, the observation is primarily associated with points on the vertical edge.
Data assimilation thus mainly shifts the points on the vertical edge while the vertex and the horizontal edge remain largely unaffected.
(b) As the location of the observation moves downwards, the observation gets more strongly associated with the vertex of the corner while the horizontal edge remains largely unaffected.
(c) As the location of the observation crosses the $x$-axis, data assimilation also shifts the points on the horizontal edge.
(d) As the location of the observation reaches the diagonal, data assimilation equally affects the vertical and horizontal axes of the corner.
}
\label{fig:ls:ex2d:post:1}
\end{figure}

\begin{figure}[ht]
\centering
\begin{subfigure}[t]{0.2\textwidth}
\includegraphics[height=\textwidth]{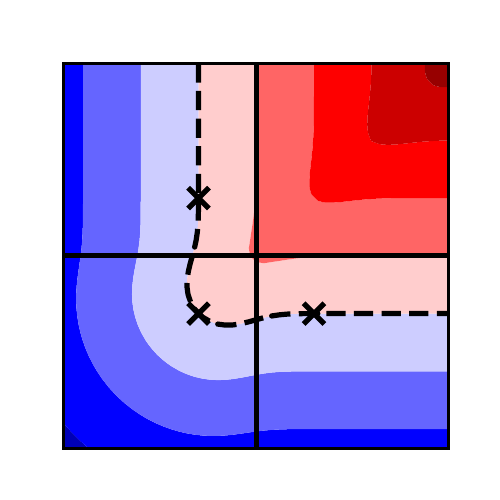}
\caption{3 observations.}
\end{subfigure}
\begin{subfigure}[t]{0.2\textwidth}
\includegraphics[height=\textwidth]{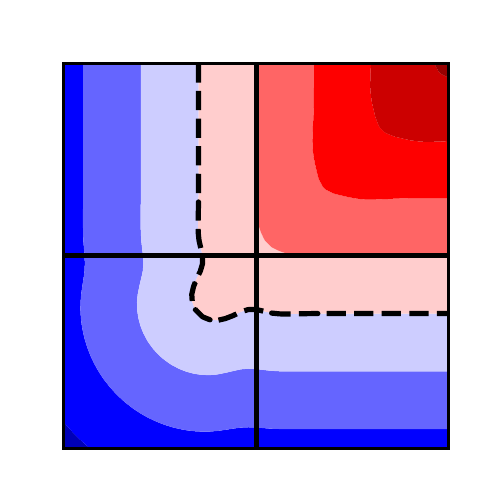}
\caption{7 observations.}
\label{fig:ls:ex2d:post:n:7}
\end{subfigure}
\begin{subfigure}[t]{0.2\textwidth}
\includegraphics[height=\textwidth]{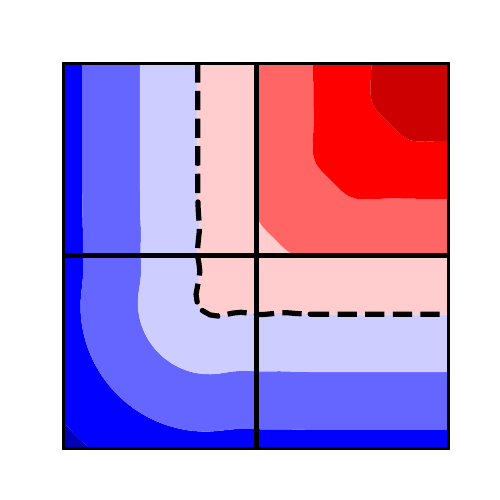}
\caption{15 observations.}
\end{subfigure}
\begin{subfigure}[t]{0.2\textwidth}
\includegraphics[height=\textwidth]{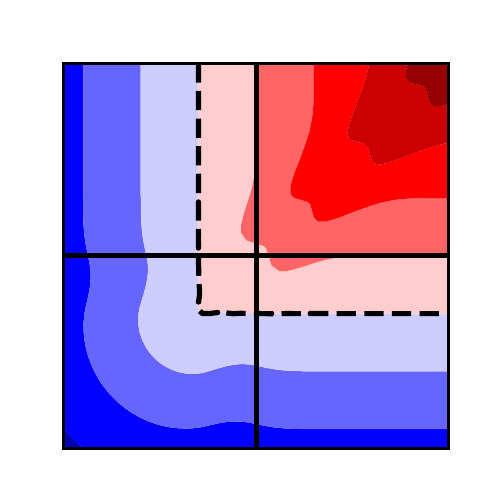}
\caption{31 observations.}
\end{subfigure}
\begin{subfigure}[t]{0.1\textwidth}
\raisebox{-0.01\linewidth}{\includegraphics[height=2\textwidth]{ex2d-cbar.pdf}}
\end{subfigure}
\caption{Means of filtered generating functions for multiple observations of the interface at various resolutions.
The observed locations are marked by crosses and the zero-level sets by dashed lines.
The observations are equidistantly spaced between $(-3, 3)$ and $(3, -3)$ via $(-3, -3)$.
The number of observations is specified in each caption.
For better readability, the observations are not marked in (b)-(d).
Without loss of generality, the observation error is set to zero.
From (a) to (d), as the number of the observation increases, the observations give a clearer image of the observed corner.
Hence, the zero-level set of the the filtered generating function more closely resembles a corner.}
\label{fig:ls:ex2d:post:n}
\end{figure}

It is worth pointing out the probabilistic nature of the presented level-set data assimilation framework.
Similar to soft clustering \cite{MacKay2003}, the points on the predicted interface are not associated with distinct observations but with all observations to varying degrees.
If we consider $\psi^a$ to be a function of $\psi^f$, $C_{\psi\psi}^f$, $y$, $C_{\epsilon\epsilon}$ and $M$ (Eq.~\eqref{eq:da:kalman:mean}), the sensitivity to data is given by
\begin{equation}
\pdiff{\psi^a}{y} = \left(MC_{\psi\psi}^f\right)^T\left[C_{\epsilon\epsilon}+MC_{\psi\psi}^fM^T\right]^{-1} \quad ,
\label{eq:ls:ex2d:post:modes}
\end{equation}
where the columns represent the sensitivities to individual observations.
The shift in the zero-level set (Fig.~\ref{fig:ls:ex2d:post:n}) is the superposition of the contributions due to the individual observations.
For example, while the points on the vertical edge far away from the vertex are mainly affected by the uppermost observation (Fig.~\ref{fig:ls:ex2d:post:modes:1}), the other observations have a vanishing effect on them as well (Fig.~\ref{fig:ls:ex2d:post:modes}\subref{fig:ls:ex2d:post:modes:2}-\subref{fig:ls:ex2d:post:modes:4}).
Nevertheless, the shape deformations are more localized compared to Fig.~\ref{fig:ls:ex2d:post:1}.
While Eq.~\eqref{eq:ls:ex2d:post:modes} does not formally depend on the observations~$y$, it includes information about the number and the locations of the observations signified by the measurement operator~$M$.
Note that each observation of interest leaves the values of the generating function at the other observation locations intact.

\begin{figure}[ht]
\centering
\begin{subfigure}[t]{0.2\textwidth}
\includegraphics[height=\textwidth]{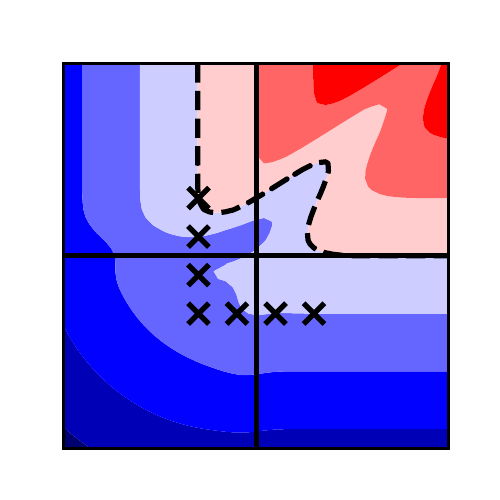}
\caption{$(-3, 3)$}
\label{fig:ls:ex2d:post:modes:1}
\end{subfigure}
\begin{subfigure}[t]{0.2\textwidth}
\includegraphics[height=\textwidth]{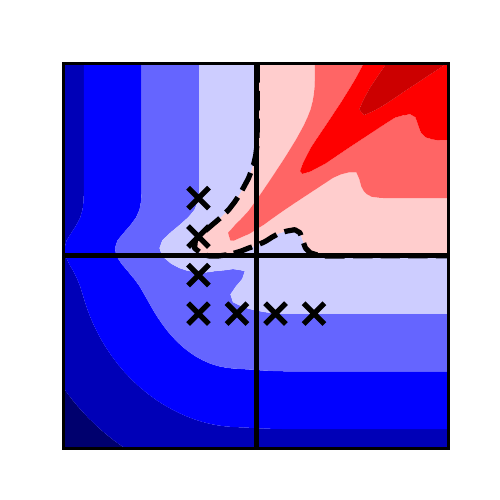}
\caption{$(-3, 1)$}
\label{fig:ls:ex2d:post:modes:2}
\end{subfigure}
\begin{subfigure}[t]{0.2\textwidth}
\includegraphics[height=\textwidth]{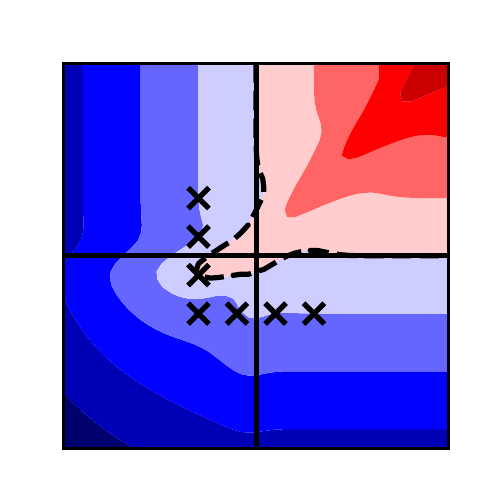}
\caption{$(-3, -1)$}
\label{fig:ls:ex2d:post:modes:3}
\end{subfigure}
\begin{subfigure}[t]{0.2\textwidth}
\includegraphics[height=\textwidth]{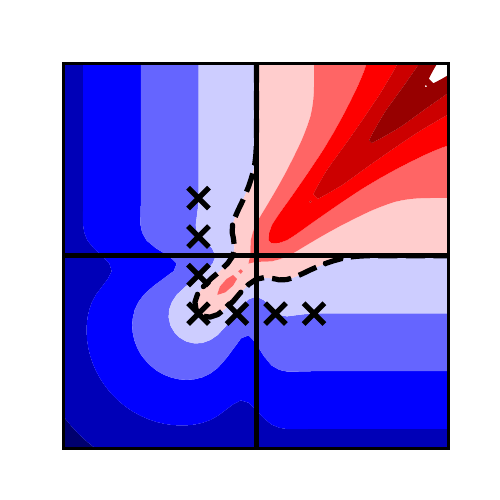}
\caption{$(-3, -3)$}
\label{fig:ls:ex2d:post:modes:4}
\end{subfigure}
\begin{subfigure}[t]{0.1\textwidth}
\raisebox{-0.01\linewidth}{\includegraphics[height=2\textwidth]{ex2d-cbar.pdf}}
\end{subfigure}
\caption{Means of filtered generating functions due to individual observations.
The observed locations are marked by crosses and the zero-level sets by dashed lines.
The observations are equidistantly spaced between $(-3, 3)$ and $(3, -3)$ via $(-3, -3)$.
The location of the observation of interest is specified in each caption.
Without loss of generality, the observation error is set to zero.
%The shape deformations are depicted for observations on the vertical edge.
%The shape deformations for observations on the horizontal edge are their mirror images.
}
\label{fig:ls:ex2d:post:modes}
\end{figure}

In summary, analytic data assimilation is applied to a prototypical two-dimensional shape, a corner.
Within the level-set data assimilation framework, the differences between distance fields, predicted generating functions and filtered generating functions are illustrated.
In particular, the effects of the number of observation points and their locations are studied.
In the following section, the level-set data assimilation framework is applied to a more realistic experiment.
In general, the predicted generating function is taken as a composition of straight sections, which exhibit the behavior discussed for the one-dimensional test case, and strongly curved sections insufficiently resolved by the observations, which exhibit the behavior discussed for the two-dimensional test case.
The theoretical insights from this section will be referenced throughout the next section.

\FloatBarrier

%------------------------------------------------------------------------------

%\subsection{Filter divergence and model errors}
%
%\begin{itemize}
%\item For large number of observations, matrix inversion in the computation of the Kalman gain may become computationally expensive.
%A subspace inversion scheme may introduce sampling errors, even into the square-root filter.
%\item Local analysis.
%Explore how the signed distance function in the farfield correlates to its value on the level set.
%\end{itemize}
%
%Reduced-order Kalman filter is an extreme case of local analysis \cite{Meldi2017}.
%
%Read more about viscosity solutions.
%
%Explore the structure of the covariance matrix using adjoint sensitivity analysis with respect to the boundary conditions.
%
%What do you do if the number of observations exceeds the size of the ensemble?

%------------------------------------------------------------------------------

%\section{Test cases}
%
%Sampling error can result in spurious correlations.
%Compare perturbed observations to square-root filter.
%
%Spend more time on showing data assimilation of topology (vs shape)?
%
%Not eikonal field, but curvature dependent speed?

%------------------------------------------------------------------------------

\section{Twin experiments for a premixed flame inside a duct}
\label{sec:flame}

%Few applications of data assimilation in supersonic or reacting flow:
%\begin{itemize}
%\item Ensemble Kalman filter in one-dimensional convection-diffusion-reaction problem and two-dimensional flame propagation \cite{Gao2017}.
%Estimation of only one parameter.
%'Perfect' observations synthesized from exact solutions and identical CFD models.
%Study of observation frequency and error as well as spatial distribution.
%Two-dimensional flame propagation grossly oversimplified: progress variable convected as passive scalar, velocity field and fuel mass fraction prescribed, estimating one flame speed parameter.
%Naive data assimilation based on zero observation error.
%\item Reduced-order Kalman filter in premixed flame and wildfire propagation \cite{Rochoux2013}.
%Level-set method based on progress variable.
%Innovation vector formed from distances of observed isocontour points projected onto background isocontour along normal direction of observed contour.
%Not consistent with definition of Eulerian framework.
%\item Deflagration-to-detonation transition inside shock-focusing geometry for pulse detonation combustor \cite{Lemke2016}.
%Pressure evolution inside detonation tube measured at various locations over time.
%Arrhenius and diffusion parameters are estimated.
%Requires mollification of discontinuities in the objective functional.
%\end{itemize}

% Introduction.
Thermoacoustic instabilities are a persistent challenge in the design of jet and rocket engines:
Velocity and pressure oscillations inside the combustion chamber interact with the flame and cause an unsteady heat release rate.
If moments of higher heat release rate coincide with moments of higher pressure (and lower heat release rate with lower pressure), acoustic oscillations arise.
This can lead to large-amplitude oscillations causing structural damage in the jet or rocket engine~\cite{Lieuwen2005, Culick2006}.

The so-called $G$-equation model is a reduced-order model to study the flame dynamics which lead to heat release rate perturbations~\cite{Fleifil1996, Dowling1999}.
The premixed flame is modeled as an interface captured by a level set.
The velocity of a point in the level set is the sum of the flame speed, which is normal to the interface and points towards the unburnt gas, and the underlying flow field.
The underlying flow field includes both hydrodynamic and acoustic contributions.
The $G$-equation model is used to compute flame transfer functions~(FTF) or flame describing functions~(FDF), which give the heat release perturbation to a given velocity perturbation.
Coupled with linear acoustics models, the $G$-equation model has been very successful in qualitatively characterizing the linear and nonlinear dynamics of self-excited thermoacoustic oscillations~\cite{Kashinath2014, Waugh2014, Orchini2016a, Semlitsch2017}.

We demonstrate our level-set data assimilation framework by performing twin experiments for the $G$-equation model applied to a ducted premixed Bunsen flame under acoustic forcing.
In our twin experiments, both the model predictions and the observations come from $G$-equation simulations, but with different sets of parameters~\cite[Chapter~9]{Reich2015}.
This leads to uncertainties in the parameters and, as a result, in the states.
In the absence of uncertainties in the model, the twin experiment is an important benchmark in the quantitative assessment of a data assimilation framework because it becomes possible to compare filtered and smoothed solutions to a reference solution at all times~\cite[Chapter~8]{Reich2015}.
An example for the application of our level-set data assimilation framework to more realistic observations from a direct numerical simulation can be found in the proceedings of the 2018 CTR Summer Program~\cite{Yu2018}.

\FloatBarrier

%------------------------------------------------------------------------------

\subsection{Reference solution}
\label{sec:res:ref}

% Governing equations.
The $G$-equation is given by~\cite[Chapter~2]{Peters2000}
\begin{equation}
\pdiff{G}{t} + \left[u(r, t)-s_\mathrm{L}n\right] \cdot \nabla G = 0 \quad ,
\label{eq:res:est:g}
\end{equation}
where the underlying flow field is denoted by~$u(r, t)$ and the laminar flame speed by~$s_\mathrm{L}$.
For $|\nabla G| = 1$, the $G$-equation is formally equivalent to the Hamilton-Jacobi equation~(Eq.~\ref{eq:ls:hj:g}).
The underlying flow field~$u(r, t)$ is a superposition of the steady, inviscid flow field~$U$ and the time-dependent perturbation velocity field~$u'(r, t)$.
In the reduced-order model of~$u'(r, t)$, the perturbation has an amplitude~$\varepsilon U$, and travels downstream at the phase speed~$U/K$, where~$K$ and~$\varepsilon$ are model parameters~\cite{Kashinath2013}.
In the axisymmetric case, the cylindrical components of the perturbation velocity field are given by
\begin{equation}
u'(r, t)
= \begin{pmatrix} u_\rho' \\ u_\theta' \\ u_z' \end{pmatrix}
= \begin{pmatrix}
\frac{1}{2} k\rho\varepsilon U \sin(kz - \omega t) \\
0 \\
\varepsilon U \cos(kz - \omega t)
\end{pmatrix} \quad ,
\end{equation}
where the wavenumber~$k$ and the angular frequency~$\omega$ satisfy the dispersion relation $\omega(k) = Uk/K$.
The perturbation velocity field~$u'(r, t)$ is, mathematically speaking, divergence-free, and, physically speaking, satisfies the continuity equation.

% Computational set-up.
In Fig.~\ref{fig:res:burner}, a sketch of the ducted premixed flame is shown.
The ducted premixed flame is forced at an angular frequency of $277.78\,\mathrm{rad}/\mathrm{s}$.
Tab.~\ref{tab:res:param} shows the measurements of the burner and the tube.
The $G$-equation is numerically solved using the narrow-band level-set method with distance reinitialization~\cite{Peng1999, Sussman1994}:
The computational domain is discretized using a weighted essentially non-oscillatory (WENO) scheme in space and a total-variation diminishing (TVD) version of the Runge-Kutta scheme in time, which give up to fifth-order accuracy in space and third-order accuracy in time~\cite{Liu1994, Jiang1996, Shu1988}.
At the base of the flame, a rotating boundary condition is used~\cite{Waugh2013}.
The $G$-equation solver has been verified in a number of studies~\cite{Preetham2008, Kashinath2014, Waugh2014, Orchini2016, Semlitsch2017}.
These studies showed that the $G$-equation reproduces the dynamics of premixed flames only qualitatively, not quantitatively.
With our level-set data assimilation framework, significant improvements on quantitative predictions are achieved.
In axial and radial coordinates, a uniform $401 \times 401$ Cartesian grid is used, which corresponds to $30\,\mu\mathrm{m} \times 180\,\mu\mathrm{m}$ grid cells.
An advection Courant-Friedrich-Lewy (CFL) number of 0.02 is chosen, which corresponds to a timestep of 3.6\,$\mu$s or approximately 6,283 timesteps over one period.
%Cusps and pinched-off fuel-air pockets are reliably captured.
%The generating function is reconstructed from the solution to the $G$-equation using a fast-marching method~\cite{Sethian1996}.

\begin{figure}[ht]
\centering
\raisebox{-0.5\height}{\includegraphics[width=0.4\linewidth]{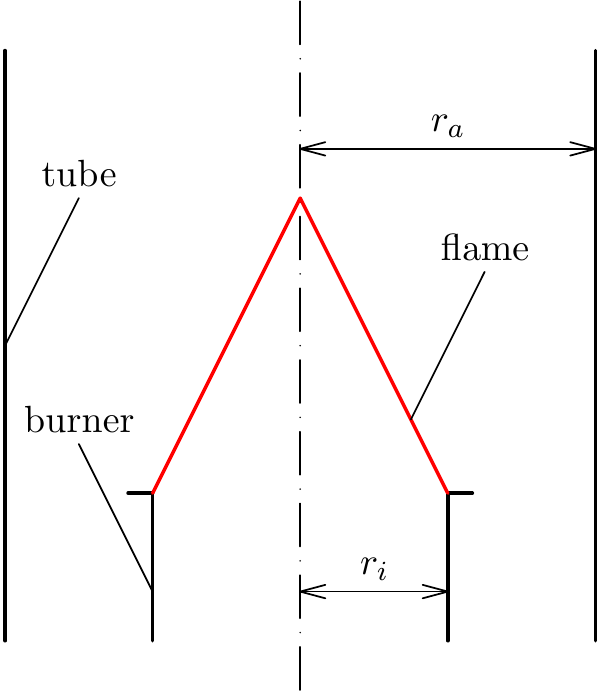}}
\qquad
\raisebox{-0.5\height}{\includegraphics[width=0.4\linewidth]{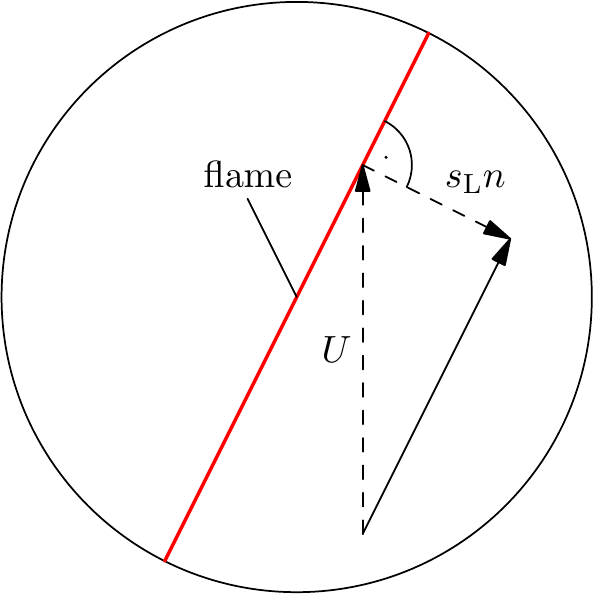}}
\caption{
Sketch of the ducted premixed flame without perturbations (left).
The superposition of the underlying flow field and the propagation of the premixed flame into the unburnt gas gives velocity vectors, which are locally tangential to the flame surface (right).
}
\label{fig:res:burner}
\end{figure}

\begin{table}
\caption{$G$-equation parameters for the reference simulation.}
\centering
\begin{tabular}{l|cccc|ccc}
Parameter & $r_i$ [m] & $r_a$ [m] & $U$ [m/s] & $s_\mathrm{L}$ [m/s] & $\omega$ [rad/s] & $K$ [-] & $\varepsilon$ [-] \\
\hline
Parameter value & 0.006 & 0.012 & 1.0 & 0.164 & 277.78 & 1.0 & 0.1
\end{tabular}
\label{tab:res:param}
\end{table}

% Results.
Snapshots of the reference solution to the $G$-equation for the ducted premixed flame are shown in Fig.~\ref{fig:res:g}.
The flame is attached to the burner lip, while the perturbations travel from the base of the flame to the tip.
When the perturbations are sufficiently large, a fuel-air pocket pinches off.
In our reduced-order model, the perturbations are mainly governed by two non-dimensional parameters~$K$ and~$\varepsilon$, which govern phase speed and amplitude respectively~\cite{Kashinath2013}.
In practice, neither parameter is accurately known a priori, which is a major source of uncertainty.

\begin{figure}
\centering
\includegraphics[width=0.225\linewidth]{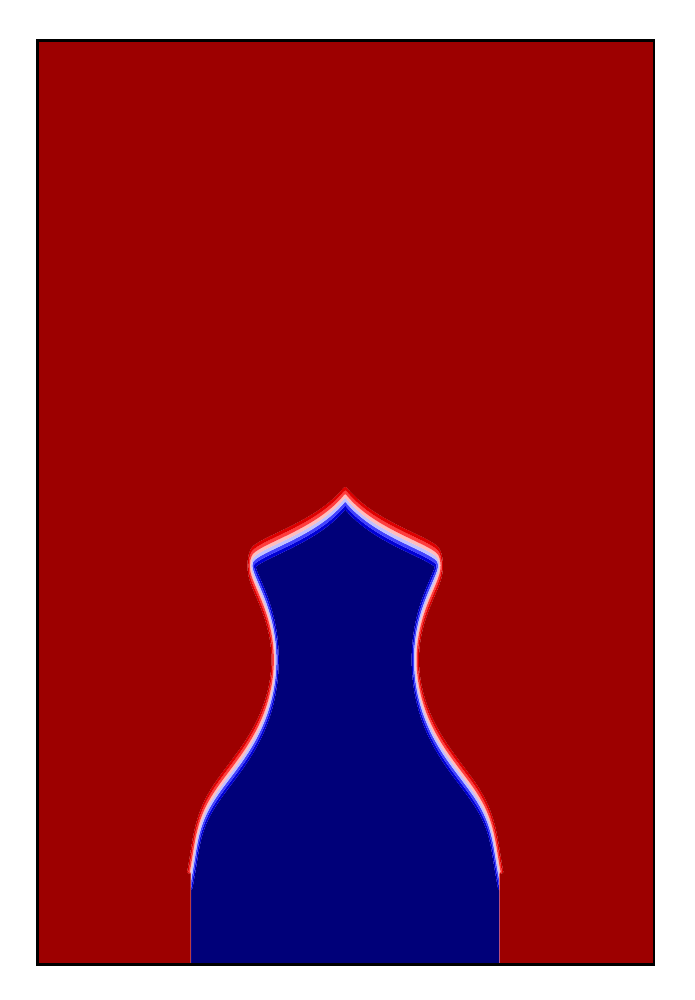}
\includegraphics[width=0.225\linewidth]{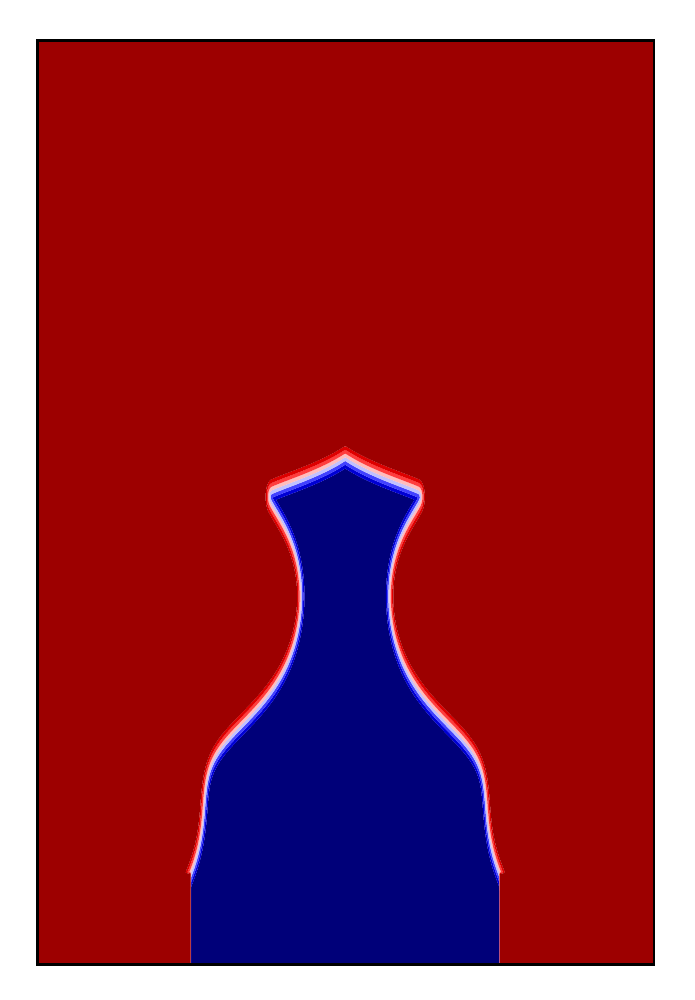}
\includegraphics[width=0.225\linewidth]{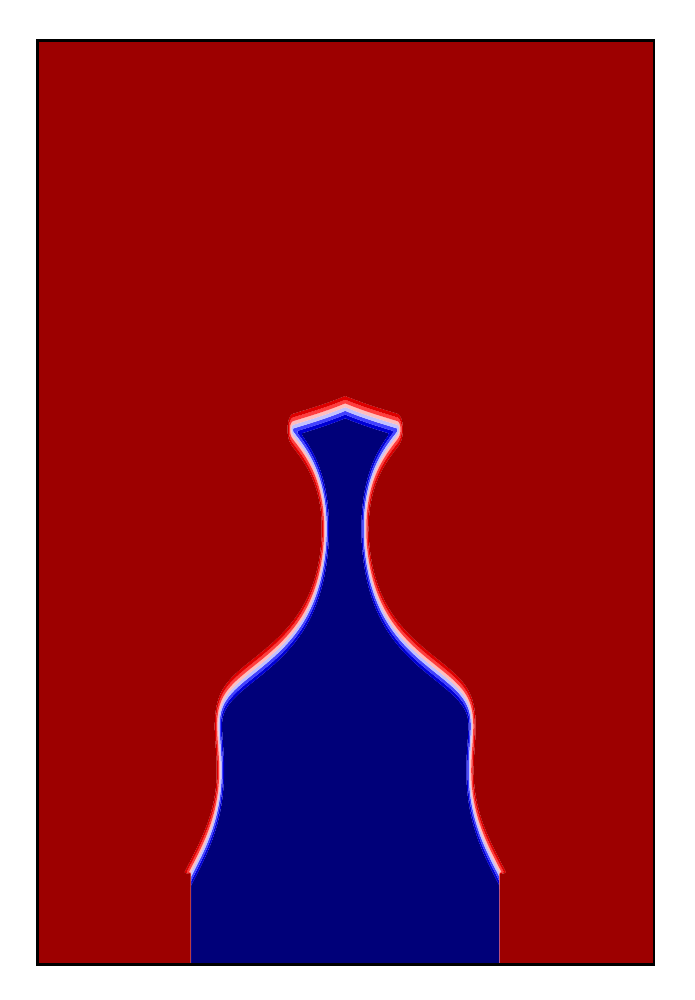}
\includegraphics[width=0.225\linewidth]{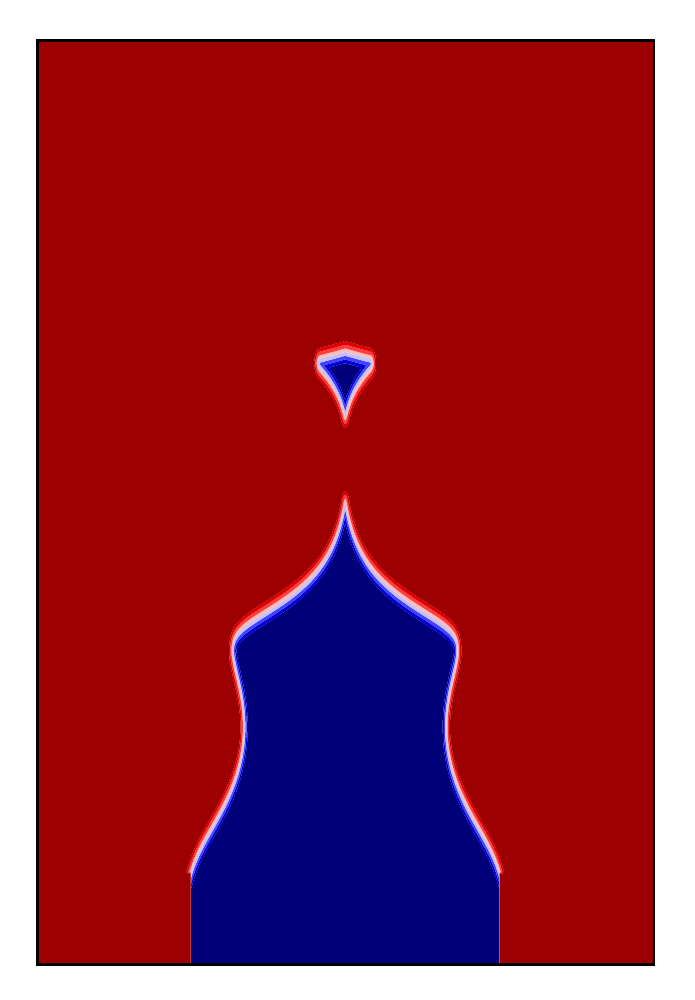}
\caption{
Snapshots of $G$-equation simulation over one period of harmonic forcing.
The fuel-air mixture leaves the burner at the bottom of each frame.
The infinitely thin flame surface separates the burnt gas (red) from the unburnt gas (blue).
Each snapshot from left to right is distanced by a quarter of the forcing period.
}
\label{fig:res:g}
\end{figure}

%------------------------------------------------------------------------------

\subsection{Combined state and parameter estimation}
\label{sec:res:est}

%In the following subsections, two types of problems are addressed:
%The forward problem quantifies the propagation from the input (the uncertainty in the parameters) to the output (the uncertainty in the location of the flame surface).
%In the inverse problem, data assimilation is used to identify the location of the flame surface, and reduce the uncertainty in the location and the parameters.
%In the first subsection, the forward problem is solved by treating the Hamilton-Jacobi equation as a stochastic process.
%In the second subsection, the inverse problem is solved by using synthetic data from the same $G$-equation solver.
%In the third subsection, the inverse problem is solved by using data from a direct numerical simulation of the premixed flame, which provides the reference solution.

Three variations of the twin experiment are performed:
\begin{description}
\item[Monte-Carlo experiment.]
An ensemble of 32 $G$-equation simulations is performed.
Each simulation has a different set of parameters $K$ and $\varepsilon$.
They are independently sampled from a normal distribution with a standard deviation of 20\,\%.
\item[State estimation.]
In addition to the procedure described for the Monte-Carlo experiment, the ensemble Kalman filter~(Theorem~\ref{thm:da:ensemble_kalman_filter}) is applied every 1,000 timesteps to update the discretized generating function~$G$ while leaving each set of parameters~$K$ and~$\varepsilon$ unaltered.
The observations are extracted from the reference solution~(Section~\ref{sec:res:ref}):
The measurement operator~$M$ is an indicator matrix which identifies all grid points in the reference solution adjacent to its zero-level set.
All aforementioned grid points are treated as observations of the flame surface.
As such, all entries in the vector~$y_k$ are set to 0.
The observation error~$\sigma_\epsilon$ is set to 60\,$\mu$m, which corresponds to two grid cells.
\item[Combined state and parameter estimation.]
In addition to the procedure described for the state estimation, the discretized generating function~$G$ is augmented by appending the parameters~$K$ and~$\varepsilon$ to the state vector~(Eq.~\ref{tab:da:state_space:state_augmentation}).
Thus, both the state and the parameters are updated whenever data in the form of observations from the reference solution is assimilated.
\end{description}
In each variation of the twin experiment, the $k$-th entry in the state vector~$\psi$ is marginally distributed according to
\begin{equation}
\psi[k] \sim \mathcal{N}(\psi[k] \mid \overline{\psi}[k], C_{\psi\psi}[k, k]) \quad .
\end{equation}
The mean~$\overline{\psi}[k]$ and the variance $C_{\psi\psi}[k, k]$ are computed from Eq.~\eqref{eq:da:ensemble}.
Explicitly, the likelihood for the flame surface to be found at the location of the $k$-th entry is given by
\begin{equation}
p[k] = \frac{1}{\sqrt{2\pi C_{\psi\psi}[k, k]}}\exp\left(-\frac{\overline{\psi}[k]^2}{2C_{\psi\psi}[k, k]}\right) \quad .
\label{eq:res:est:lik}
\end{equation}
Alternatively, the logarithm of the normalized likelihood is given by \cite{Rasmussen2006}
\begin{equation}
\log\left(\frac{p[k]}{p_0[k]}\right) = -\frac{\overline{\psi}[k]^2}{2C_{\psi\psi}[k, k]} \quad .
\label{eq:res:est:loglik}
\end{equation}

In Fig.~\ref{fig:res:uq:heat}, the logarithm of the normalized likelihood is shown for the three variations of the twin experiment.
Its zero-level set gives the maximum-likelihood location of the flame surface.
The more negative the value at a location is, the less likely the flame surface is to be found there.
In Fig.~\ref{fig:res:uq:heat:a}, the logarithm of the normalized likelihood is shown for the Monto-Carlo experiment.
As the perturbation travels from the base of the flame to the tip, the high-likelihood region for the location of the flame surface spreads out.
The high-likelihood region is largest when fuel-air pockets pinch off, which represents maximal uncertainty.
%Physically speaking, this means that the accurate detection of pinch-off events is the most challenging task.
In Fig.~\ref{fig:res:uq:heat:b}, the logarithm of the normalized likelihood is shown for the twin experiment with state estimation.
A qualitative comparison to Fig.~\ref{fig:res:uq:heat:a} shows that the high-likelihood region for the location of the flame surface is significantly tighter.
While the high-likelihood region still grows as the perturbation travels, the regular assimilation of data suppresses it to the vicinity of the observed flame surface.
In Fig.~\ref{fig:res:uq:heat:c}, the logarithm of the normalized likelihood is shown for the twin experiment with combined state and parameter estimation.
After few data assimilation cycles, the state-augmented ensemble Kalman filter has learnt the parameters~$K$ and~$\varepsilon$ almost exactly.
Knowledge of the parameters enables highly precise predictions of the location of the flame surface, even during pinch-off events.
Unlike in Fig.~\ref{fig:res:uq:heat:a} and \subref{fig:res:uq:heat:b}, no growth in the high-likelihood region is qualitatively discernable.

\begin{figure}
\begin{tabular}{p{0.03\linewidth}p{0.175\linewidth}p{0.175\linewidth}p{0.175\linewidth}p{0.175\linewidth}}
 & \centering $t=10,000$ & \centering $t=11,500$ & \centering $t=13,000$ & \centering $t=14,500$
\end{tabular}
\begin{subfigure}[t]{\textwidth}
\centering
\includegraphics[width=0.2\linewidth]{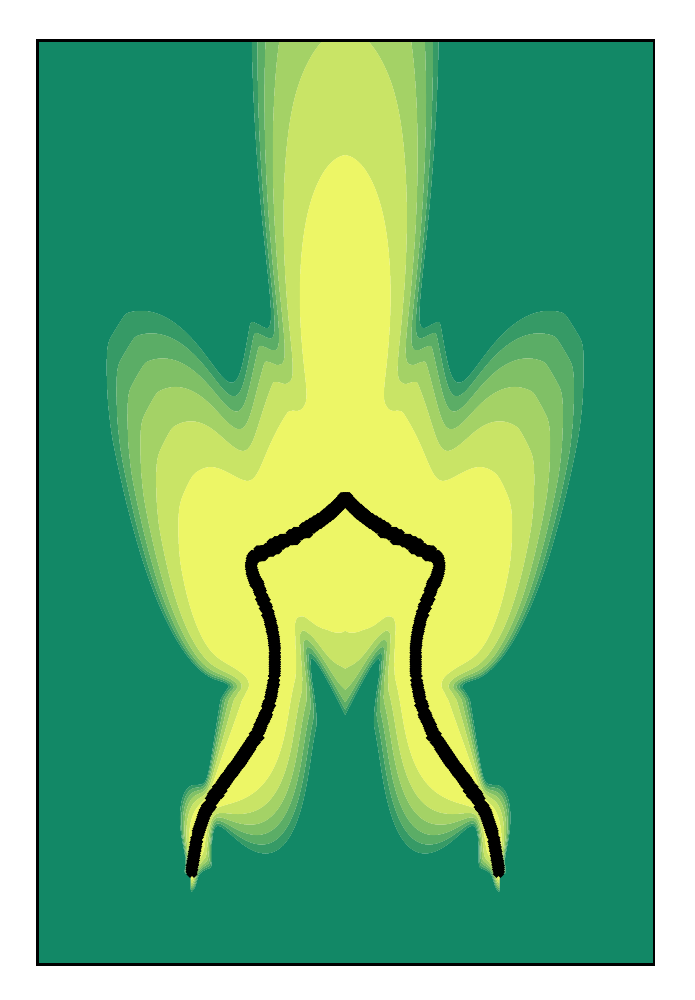}
\includegraphics[width=0.2\linewidth]{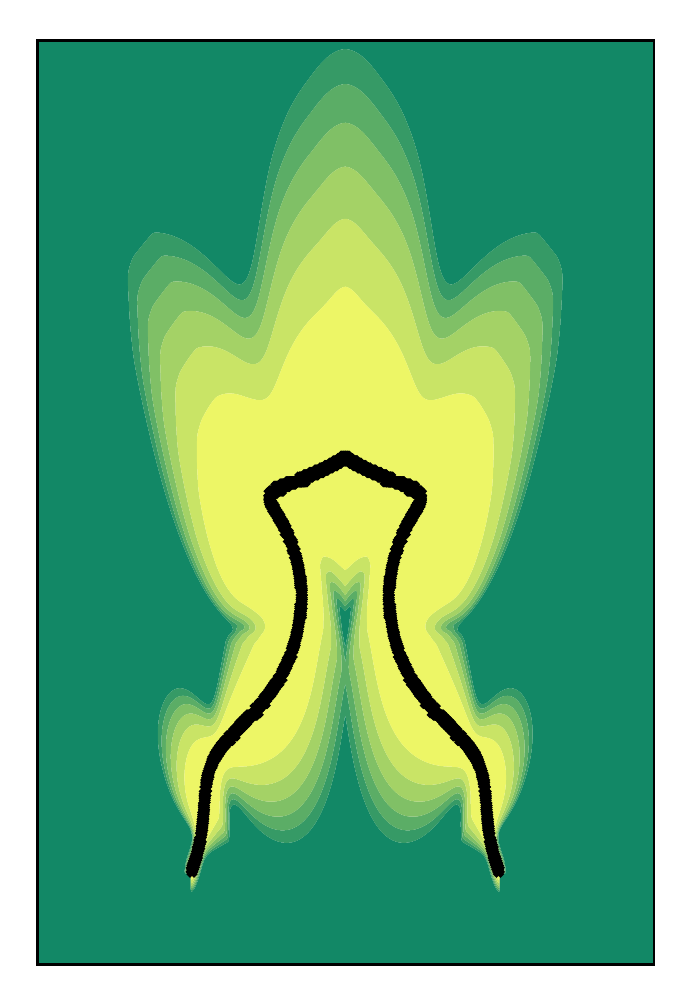}
\includegraphics[width=0.2\linewidth]{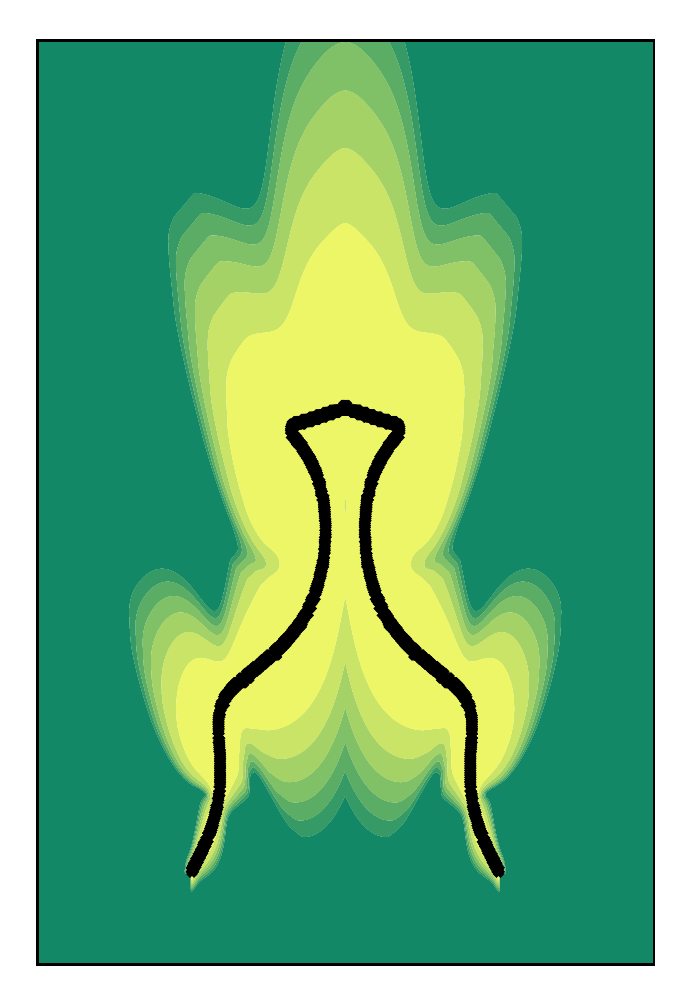}
\raisebox{-0.005\linewidth}{\includegraphics[width=0.27\linewidth]{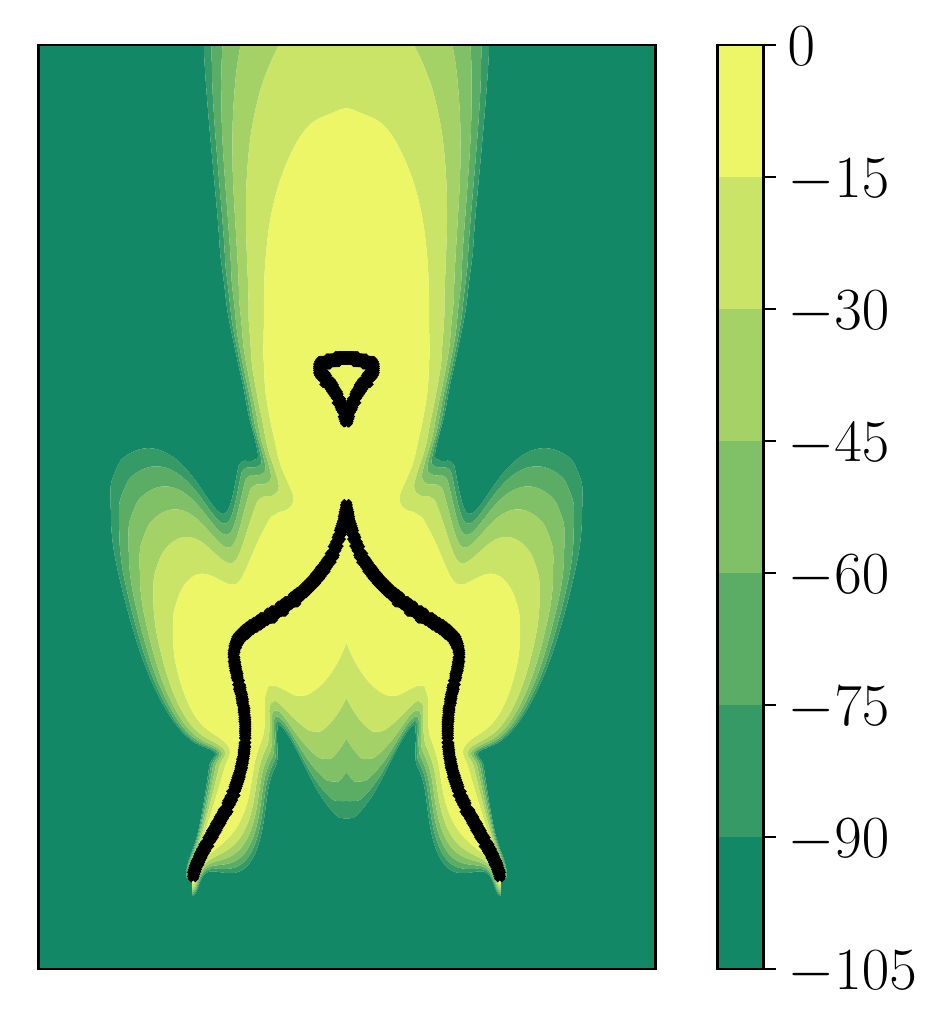}}
\caption{Monte-Carlo experiment.}
\label{fig:res:uq:heat:a}
\end{subfigure}

\begin{subfigure}[t]{\textwidth}
\centering
\includegraphics[width=0.2\linewidth]{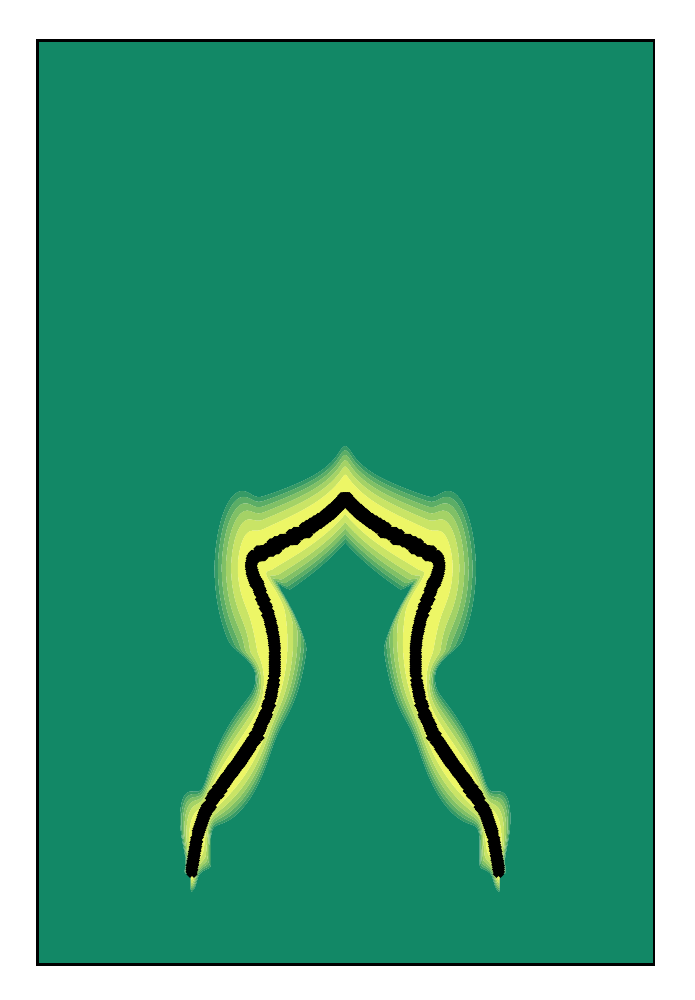}
\includegraphics[width=0.2\linewidth]{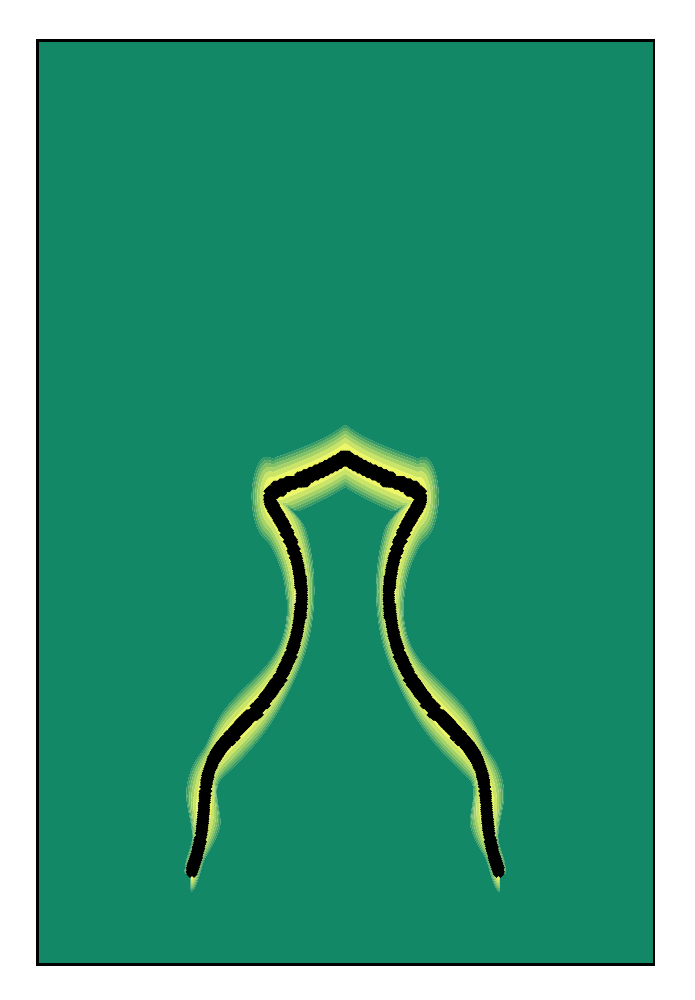}
\includegraphics[width=0.2\linewidth]{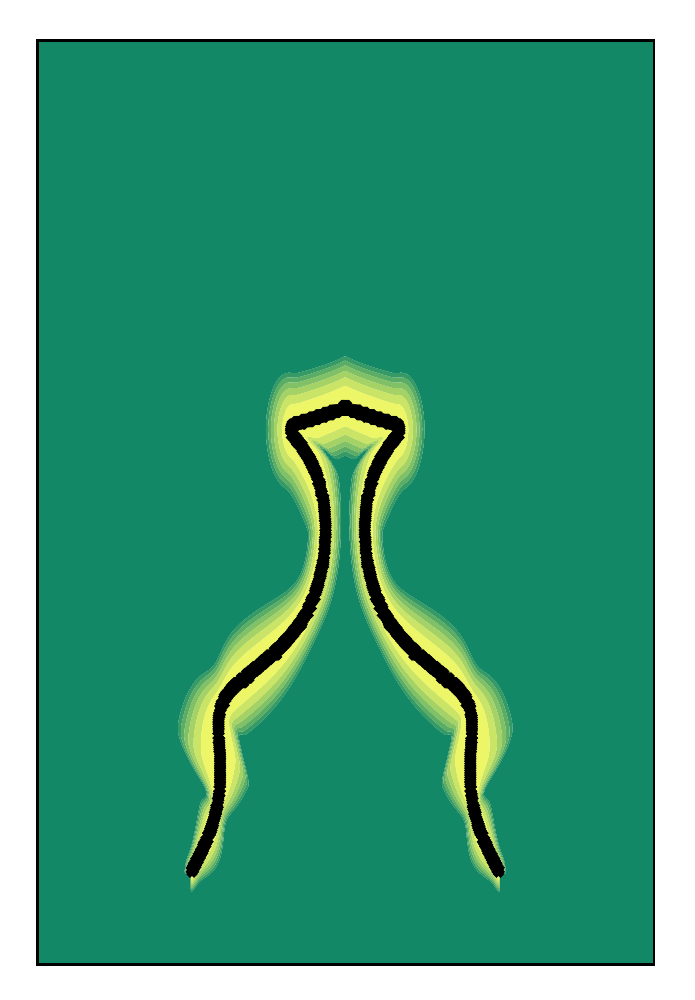}
\raisebox{-0.005\linewidth}{\includegraphics[width=0.27\linewidth]{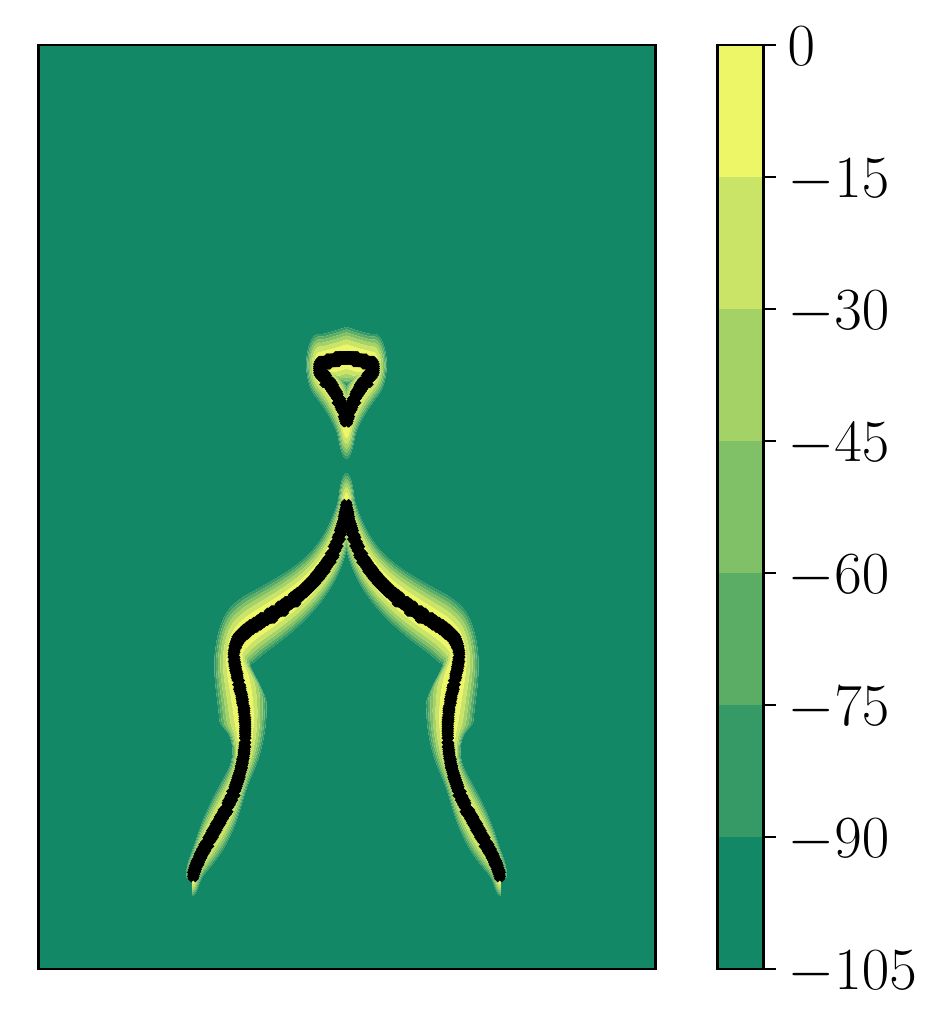}}
\caption{State estimation.}
\label{fig:res:uq:heat:b}
\end{subfigure}

\begin{subfigure}[t]{\textwidth}
\centering
\includegraphics[width=0.2\linewidth]{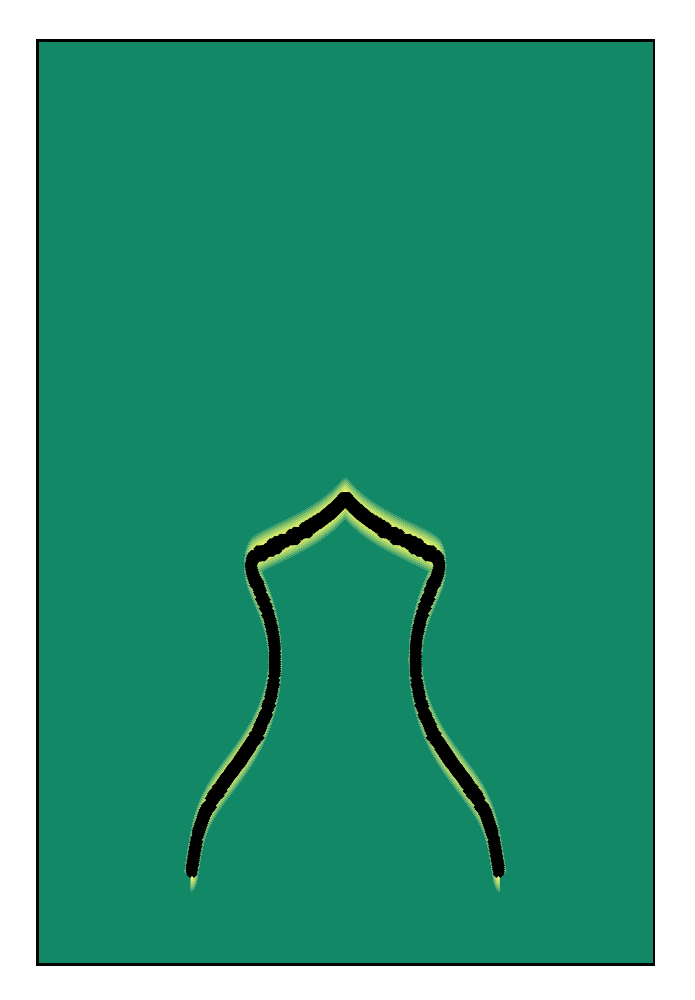}
\includegraphics[width=0.2\linewidth]{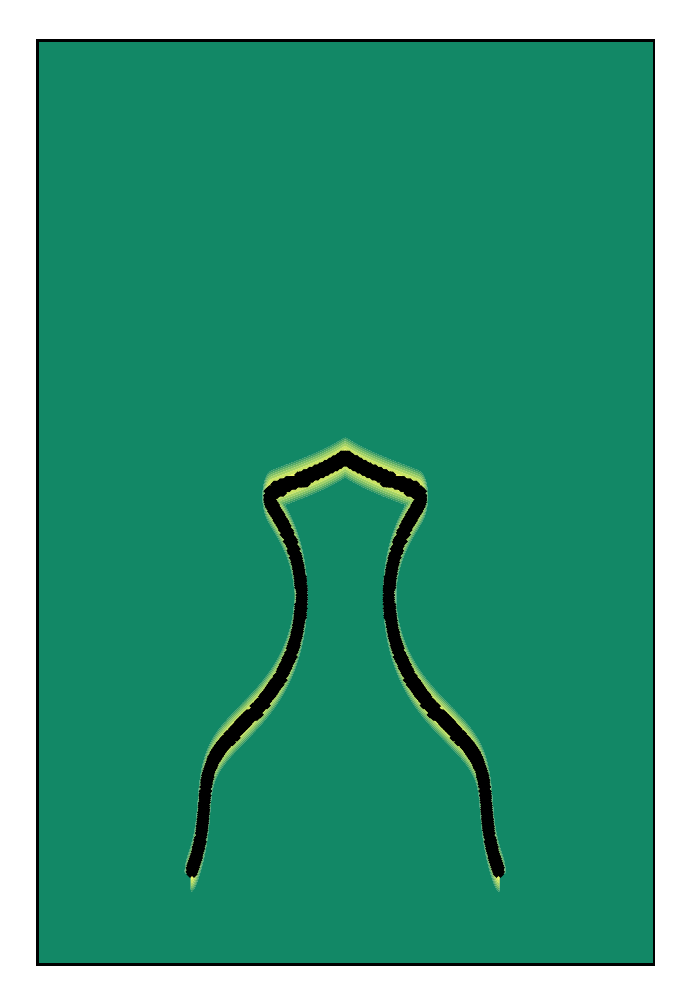}
\includegraphics[width=0.2\linewidth]{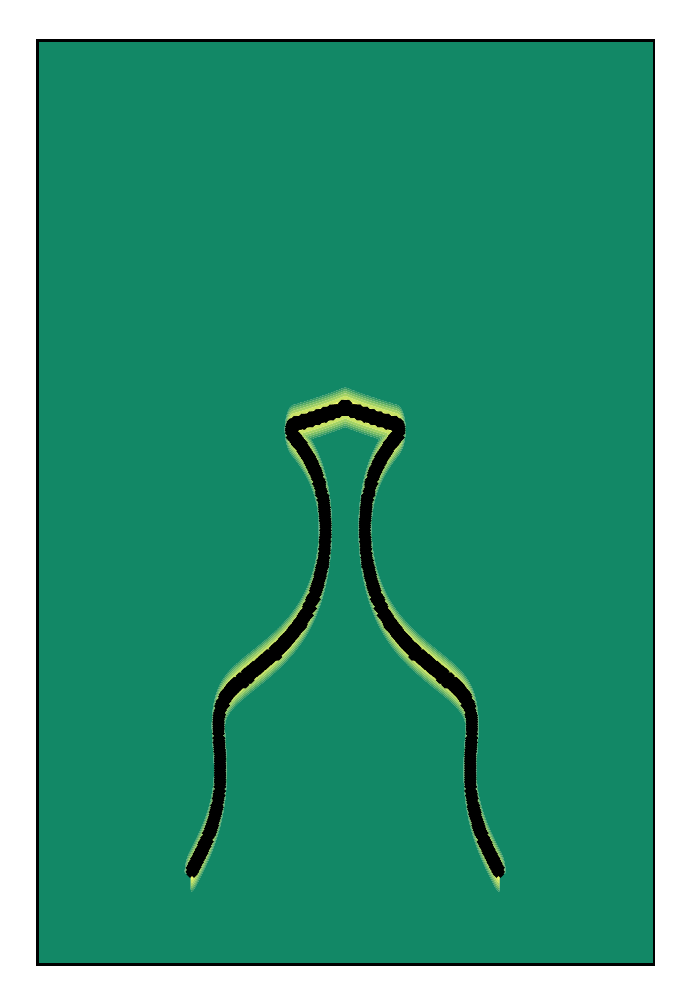}
\raisebox{-0.005\linewidth}{\includegraphics[width=0.27\linewidth]{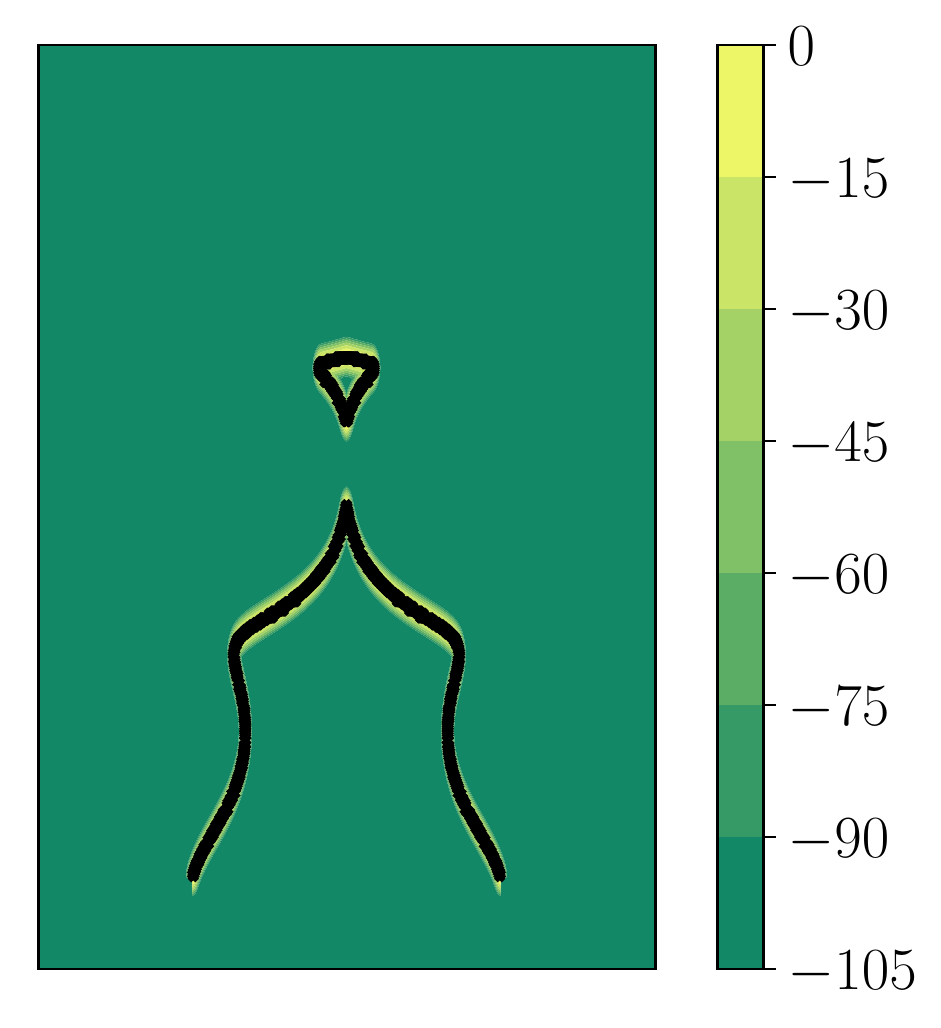}}
\caption{Combined state and parameter estimation.}
\label{fig:res:uq:heat:c}
\end{subfigure}
\caption{
Snapshots of the logarithm of the normalized likelihood~(Eq.~\eqref{eq:res:est:loglik}) over one period for the three variations of the twin experiment respectively.
The observations~(black) are extracted from a reference solution to the $G$-equation~(Fig.~\ref{fig:res:g}).
High-likelihood~(yellow) and low-likelihood~(green) regions are shown.
Each snapshot from left to right is distanced by a quarter of the forcing period.
}
\label{fig:res:uq:heat}
\end{figure}

A global, more quantitative measure for the uncertainty in the location of the flame surface is the root mean square (RMS) error, which is defined as the square-root of the trace of the covariance matrix of the ensemble~(Eq.~\eqref{eq:da:ensemble}):
\begin{equation}
\mathrm{RMS~error} = \sqrt{\frac{1}{n-1}\sum_{j=1}^N{\left(\psi_j - \overline{\psi}\right)^T\left(\psi_j - \overline{\psi}\right)}} \quad .
\end{equation}
In Fig.~\ref{fig:res:twin:error}, the RMS error is plotted over time for the three variations of the twin experiment from $t = 0$ for eight cycles.
The RMS error is initially zero in all three twin experiments because all members in each ensemble share the same initial condition.
In the Monte-Carlo experiment~(Fig.~\ref{fig:res:twin:error}, blue line), the RMS error subsequently grows until it reaches a high-uncertainty plateau.
Momentary spikes in the uncertainty occur approximately every 6000 timesteps, and coincide with the pinch-off events observed in the reference solution~(Fig.~\ref{fig:res:g}).
With state estimation~(Fig.~\ref{fig:res:twin:error}, orange line), the assimilation of data regularly suppresses the uncertainty as qualitatively observed in Fig.~\ref{fig:res:uq:heat:b}.
As a result, the predictions of the location of the flame surface significantly improve.
With combined state and parameter estimation~(Fig.~\ref{fig:res:twin:error}, green line), the predictions further improve.
Beginning with the first instance of data assimilation at timestep 1,000, the uncertainty steadily decreases until it reaches a low-uncertainty plateau after approximately 10,000 timesteps.
At this point, the state-augmented ensemble Kalman filter has optimally calibrated the parameters, and the state has reached a statistically stationary state.

\begin{figure}
\centering
\includegraphics[width=0.8\linewidth]{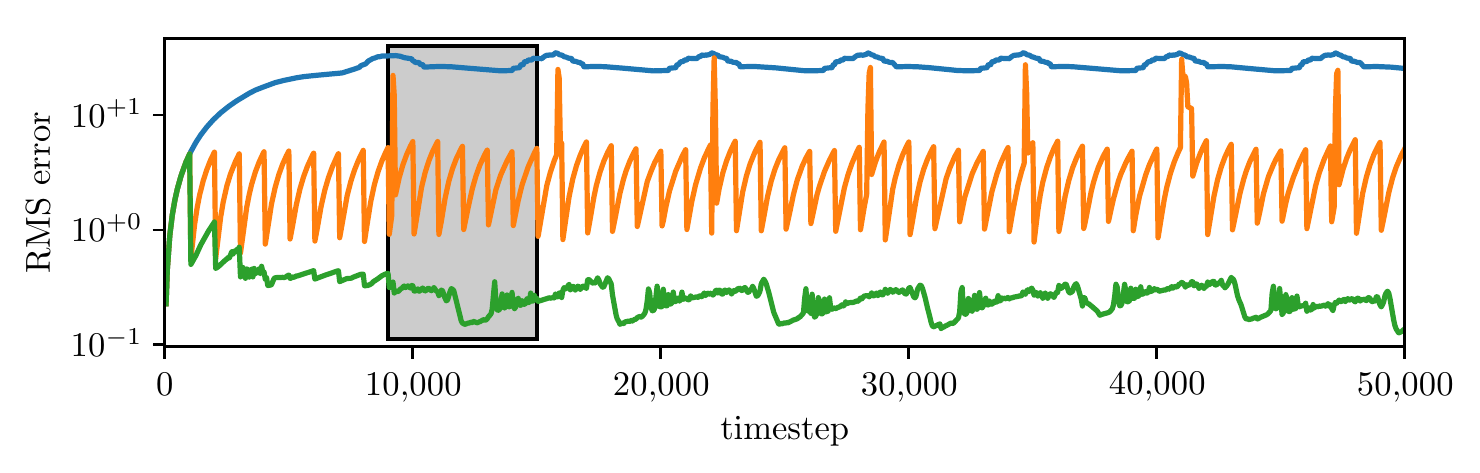}

\textcolor{mpl_blue}{\textbf{---}} Monte-Carlo experiment \quad \textcolor{mpl_orange}{\textbf{---}} state estimation \quad \textcolor{mpl_green}{\textbf{---}} combined state and parameter estimation
\caption{
Logarithmic plot of root mean square~(RMS) error over 50,000 timesteps for the Monte-Carlo experiment~(blue line) as well as the twin experiments with state estimation~(orange line) and combined state and parameter estimation~(green line).
6,283 timesteps correspond to one period of harmonic forcing.
Data is assimilated every 1,000 timesteps.
The black rectangle marks the time window depicted in Fig.~\ref{fig:res:uq:heat}.}
\label{fig:res:twin:error}
\end{figure}

To gain insight into the effect of combined state and parameter estimation, it is instructive to analyze the probability distributions in the parameters.
In Fig.~\ref{fig:res:twin:param}, the marginal distributions in the normalized residuals of~$K$ and~$\varepsilon$ are shown, signified by their means and three-sigma confidence levels at every timestep.
By timestep 15,000, no improvement is visible in either the estimates of the means or the uncertainties, which matches the evolution of the RMS error~(Fig.~\ref{fig:res:twin:error}).
While both~$K$ and~$\varepsilon$ quickly converge to the values used in the reference simulation~(Tab.~\ref{tab:res:param}), the low uncertainty in~$K$ compared to~$\varepsilon$ reflects the physical significance of this parameter:
The nonlinear dynamics of a premixed flame are highly sensitive to the timings of pinch-off events~\cite{Juniper2018}, which in turn depend on the phase speed at which perturbations travel along with the flame.
In our reduced-order model based on the $G$-equation, the phase speed is regulated by the parameter~$K$, which makes it highly observable even in the presence of observation noise.
Similar to sensitivity analysis~\cite{Magri2016, Magri2016a,Magri2019_amr}, this physical insight is gained from the inspection of the uncertainties, which here exceed the residuals between the means and the set of reference values by several orders of magnitude.

%When the first fuel-air pocket is pinched off, at timestep 4,000, the uncertainty grows rapidly until the combined state and parameter estimation updates the state and the parameters again.
%The update step coincides with the moment of pinch-off, which is also a moment of high uncertainty.
%Thus the parameters can be found very accurately, which leads to a low-uncertainty plateau.
%High uncertainties at later pinch-off events are suppressed without much change in the parameters.

\begin{figure}
\centering
\includegraphics[width=0.45\linewidth]{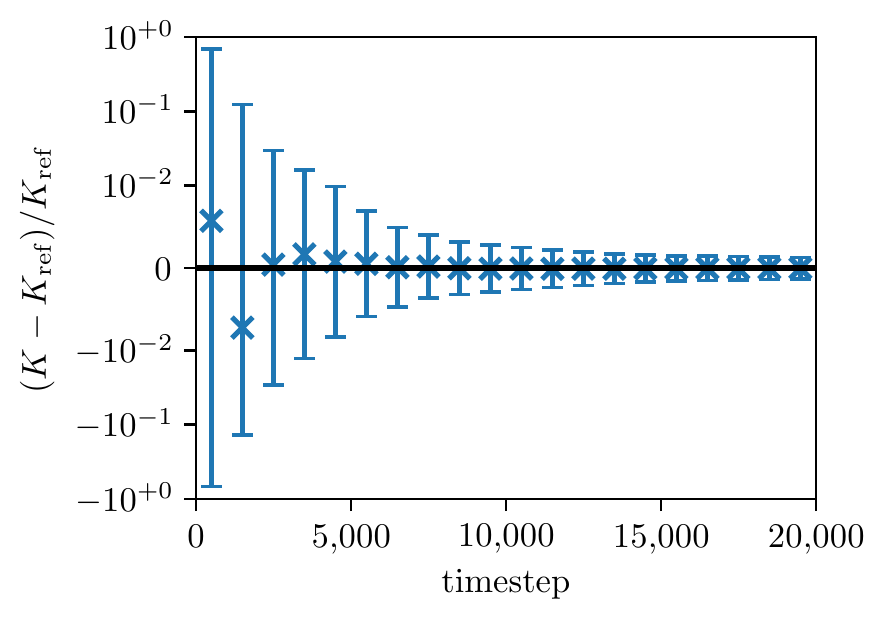}
~
\includegraphics[width=0.45\linewidth]{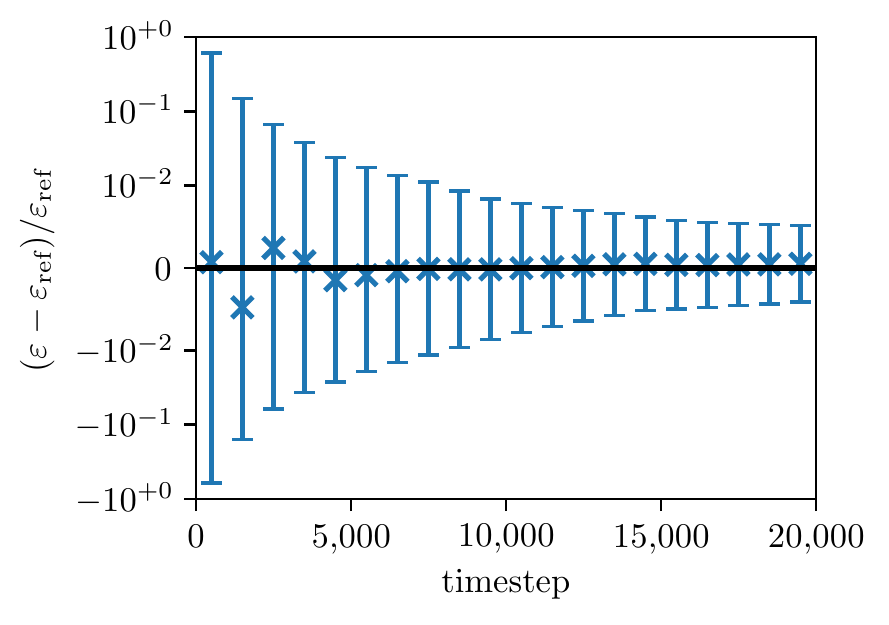}
\caption{
Logarithmic plots of normalized residuals of~$K$~(left) and~$\varepsilon$~(right) over 20,000 timesteps for combined state and parameter estimation; linear scales for normalized likelihoods between~$\pm 10^{-2}$.
The marginal distributions in~$K$ and~$\varepsilon$ are represented by their means~(crosses) and three-sigma confidence levels~(error bars).
}
\label{fig:res:twin:param}
\end{figure}

Finally, combined state and parameter estimation using the ensemble Kalman smoother is performed~(Theorem~\ref{thm:da:ensemble_kalman_smoother}).
In Fig.~\ref{fig:res:twin:param}, the marginal distributions in the normalized residuals of~$K$ and~$\varepsilon$ are shown for both the ensemble Kalman filter and smoother.
The effect of the ensemble Kalman filter has already been described in Fig.~\ref{fig:res:twin:param}.
In the forward-backward implementation, the ensemble Kalman smoother takes the last solution of the ensemble Kalman filter and works itself backwards in time~(Theorem~\ref{thm:da:bayes_smoother}).
In terms of information theory, the ensemble Kalman smoother takes at every timestep~$k$ all observations, from the past, present and future, into account compared to the ensemble Kalman filter, which only takes the past and the present into account.
Statistically speaking, the smoothed distributions are more strictly conditioned than the filtered distributions, over~$y_{1:N}$ compared to~$y_{1:k}$~(Tab.~\ref{tab:da:param:state_estimation}).
This surplus in information is evident in the form of lower error bars not just towards the end of the combined state and parameter estimation, but extending all the way to the very first instances of data assimilation.
The evaluation of Eq.~\ref{thm:da:ensemble_kalman_smoother} does not rely on the solution of the governing equations, but instead relies on the predicted and filtered distributions in storage.
The ensemble Kalman smoother is thus a computationally inexpensive tool to quantify the uncertainties especially at the beginning of the simulation, where little retrospective data is available.
In analogy to direct-adjoint looping~\cite{Juniper2011, Kerswell2018}, combined state and parameter estimation based on filtering and smoothing can be used to obtain otherwise inevitably ad-hoc initial conditions and parameters.

\begin{figure}
\centering
\includegraphics[width=0.8\linewidth]{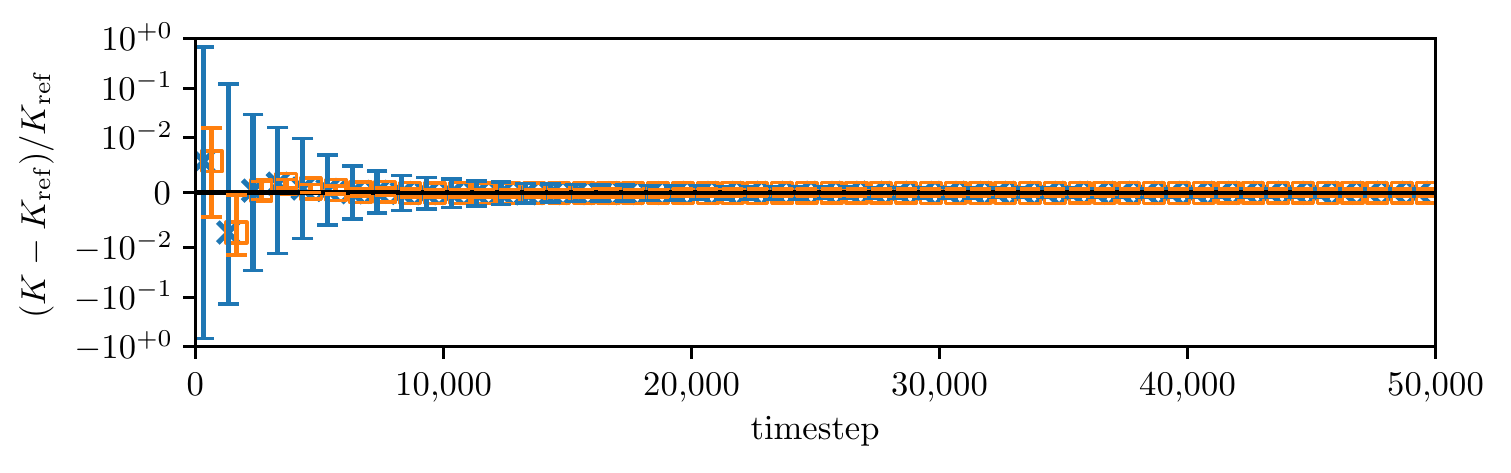}

\includegraphics[width=0.8\linewidth]{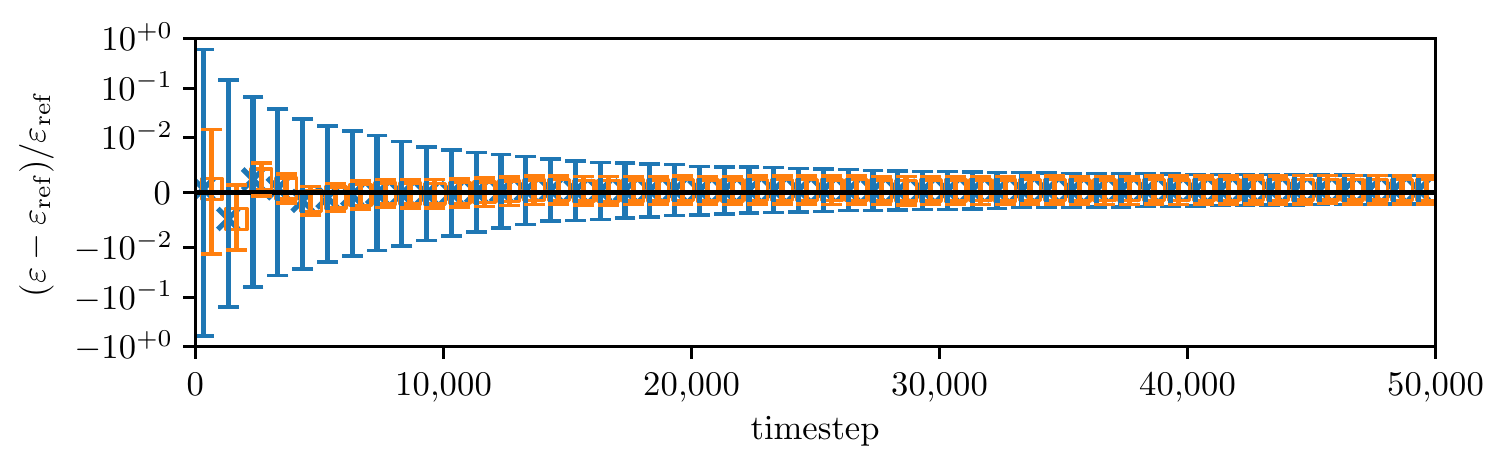}

\textcolor{mpl_blue}{\textbf{---}} ensemble Kalman filter \quad \textcolor{mpl_orange}{\textbf{---}} ensemble Kalman smoother
\caption{
Logarithmic plots of normalized residuals of~$K$~(top) and~$\varepsilon$~(bottom) over 50,000 timesteps for combined state and parameter estimation using the ensemble Kalman filter~(blue cross) and the ensemble Kalman smoother~(orange square); linear scales for normalized likelihoods between~$\pm 10^{-2}$.
}
\label{fig:res:twin:smooth}
\end{figure}

\FloatBarrier

%------------------------------------------------------------------------------

%\subsection{Stochastic convergence}
%
%\begin{figure}
%\centering
%\includegraphics[width=0.8\linewidth]{error_ensemble.pdf}
%\caption{Root mean square (RMS) error over time for combined state and parameter estimation with different experimental errors ($\sigma_\epsilon^2 = 10^{-4}$ blue line, $\sigma_\epsilon^2 = 10^{-6}$ orange dash, $\sigma_\epsilon^2 = 10^{-8}$ green dash-dot).}
%\label{fig:res:twin:error_ensemble}
%\end{figure}
%
%\begin{figure}
%\centering
%\includegraphics[width=0.45\linewidth]{K_ensemble.pdf}
%~
%\includegraphics[width=0.45\linewidth]{varepsilon_ensemble.pdf}
%\caption{Residuals of (a) $K$ and (b) $\varepsilon$ over time for combined state and parameter estimation with different experimental errors ($\sigma_\epsilon^2 = 10^{-4}$ blue cross, $\sigma_\epsilon^2 = 10^{-6}$ orange triangle, $\sigma_\epsilon^2 = 10^{-8}$ green square).}
%\label{fig:res:twin:param_ensemble}
%\end{figure}
%
%\FloatBarrier

%------------------------------------------------------------------------------

\subsection{Effect of observation noise}

In a first parameter study, the observation error~$\sigma_\epsilon$ is varied to investigate the effect of observation noise on the level-set data assimilation framework.
In Section~\ref{sec:res:est}, it has been established that combined state and parameter estimation eventually leads to a statistically stationary state.
In Fig.~\ref{fig:res:twin:error_epsilon}, the same RMS error is plotted over time for different observation errors~$\sigma_\epsilon$.
This reveals that the sustained low uncertainties are epistemic in nature~\cite{Kiureghian2009}:
Every reduction in the observation error~$\sigma_\epsilon$ by one order of magnitude reduces the RMS error by one order of magnitude.
It is clear that the observation error poses a lower epistemic bound on how low the uncertainty can be reduced by combined state and parameter estimation.
%most likely attributed to numerical error.

%Hence, uncertainty in the optimally calibrated parameters remains unless the experimental error vanishes.
%Furthermore, moments of pinch-off regularly induce significant uncertainty regardless of the experimental error.
%In each case, combined state and parameter estimation intervenes, and caps the RMS error before it grows out of bounds.
%The results from data assimilation confirm the extreme sensitivity observed elsewhere that undermines the a-priori calibration of models in favor of optimal, on-the-fly calibration~\cite{Juniper2018}.
%In Fig.~\ref{fig:res:twin:param_epsilon}, the residuals of $K$ and $\varepsilon$ are plotted over time.
%While $\varepsilon$ remains virtually constant after 10,000 timesteps, $K$ significantly varies in sync with the pinch-off cycles.
%In terms of an inverse problem, the uncertainty in the state (Fig.~\ref{fig:res:twin:error_param}) has propagated upstream to the parameters and mainly to $K$.
%The results from combined state and parameter estimation confirm the physical significance of the perturbation convection speed $K$ as a model parameter~\cite{Kashinath2013}.

\begin{figure}
\centering
\includegraphics[width=0.8\linewidth]{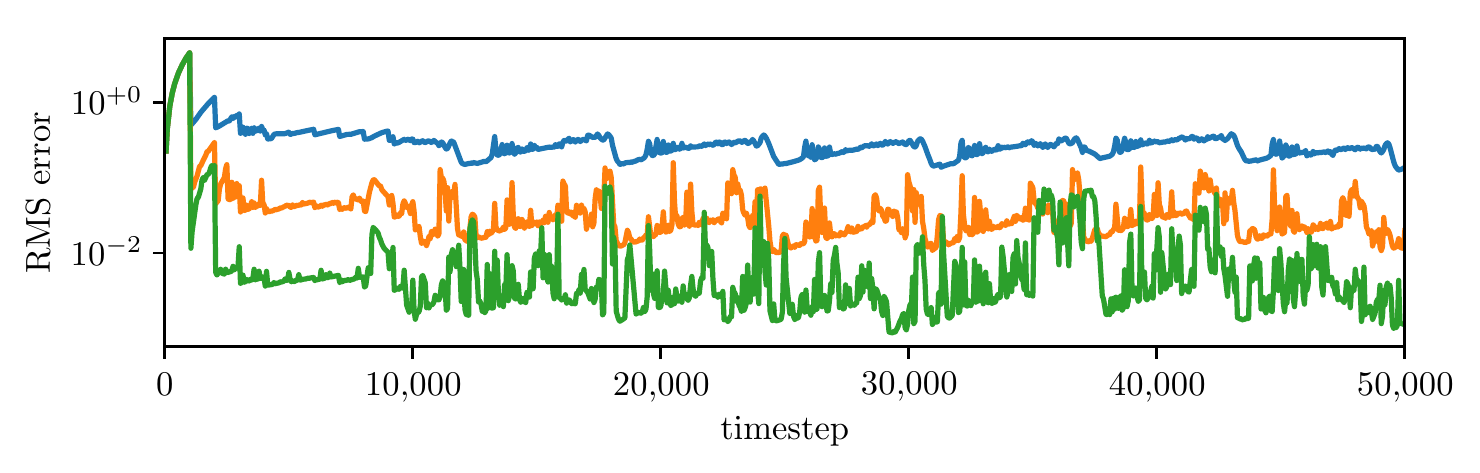}

\textcolor{mpl_blue}{\textbf{---}} $\sigma_\epsilon = 60\,\mu\mathrm{m}$ \quad \textcolor{mpl_orange}{\textbf{---}} $\sigma_\epsilon = 6\,\mu\mathrm{m}$ \quad \textcolor{mpl_green}{\textbf{---}} $\sigma_\epsilon = 0.6\,\mu\mathrm{m}$
\caption{Logarithmic plot of root mean square (RMS) error over 50,000 timesteps for combined state and parameter estimation with different observation errors~$\sigma_\epsilon$~($\sigma_\epsilon = 60\,\mu\mathrm{m}$ blue line, $\sigma_\epsilon = 6\,\mu\mathrm{m}$ orange line, $\sigma_\epsilon = 0.6\,\mu\mathrm{m}$ green line).}
\label{fig:res:twin:error_epsilon}
\end{figure}

In Fig.~\ref{fig:res:twin:param_epsilon}, the marginal distributions in the normalized residuals of~$K$ and~$\varepsilon$ are plotted over time for the different observation errors.
The means of the residuals quickly vanish for all three observation errors~$\sigma_\epsilon$.
For the lower observation errors, the error bars increasingly fail to contain the zero residual.
Considering that three sigmas correspond to a confidence of 99.7\,\% under the assumption of normal distributions~(Section~\ref{sec:da:kalman}), the frequency at which the error bars fail leads to the conclusion that the level-set data assimilation framework underpredicts the uncertainties for low observation errors.
This is in agreement with the theoretical analysis of the level-set data assimilation framework for the two-dimensional test case~(Section~\ref{sec:ls:ex2d}):
Although the individual simulations predict the formation of sharp cusps as the perturbations travel on the respective flame surfaces, the cusps are rounded in the mean of the ensemble~(Fig.~\ref{fig:ls:ex2d:prior}, left).
Furthermore, the observable sharpness of a cusp depends on the local number of observations of points on the flame surface.
A decrease in the observation error~$\sigma_\epsilon$ without an increase in the number of observations amounts to a relative decrease in the resolution of the observed cusp~(Fig.~\ref{fig:ls:ex2d:post:n}).
Therefore, the local values of the eikonal fields of the individual simulations significantly deviate from normal distributions~(Fig.~\ref{fig:ls:ex2d:prior}, right), which leads to incorrect distributions and uncertainties.

\begin{figure}
\centering
\includegraphics[width=0.45\linewidth]{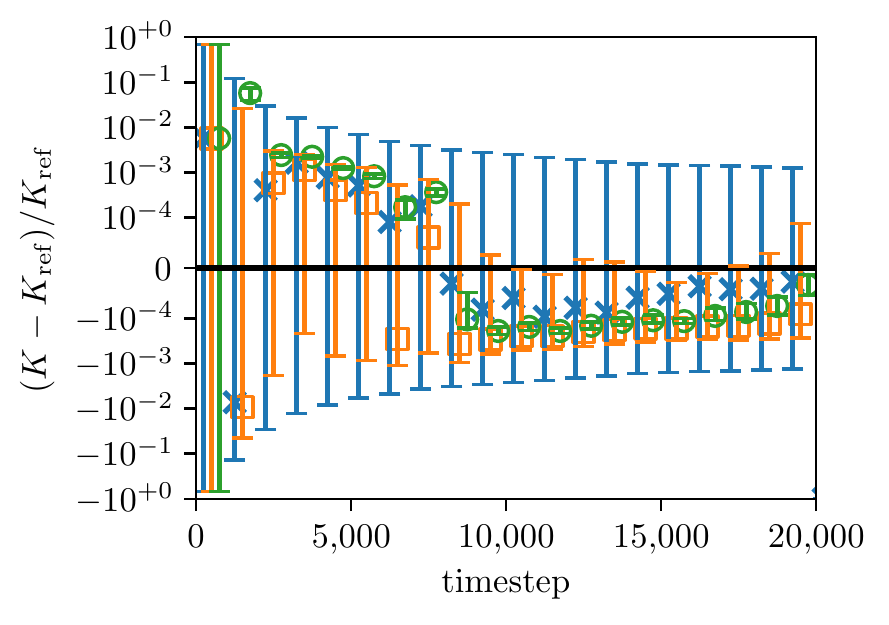}
~
\includegraphics[width=0.45\linewidth]{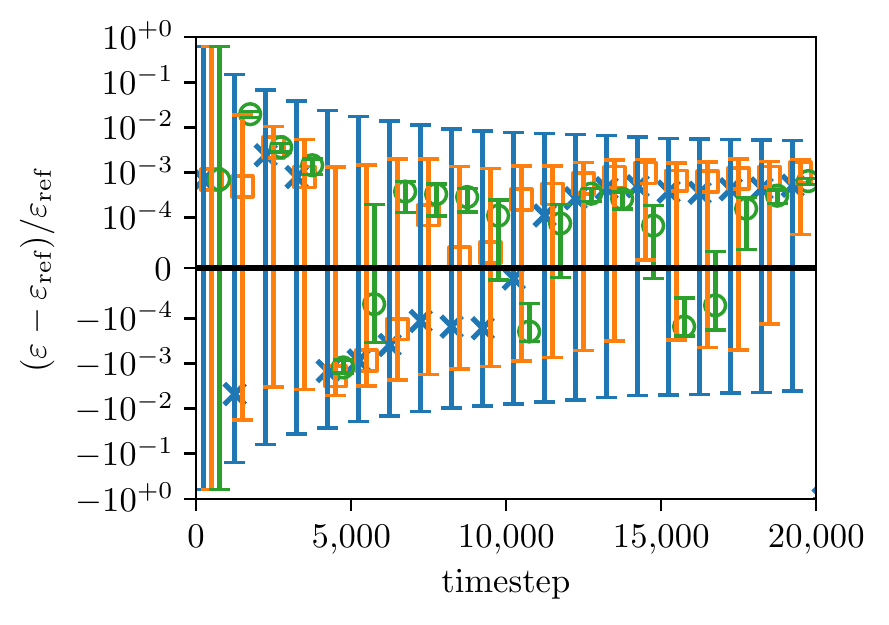}

\textcolor{mpl_blue}{\textbf{---}} $\sigma_\epsilon = 60\,\mu\mathrm{m}$ \quad \textcolor{mpl_orange}{\textbf{---}} $\sigma_\epsilon = 6\,\mu\mathrm{m}$ \quad \textcolor{mpl_green}{\textbf{---}} $\sigma_\epsilon = 0.6\,\mu\mathrm{m}$
\caption{Logarithmic plots of normalized residuals of $K$~(left) and $\varepsilon$~(right) over 20,000 timesteps for combined state and parameter estimation with different observation errors~($\sigma_\epsilon = 60\,\mu\mathrm{m}$ blue cross, $\sigma_\epsilon = 6\,\mu\mathrm{m}$ orange square, $\sigma_\epsilon = 0.6\,\mu\mathrm{m}$ green circle); linear scales for normalized likelihoods between~$\pm 10^{-4}$.}
\label{fig:res:twin:param_epsilon}
\end{figure}

\FloatBarrier

%------------------------------------------------------------------------------

\subsection{Effect of number of observations}

In a second parameter study, subsamples of the observation points are used to investigate the effect of the number of observations on the level-set data asssimilation framework.
Furthermore, this serves to illustrate the superiority of our analytic view over the geometric view in data assimilation~(Section~\ref{sec:ls}):
In the geometric view, the observed interface must be parametric or at least as highly resolved as the grid used in the level-set method to allow the optimal interpolation of a sufficiently large number of points on the predicted interfaces.
This restriction does not exist in the analytic view.
Hence, the effect of dispersed observation points on the level-set data assimilation framework is also studied here.

In Fig.~\ref{fig:res:twin:error_random}, the RMS error is plotted over time for different randomly sampled subsets of the observation points.
In Section~\ref{sec:res:est}, it has been established that combined state and parameter estimation eventually leads to a statistically stationary state after approximately 10,000~timesteps.
With 10\,\% of the observation points, it takes more than 30,000~timesteps to reach a comparable low-uncertainty plateau.
With 2\,\% of the observation points, the RMS error is still decreasing after 50,000~timesteps while repeatedly failing to correctly predict the pinch-off events for the whole ensemble.

\begin{figure}
\centering
\includegraphics[width=0.8\linewidth]{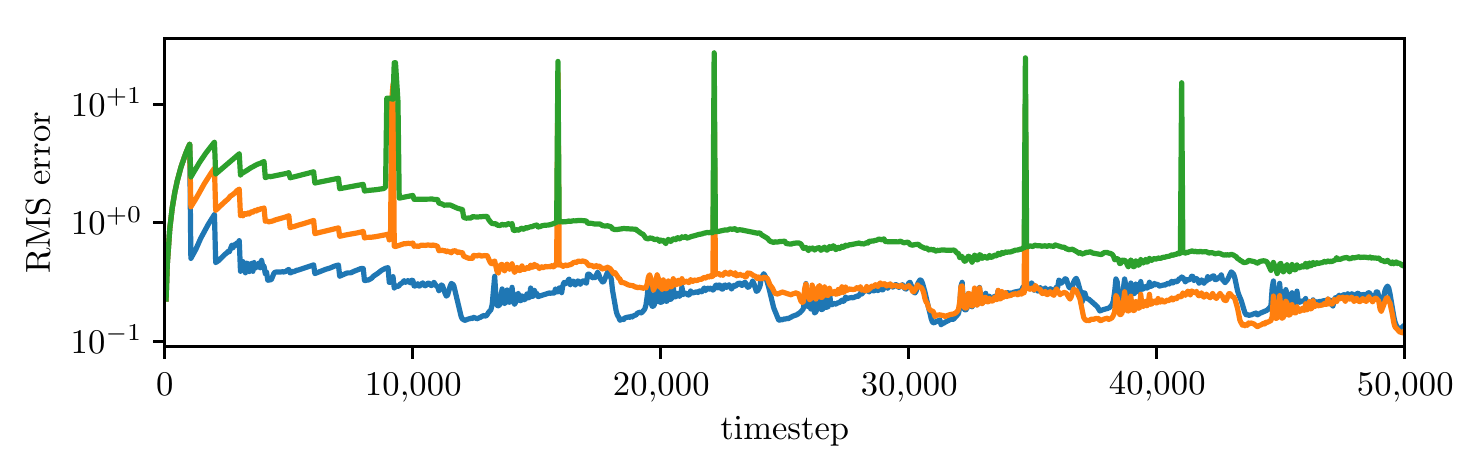}

\textcolor{mpl_blue}{\textbf{---}} $100\,\%$ observed \quad \textcolor{mpl_orange}{\textbf{---}} $10\,\%$ observed \quad \textcolor{mpl_green}{\textbf{---}} $2\,\%$ observed
\caption{Logarithmic plot of root mean square (RMS) error over 50,000 timesteps for combined state and parameter estimation with different subsamples of observation points~($100\,\%$ blue line, $10\,\%$ orange line, $2\,\%$ green line).}
\label{fig:res:twin:error_random}
\end{figure}

In Fig.~\ref{fig:res:twin:param_random}, the marginal distributions in the normalized residuals of~$K$ and~$\varepsilon$ are plotted over time for the different subsamples.
While the means of the residuals quickly vanish for all three subsamples, the confidence levels for the incomplete subsamples improve less rapidly than for the original sample of observation points.
This is in agreement with the theoretical analysis of the level-set data assimilation framework for the two-dimensional test case~(Section~\ref{sec:ls:ex2d}):
In particular for the 2\,\%-subsample, certain features in the observed interfaces remain underresolved at the individual timesteps~(Fig.~\ref{fig:ls:ex2d:post:1}).
Nevertheless, the ensemble at a timestep~$k$ representing the filtered probability distribution~$p(x_k, \theta \mid y_{1:k}, f)$~(Tab.~\ref{tab:da:param:state_estimation}) reflects the knowledge of the observations at all previous timesteps due to the Markov chain property of the probabilistic state space model~(Section~\ref{sec:da:state_space}).
In theory, the lack of resolution in the observations is compensated by more instances of data assimilation to accumulate the same amount of information.
The result is a similar statistically stationary state reached at a later time.

\begin{figure}
\centering
\includegraphics[width=0.45\linewidth]{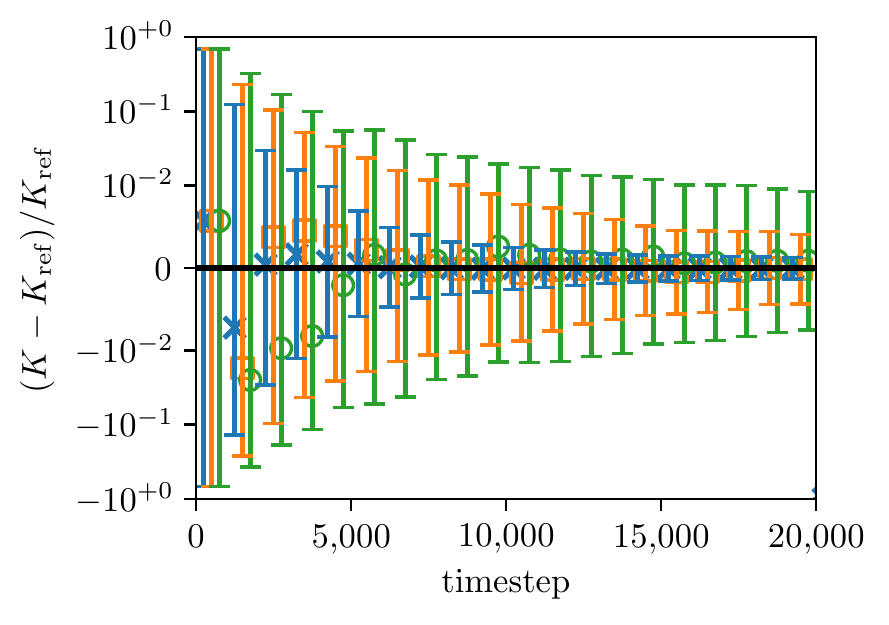}
~
\includegraphics[width=0.45\linewidth]{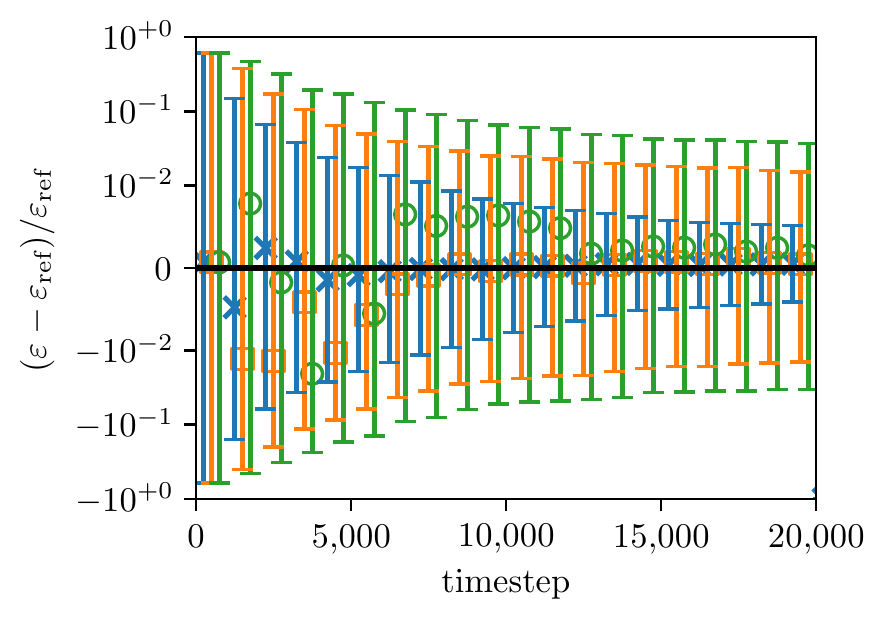}

\textcolor{mpl_blue}{\textbf{---}} $100\,\%$ observed \quad \textcolor{mpl_orange}{\textbf{---}} $10\,\%$ observed \quad \textcolor{mpl_green}{\textbf{---}} $2\,\%$ observed
\caption{Logarithmic plots of normalized residuals of $K$~(left) and $\varepsilon$~(right) over 20,000 timesteps for combined state and parameter estimation with different subsamples of observation points~($100\,\%$ blue cross, $10\,\%$ orange square, $2\,\%$ green circle); linear scales for normalized likelihoods between~$\pm 10^{-2}$.}
\label{fig:res:twin:param_random}
\end{figure}

\FloatBarrier

%------------------------------------------------------------------------------

\section{Conclusion}
\label{sec:conclusion}
Reduced-order models based on level-set methods are widely used tools to qualitatively capture and track the nonlinear dynamics of an interface. In this paper, we enhance such a level-set model with a statistical learning technique in order to make the model quantitatively predictive. The statistical learning method is Bayesian, so the uncertainty of the outputs is naturally included in this framework. The main ingredients of the level-set statistical learning are
(i) a reduced-order model of an interface based on a level-set method with a law of motion;
(ii) the ensemble Kalman filter and smoother; and
(iii) external reference data with its uncertainty. The ensemble Kalman filter uses Bayes' rule as its first principle while assuming normal distributions when data is assimilated. The output of the algorithm is the combined state and parameters estimation, which provides the most likely position of the interface and the model's parameters given observations. These observations can originate from experimental data or high-fidelity simulations.

To show and verify the algorithm, we perform twin experiments, where the observations are produced by the same code. The assimilation of data from experimental data or high-fidelity simulation requires virtually no modification to the proposed algorithm. The algorithm is applied to three test cases, which are of increasing complexity.

First, a one-dimensional case is presented. We choose the Hamilton-Jacobi generating functions over the whole domain to perform data assimilation, which is called the``analytic approach''. The one-dimensional example shows that the analytic approach overcomes the inaccuracy of the set-theoretic approach, which is based on characteristic functions. Moreover, it is argued that the analytic approach is more straightforward to implement that the geometric approach when moving to higher dimensions.

Second, a two-dimensional, time-independent case is presented, where the position of a sharp corner has to be predicted. It is shown that the level-set data assimilation framework proposed is fully front capturing. The effect that the number of observations has on capturing a sharp corner is investigated. The sensitivity of the analysis solution to each observation is calculated. The one- and two-dimensional cases show that the analytic approach we adopt is more robust than the geometric and set-theoretic approaches that have been used in the past.

Third, the time-dependent nonlinear dynamics of a conical forced premixed flame is studied. The most uncertain states are pinch-offs and the formation of sharp cusps, which are highly nonlinear topological changes of the interface. The two most uncertain parameters are the amplitude of the velocity perturbation at the flame's base and the perturbation phase speed that wrinkles the flame. In the twin experiment, the combined state and parameter estimation fully recovers the reference solution, which validates the algorithm. Finally, the effect of uncertainty and number of observations is analysed. The uncertainty on the observation most influences the detection of the pinch-off events or the formation of sharp cusps. The data assimilation analytic approach is shown to be versatile because it can assimilate data any time it becomes available, avoiding the cumbersome parameterization required in a geometric approach.

The level-set data assimilation framework is applicable to a number of problems in computational physics involving the motion of an interface~\cite{Sethian2001}. For example, a physical model coupling the dynamics of a premixed flame with an acoustic model of a duct may be used for a data-driven investigation of the Rijke tube~\cite{Fleifil1996, Dowling1999}. More generally, a number of computational fluid dynamics (CFD) models of premixed combustion, short of direct numerical simulations, combine the laws of motion with conservation laws for the inert flow, e.g. the Bray-Moss-Libby model or the flamelet model~\cite{Peters2000}. Although it is tempting to directly apply data assimilation to the primitive variables in the Navier-Stokes equations, the discontinuities in density, temperature and species concentrations in light of the theoretical analysis of the set-theoretic approach to level-set data assimilation raise doubts as to whether the kinematics of a premixed flame, which are crucial to thermoacoustics, will be correctly predicted.

Future work will focus on extending the applications of the proposed level-set data assimilation framework. When applied to experiments, the effects of imperfect models will need to be mitigated~\cite{Reich2015}, for example with techniques such as covariance inflation~\cite{Anderson1999, Ott2004} and localization~\cite{Gaspari1999}. Future work will also assess the uncertainties in the model, as opposed to uncertainties in the state and parameters.
%
%[The danger of the next line is that a reviewer will ask us to do this in the current paper, which would be a huge amount of extra work. To be safe, I would leave it out.]
%Finally, the ensemble Kalman filter in the level-set data assimilation framework will be benchmarked against its alternatives, e.g. four-dimensional variational data assimilation [76, 77] and particle filters [24].

%------------------------------------------------------------------------------

\section*{Acknowledgments}

The authors would like to thank the Data Assimilation Research Centre at the University of Reading for the useful discussions at their 2018 Data Assimilation Training.
H.\ Yu is supported the Cambridge Commonwealth, European \& International Trust under a Schlumberger Cambridge International Scholarship.
L.\ Magri is supported by the Royal Academy of Engineering under the RAEng Research Fellowships scheme and partially by the visiting fellowship at the Technical University of Munich -- Institute for Advanced Study, funded by the German Excellence Initiative and the European Union Seventh Framework Programme under grant agreement n. 291763.

%------------------------------------------------------------------------------

%\bibliography{refs}

%------------------------------------------------------------------------------

\appendix

\section{Bayesian Inference}
\label{app:bayes}

\begin{theorem}[Marginalization]
For a joint probability distribution $p(X, Y)$, the marginal probability distributions $p(X)$ and $p(Y)$ are given by
\begin{equation}
p(X) = \int{p(X, Y)\,\ud{Y}} \quad ,
\end{equation}
\begin{equation}
p(Y) = \int{p(X, Y)\,\ud{X}} \quad .
\end{equation}
\end{theorem}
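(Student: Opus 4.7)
The plan is to derive the marginalization identities directly from the definition of joint and marginal probability distributions, combined with the axioms of measure theory. The proof is essentially definitional, so the main work is setting up the correct framework rather than overcoming a technical obstacle.

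First I would recall that a probability distribution $p(X)$ is defined through a probability measure $\mu_X$ on the sample space of $X$ such that, for any measurable set $A$, the probability $P(X \in A) = \int_A p(X)\,\ud X$. The joint distribution $p(X,Y)$ is similarly associated with a product measure $\mu_{XY}$ on the joint sample space, and the marginal distribution of $X$ is defined as the pushforward of this joint measure under the projection $(X,Y) \mapsto X$. Concretely, for any measurable set $A$, one has $P(X \in A) = P(X \in A, Y \in \Omega_Y)$, where $\Omega_Y$ denotes the entire sample space of $Y$.

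Next I would translate this set-level identity into an identity between densities. Writing out the probability as an integral of the joint density gives
\begin{equation}
\int_A p(X)\,\ud X = \int_A \int_{\Omega_Y} p(X,Y)\,\ud Y\,\ud X \quad .
\end{equation}
Applying Fubini's theorem (which is justified because $p(X,Y)$ is non-negative and integrable over the joint space, so Tonelli's theorem applies unconditionally) legitimizes the iterated integral. Because this identity must hold for every measurable set $A$, the integrands must agree almost everywhere, which yields $p(X) = \int p(X,Y)\,\ud Y$. The identity for $p(Y)$ follows by symmetry, exchanging the roles of $X$ and $Y$.

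The main and only real obstacle is ensuring that the interchange of integrals is valid and that the conclusion from equality of integrals to equality of densities is legitimate. Both are handled by standard measure-theoretic results: Tonelli's theorem for the interchange, and the fact that two locally integrable functions whose integrals agree on every measurable set must coincide almost everywhere. Since the paper uses these densities only in subsequent algorithmic expressions where equality up to a null set suffices, no further refinement is required.
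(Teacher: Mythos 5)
Your proof is correct, but note that the paper itself offers no proof of this statement at all: in \ref{app:bayes} the marginalization rule is stated as a bare theorem, treated as a foundational rule of probability in the spirit of the Cox-axiom viewpoint the paper cites, and is used only as a building block for Bayes' rule and the filtering derivations. What you have done is supply the measure-theoretic justification that the paper omits: you define the marginal as the pushforward of the joint law under the projection $(X,Y)\mapsto X$, write $P(X\in A)=P(X\in A,\,Y\in\Omega_Y)$ as an iterated integral via Tonelli (non-negativity alone suffices here, so your remark about integrability is not even needed), and conclude $p(X)=\int p(X,Y)\,\ud Y$ almost everywhere from the fact that the two sides integrate identically over every measurable set $A$; the $p(Y)$ identity follows by symmetry. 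This is a legitimate and complete argument at a level of rigor the paper never attempts, and it buys a precise statement of in what sense the densities agree (up to null sets), which, as you say, is all the subsequent ensemble-Kalman formulas require. One small terminological correction: the joint law is a measure on the product space, not ``the product measure'' $\mu_X\otimes\mu_Y$ --- that phrasing would presuppose independence of $X$ and $Y$, which is exactly what marginalization must not assume.
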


\begin{theorem}[Bayes' rule]
The conditional probability distributions $p(X \mid Y)$ and $p(Y \mid X)$ are related to the joint probability distribution via
\begin{equation}
p(X, Y) = p(X)p(Y \mid X) = p(Y)p(X \mid Y) \quad .
\end{equation}
The conditional probability distribution $p(X \mid Y)$ is sometimes referred to as the inverse of the conditional probability distribution $p(Y \mid X)$.
The inverse is given by Bayes' rule:
\begin{equation}
p(X \mid Y) = \frac{p(X, Y)}{p(Y)} = \frac{p(X)p(Y \mid X)}{p(Y)} \quad .
\end{equation}
If $p(X)$ and $p(Y \mid X)$ are given, $p(Y)$ can be computed through marginalization.
\end{theorem}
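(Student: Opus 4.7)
The plan is to derive Bayes' rule as an immediate consequence of the definition of a conditional probability density together with the Marginalization theorem that immediately precedes the statement. The argument splits naturally into three small steps, and I do not anticipate any serious mathematical obstacle since the result is essentially a one-line rearrangement of a definition.

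First I would recall the definition of a conditional density: for any joint distribution $p(X, Y)$, and wherever the marginal $p(Y)$ is strictly positive, we set $p(X \mid Y) := p(X, Y) / p(Y)$. Multiplying through gives the product-form identity $p(X, Y) = p(Y)\,p(X \mid Y)$. The same argument with the roles of $X$ and $Y$ exchanged produces $p(X, Y) = p(X)\,p(Y \mid X)$, and together these yield the chain of equalities $p(X, Y) = p(X)\,p(Y \mid X) = p(Y)\,p(X \mid Y)$ asserted in the first display.

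Second, equating the two factorizations gives $p(Y)\,p(X \mid Y) = p(X)\,p(Y \mid X)$, and dividing by $p(Y)$ produces Bayes' rule in its explicit form $p(X \mid Y) = p(X)\,p(Y \mid X) / p(Y)$, which is the second display. Third, to justify the closing remark that $p(Y)$ can be recovered from $p(X)$ and $p(Y \mid X)$, I would combine the product rule with the Marginalization theorem stated just above to obtain $p(Y) = \int p(X, Y) \, \ud{X} = \int p(X)\,p(Y \mid X) \, \ud{X}$.

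The main obstacle is only notational: conditional densities are defined only on the support of the conditioning variable, so the divisions by $p(Y)$ implicitly restrict to $\{p(Y) > 0\}$, and the integral formula for $p(Y)$ tacitly presumes a sufficiently regular joint density for Fubini-type interchanges to be legitimate. Beyond these standard measure-theoretic caveats, no substantive difficulty arises, and the whole statement reduces to one application of the definition of conditioning together with one application of marginalization.
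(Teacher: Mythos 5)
Your derivation is correct: it is the standard definitional argument (conditional density as ratio of joint to marginal, symmetrized and rearranged, with the marginalization theorem supplying $p(Y)$), and your caveats about $p(Y)>0$ are the only subtleties. The paper itself states this result in the appendix as background without any proof, so there is no alternative argument to compare against; your proof fills that gap in exactly the way the paper's framing (marginalization theorem followed by Bayes' rule) suggests.
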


\section{Hamiltonian mechanics}
\label{app:hj}

\begin{lemma}[Lagrangian]
\label{thm:hj:lagrange}
\begin{equation}
\mathcal{L}(r(t), \dot{r}(t), t) = 0 \quad .
\label{eq:hj:lagrange}
\end{equation}
\end{lemma}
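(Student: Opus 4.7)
The plan is to derive $\mathcal{L} = 0$ directly from the kinematic nature of interface motion rather than by appealing to a kinetic-minus-potential structure, which is unavailable here. The motion $\dot{r} = u(r)$ is a prescribed first-order flow; consequently the dynamical content cannot reside in $\mathcal{L}$ itself but must be carried entirely by the two constraints, namely the level-set condition $G(r(t),t) - G(r(0),0) = 0$ and the normalization $n \cdot n = 1$.

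First I would differentiate the level-set constraint along the trajectory. By the chain rule, $\partial G/\partial t + \nabla G \cdot \dot{r} = 0$, and substituting Huygens' principle $\nabla G = n$ (enforced by the choice of phase space $(r,n)$ in Section~\ref{sec:ls:hj}) yields the key identity $\partial G/\partial t + n \cdot \dot{r} = 0$. This identifies $G$ as the action functional of the problem and confirms that $n$ is the generalized momentum conjugate to $r$, so that $n = \partial G/\partial r$.

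Next I would invoke the standard Legendre transform $\mathcal{L} = n \cdot \dot{r} - \mathcal{H}$ together with the Hamilton-Jacobi equation $\partial G/\partial t + \mathcal{H} = 0$, which follows from the identification of $G$ with the action. Combining the two relations gives $\mathcal{L} = n \cdot \dot{r} + \partial G/\partial t$, and the constraint-derived identity then forces $\mathcal{L} = 0$ on the constraint surface, as claimed.

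The main obstacle is an apparent circularity: in the appendix this lemma precedes the Hamiltonian (Lemma~\ref{thm:hj:hamilton}) and the Hamilton-Jacobi equation (Theorem~\ref{thm:hj:g}), whereas the sketch above leans on all three simultaneously. I would resolve this by adopting the two constraints as primitive and arguing that any Lagrangian consistent with both the interface kinematics and Huygens' principle can differ from the zero Lagrangian only by a total time derivative of $G$, which is absorbed into the generating function without altering the Euler-Lagrange equations. Under this reading, $\mathcal{L} = 0$ is the minimal non-redundant choice, and the subsequent results of the appendix follow cleanly by Legendre transform and direct differentiation.
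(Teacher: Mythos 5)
Your central identity---differentiating the level-set constraint along the trajectory and using Huygens' principle $\nabla G = n$ to get $\partial G/\partial t + n\cdot\dot r = \frac{\mathrm{d}}{\mathrm{d}t}G(r(t),t) = 0$---is exactly the computation at the heart of the paper's proof. The problem is how you attach this to $\mathcal{L}$. Your route goes through the Legendre transform $\mathcal{L} = n\cdot\dot r - \mathcal{H}$ and the Hamilton--Jacobi equation $\partial G/\partial t + \mathcal{H} = 0$, but in the appendix the Hamiltonian (Lemma~\ref{thm:hj:hamilton}) is itself obtained by Legendre-transforming this Lagrangian, and the Hamilton--Jacobi equation (Theorem~\ref{thm:hj:g}) is only available once $\mathcal{H}$ is known; you correctly flag this circularity, but your proposed repair does not close it. The assertion that any Lagrangian consistent with the kinematics and Huygens' principle ``can differ from the zero Lagrangian only by a total time derivative of $G$'' is unproven (gauge equivalence permits adding the total derivative of an arbitrary function, and a prescribed first-order flow $\dot r = u(r)$ does not by itself single out a Lagrangian), and declaring $\mathcal{L}=0$ the ``minimal non-redundant choice'' turns the lemma into a convention rather than the derived identity that Lemma~\ref{thm:hj:hamilton} subsequently relies on.

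The paper closes the loop more directly and without invoking $\mathcal{H}$ or the Hamilton--Jacobi equation at all: it cites the standard fact (Arnold, Ch.~9) that the generating function is formally the action integral, so that along the motion $\mathcal{L}(r(t),\dot r(t),t) = \frac{\mathrm{d}}{\mathrm{d}t}G(r(t),t)$; since the moving interface is by construction the level set $G(r(t),t) = G(r(0),0)$, this total derivative vanishes, giving $\mathcal{L}=0$ exactly. If you replace your Legendre/Hamilton--Jacobi detour with this single identification, the chain-rule computation you already performed supplies the vanishing of $\mathrm{d}G/\mathrm{d}t$, and the circularity disappears.
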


\begin{proof}
The generating function is formally the same as the action integral~\cite[Chapter~9]{Arnold1989}.
Thus, the Lagrangian~$\mathcal{L}(r, \dot{r}, t)$ is the total derivative of the generating function:
\begin{equation}
\mathcal{L}(r(t), \dot{r}(t), t)
= \tdif{t}G(r(t), t)
= \tdif{t}G(r(0), 0)
= 0 \quad .
\end{equation}
\end{proof}

\begin{lemma}[Hamiltonian]
\label{thm:hj:hamilton}
\begin{equation}
\mathcal{H}(r(t), n(t), t) = u(r(t)) \cdot n(t) \quad .
\label{eq:hj:hamilton}
\end{equation}
\end{lemma}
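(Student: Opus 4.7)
The plan is to obtain the Hamiltonian from the Lagrangian of Lemma~\ref{thm:hj:lagrange} by the standard Legendre transform, interpreted within the phase space adopted earlier in Section~\ref{sec:ls:hj}, where the generalized coordinates are the positions $r$ on the interface and the generalized momenta are the unit normals $n$. In that phase space the Legendre transform reads
\begin{equation}
\mathcal{H}(r(t), n(t), t) = n(t) \cdot \dot{r}(t) - \mathcal{L}(r(t), \dot{r}(t), t) \quad .
\end{equation}
By Lemma~\ref{thm:hj:lagrange} the Lagrangian vanishes identically on admissible trajectories, and the law of motion in Eq.~\eqref{eq:ls:hj:newton} prescribes $\dot{r}(t) = u(r(t))$. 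Substituting both facts yields $\mathcal{H} = n \cdot u(r)$, which is the claim.

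Before substituting, however, I would verify two consistency conditions that stem from the constraints Eq.~\eqref{eq:ls:hj:hamilton1} and Eq.~\eqref{eq:ls:hj:hamilton2}. First, the constraint $G(r(t), t) - G(r(0), 0) = 0$ must remain invariant along the trajectory generated by this Hamiltonian; differentiating in time and using Huygens' principle $\nabla G = n$ together with $\dot{r} = u(r)$, this reduces to the Hamilton-Jacobi equation $\partial_t G + u \cdot n = 0$, which is exactly the equation one wishes to obtain in the subsequent Theorem~\ref{thm:hj:g}. Second, the unit-norm constraint $n \cdot n = 1$ must be compatible with the Hamiltonian flow it is meant to define; this is taken as part of the definition of the phase space rather than a dynamical constraint, and so it does not need to be enforced via a Lagrange multiplier in the Legendre transform.

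The expected obstacle is less the algebra than the conceptual subtlety of justifying a Legendre transform for a Lagrangian that is identically zero. In the standard derivation, the momentum $n = \partial \mathcal{L}/\partial \dot{r}$ would be undefined here, so the identification of $n$ as a generalized momentum has to be imposed from the geometric setup (Huygens' principle and the eikonal constraint). I would therefore structure the proof so that this identification is read off the constrained variational principle: the admissible velocities $\dot{r}$ are those tangent to the level set $G = \mathrm{const}$, so they are perpendicular to $n$, and the dynamics along the normal direction is prescribed by Eq.~\eqref{eq:ls:hj:newton}. Once this interpretation is in place, the substitution $\dot{r} = u(r)$ and $\mathcal{L} = 0$ gives the result in one line, and the consistency with the subsequent Hamilton-Jacobi equation Eq.~\eqref{eq:ls:hj:g} serves as an a posteriori check that the chosen phase space and Hamiltonian are the correct ones.
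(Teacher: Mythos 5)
Your opening step coincides with the paper's: the Legendre transform $\mathcal{H} = \dot{r}\cdot n - \mathcal{L}$, together with $\mathcal{L}=0$ (Lemma~\ref{thm:hj:lagrange}) and $\dot{r}=u(r)$ (Eq.~\eqref{eq:ls:hj:newton}), gives $\mathcal{H} = u\cdot n$ in one line. But you correctly notice that the Lagrangian is identically zero, so the transform is degenerate and this substitution only produces a \emph{candidate} function; it does not by itself establish that $u\cdot n$ is a Hamiltonian for the chosen phase space $(r,n)$. That is precisely where the paper's proof does its real work: after the one-line substitution it states ``it remains to be shown that $\mathcal{H}$ is indeed a Hamiltonian'' and verifies the canonical equations subject to the constraints Eq.~\eqref{eq:ls:hj:hamilton1}--\eqref{eq:ls:hj:hamilton2}, namely $\partial\mathcal{H}/\partial n = u + \lambda_n n = \dot{r}$ and, much less trivially, $\partial\mathcal{H}/\partial r = -\,\mathrm{d}n/\mathrm{d}t$. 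The latter requires deriving the transport equation of the unit normal by tensor calculus,
\begin{equation}
\tdiff{n}{t} = -\,n\cdot\nabla u - n\times(\nabla\times u) + \left[n\cdot\left(n\cdot\nabla u\right)\right]n \quad ,
\end{equation}
using $\nabla\!\left(\mathrm{d}G/\mathrm{d}t\right)=0$, the identity $\nabla\times n = 0$ (since $n=\nabla G$), and the unit-norm condition $n\cdot\tfrac{\mathrm{d}n}{\mathrm{d}t}=0$ to fix the multiplier $\mu_n$; the level-set constraint then supplies the multiplier $\lambda_r\,\partial G/\partial r$ that makes $-\partial\mathcal{H}/\partial r$ match this expression. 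None of this appears in your proposal.

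Your two replacement ``consistency checks'' do not close this gap. The first (invariance of $G(r(t),t)-G(r(0),0)=0$ along the flow) merely re-expresses Lemma~\ref{thm:hj:lagrange}, and, as you yourself note, anticipates Theorem~\ref{thm:hj:g}, which is a \emph{consequence} of this lemma, so it cannot serve as its justification. The second dismisses $n\cdot n = 1$ as definitional and drops its multiplier, whereas in the paper this constraint is dynamical: its preservation is exactly what determines $\mu_n$ and hence the correct $\mathrm{d}n/\mathrm{d}t$ against which $-\partial\mathcal{H}/\partial r$ must be checked. Declaring by fiat that $n$ ``is'' the momentum leaves the canonical equation for the momentum unverified, which is the substantive content of the lemma; to complete the proof you would need to carry out (or replace by an equivalent argument) the normal-vector transport computation and the matching of Lagrange multipliers done in the paper's appendix.
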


\begin{proof}
A Legendre transform of the Lagrangian~$\mathcal{L}(r, \dot{r}, t)$~(Eq.~\eqref{eq:hj:lagrange}) gives the Hamiltonian~$\mathcal{H}(r, n, t)$:
\begin{equation}
\mathcal{H}(r(t), n(t), t)
= \dot{r}(t) \cdot n(t) - \mathcal{L}(r(t), \dot{r}(t), t) = u(r(t)) \cdot n(t) \quad .
\end{equation}
It remains to be shown that~$\mathcal{H}(r, n, t)$ is indeed a Hamiltonian.
The constraints, Eq.~\eqref{eq:ls:hj:hamilton1} and~\eqref{eq:ls:hj:hamilton2}, are only functions in $r$ and $n$ respectively.
Hence, a suitable choice of Lagrange multipliers~$\lambda_r$ and~$\lambda_n$ gives
\begin{equation}
\pdiff{\mathcal{H}}{n}
= u + \lambda_n n
= u
= \tdiff{r}{t} \quad ,
\label{eq:hj:hamilton_r}
\end{equation}
\begin{equation}
\pdiff{\mathcal{H}}{r}
= n \cdot \nabla u + n \times (\nabla \times u) + \lambda_r\pdiff{G}{r}
= n \cdot \nabla u + n \times (\nabla \times u) - \left[n \cdot \left(n \cdot \nabla u\right)\right]n
= -\tdiff{n}{t} \quad .
\label{eq:hj:hamilton_n}
\end{equation}
Eq.~\eqref{eq:hj:hamilton_r} follows from the laws of motion~(Eq.~\eqref{eq:ls:hj:newton}).
Eq.~\eqref{eq:hj:hamilton_n} follows from tensor calculus:
\begin{equation}
\tdiff{n}{t}
= \pdiff{n}{t} + u \cdot \nabla n + \mu_n n \quad ,
\end{equation}
where~$\mu_n$ is the Lagrange multiplier for the constraint imposed by Eq.~\eqref{eq:ls:hj:hamilton2}.
The gradient of the total derivative of the generating function gives an expression for the partial derivative of the normal vector~(Lemma~\ref{thm:hj:lagrange}):
\begin{equation}
\nabla\left(\tdiff{G}{t}\right)
= \nabla\left(\pdiff{G}{t} + u\cdot\nabla G\right)
= \pdiff{n}{t} + \nabla\left(u \cdot n\right)
= 0 \quad .
\end{equation}
The dot product of the normal vector with the total derivative of the normal vector gives an expression for the Lagrange multiplier~$\mu_n$:
\begin{align}
n \cdot \tdiff{n}{t}
&= \frac{1}{2}\tdif{t}\left(n \cdot n\right)
 = 0 \quad , \\
n \cdot \tdiff{n}{t}
&= n \cdot \left(\pdiff{n}{t} + u \cdot \nabla n + \mu_n n\right) \\
&= n \cdot \pdiff{n}{t} + \frac{1}{2}\left(u \cdot \nabla\right)\left(n \cdot n\right) + \mu_n n \cdot n \\
& = -n \cdot \nabla\left(u \cdot n\right) + \mu_n \quad .
\end{align}
This gives:
\begin{align}
\tdiff{n}{t}
&= -\nabla\left(u \cdot n\right) + u \cdot \nabla n + \left[n \cdot \nabla\left(u \cdot n\right)\right]n \\
&= -u \cdot \nabla n - n \cdot \nabla u - u \times \left(\nabla \times n\right) - n \times \left(\nabla \times u\right) + u \cdot \nabla n + \left[\left(n \cdot \nabla u\right) \cdot n\right]n + \left[\left(n \cdot \nabla n\right) \cdot u\right]n \\
&= -n \cdot \nabla u - n \times \left(\nabla \times u\right) + n\left[n \cdot \left(n \cdot \nabla u\right)\right] \quad .
\end{align}
Note that $\nabla \times n = 0$ because $n = \nabla G$.
This concludes the proof that~$\mathcal{H}(r, n, t)$ is a Hamiltonian.
%Together, Eq.~\eqref{eq:hj:hamilton_r} and~\eqref{eq:hj:hamilton_n} give a complete description in phase space.
\end{proof}

\begin{theorem}[Hamilton-Jacobi equation]
\label{thm:hj:g}
\begin{equation}
\pdiff{G}{t} + u(r(t)) \cdot n(t) = 0 \quad .
\label{eq:hj:g}
\end{equation}
\end{theorem}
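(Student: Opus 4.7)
The plan is to obtain the Hamilton-Jacobi equation as an immediate consequence of Lemma \ref{thm:hj:lagrange}, exploiting the identification of the generating function $G$ with the action integral. The essential observation is that Lemma \ref{thm:hj:lagrange} asserts that the Lagrangian, which is the total time derivative of $G$ evaluated along the trajectory, vanishes identically. Converting the total derivative into a partial derivative plus a convective term via the chain rule, and then using the laws of motion together with Huygens' principle, immediately produces the desired equation.

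Concretely, I would first write down the chain rule for $G$ along a particle trajectory,
\begin{equation}
\tdif{t}G(r(t), t) = \pdiff{G}{t} + \tdiff{r}{t}\cdot\nabla G(r(t), t) \quad .
\end{equation}
Next, I would substitute the laws of motion Eq.~\eqref{eq:ls:hj:newton}, giving $\dot{r}(t) = u(r(t))$, and Huygens' principle $\nabla G = n$, which together reduce the right-hand side to $\partial_t G + u(r(t))\cdot n(t)$. Applying Lemma \ref{thm:hj:lagrange} to the left-hand side, which states that this total derivative equals the Lagrangian $\mathcal{L}(r(t), \dot{r}(t), t) = 0$, would then yield Eq.~\eqref{eq:hj:g} at once. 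An equivalent derivation proceeds via Legendre duality: the canonical Hamilton-Jacobi identity $\partial_t G = -\mathcal{H}$ combined with Lemma \ref{thm:hj:hamilton} produces the same equation, which gives a useful cross-check.

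The only subtlety I anticipate is the domain of validity of the substitution $\nabla G = n$. Huygens' principle strictly characterises the gradient of the generating function at points on (or infinitesimally close to) the interface being propagated; away from the interface the relationship is extended by the eikonal constraint $|\nabla G| = 1$ in Eq.~\eqref{eq:ls:hj:eikonal}, together with a choice of orientation for $n$. Because the derivation is carried out along the trajectory $r(t)$ of a point on the interface, this subtlety does not actually obstruct the argument, but it is worth noting explicitly so that Theorem \ref{thm:hj:g} can legitimately be used as a pointwise evolution equation for $G$ throughout the narrow band, as required by the level-set algorithms in Section \ref{sec:ls:da}.
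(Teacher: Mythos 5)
Your proposal is correct. It differs slightly in emphasis from the paper's proof, which is a one-line appeal to the canonical Hamilton--Jacobi identity $\pdiff{G}{t} + \mathcal{H} = 0$ from Arnold, followed by substituting $\mathcal{H} = u \cdot n$ from Lemma~\ref{thm:hj:hamilton}. Your primary route instead bypasses the Hamiltonian entirely: you expand $\tdif{t}G(r(t),t) = \pdiff{G}{t} + \dot{r}\cdot\nabla G$ by the chain rule, substitute the laws of motion $\dot{r} = u(r)$ (Eq.~\eqref{eq:ls:hj:newton}) and Huygens' principle $\nabla G = n$, and invoke Lemma~\ref{thm:hj:lagrange} to set the total derivative to zero. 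The two arguments are connected by the Legendre transform $\mathcal{H} = \dot{r}\cdot n - \mathcal{L}$ used to define $\mathcal{H}$ in the first place, so they are essentially equivalent — and indeed you note the paper's route yourself as a cross-check. What your version buys is self-containedness: it makes explicit exactly which ingredients enter (Lemma~\ref{thm:hj:lagrange}, the law of motion, Huygens' principle) rather than deferring to the cited canonical result, and your closing remark about the validity of $\nabla G = n$ being restricted to the interface, with the extension away from it supplied by the eikonal constraint of Eq.~\eqref{eq:ls:hj:eikonal}, is a point the paper leaves implicit but which matters for using the theorem throughout the narrow band. What the paper's route buys is the direct link to the standard Hamilton--Jacobi formalism, which is the framing the rest of Section~\ref{sec:ls:hj} relies on.
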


\begin{proof}
The Hamilton-Jacobi equation is given by~\cite[Chapter~9]{Arnold1989}
\begin{equation}
\pdiff{G}{t} + \mathcal{H}
= \pdiff{G}{t} + u \cdot n
= 0 \quad .
\end{equation}
\end{proof}

%------------------------------------------------------------------------------

\section{Level-set methods}
\label{app:ls}

\begin{lemma}
The covariance matrix of a gradient in a filtered generated function without reinitialization is given by
\begin{equation}
\mathbb{E}\left[D\psi^a\left(D\psi^a\right)^T\right] = \mathbb{E}\left[D\psi^f\left(D\psi^f\right)^T\right] - D\left(MC_{\psi\psi}^f\right)^T\left[C_{\epsilon\epsilon}+MC_{\psi\psi}^fM^T\right]^{-1}MC_{\psi\psi}^fD^T \quad ,
\end{equation}
where $D$ denotes the gradient operator.
\end{lemma}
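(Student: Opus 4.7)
The plan is to derive the identity by applying the discrete gradient operator to the Kalman analysis-covariance update (Theorem~\ref{thm:da:kalman_filter}) from the left and right, and reinterpreting the resulting matrix sandwich as the second moment of $D\psi$ under the ensemble.

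First I would note that, in the ensemble-statistics convention of the paper (Eq.~\eqref{eq:da:ensemble}), an expression of the form $\mathbb{E}\!\left[D\psi(D\psi)^T\right]$ is a centred second moment over the ensemble, equal to $\frac{1}{n-1}D\Psi(D\Psi)^T$ in the sample formulation. Because $D$ is a linear operator --- a finite-difference matrix acting on the discretised state --- it commutes with expectation, giving
\[
\mathbb{E}\!\left[D\psi(D\psi)^T\right] \;=\; D\,\mathbb{E}\!\left[\psi\psi^T\right]\,D^T \;=\; D\,C_{\psi\psi}\,D^T,
\]
both for the forecast ($\psi=\psi^f$) and the analysis ($\psi=\psi^a$). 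This reduces the lemma to a statement about the forecast and analysis covariance matrices alone.

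Invoking the Kalman analysis covariance (Eq.~\eqref{eq:da:kalman:cov}) and sandwiching it between $D$ on the left and $D^T$ on the right then yields the claim in one line. As an independent check, the same identity can be obtained from the square-root filter (Theorem~\ref{thm:da:ensemble_kalman_filter}) by applying $D$ to $\Psi^a = \Psi^f V\left[\mathbb{I}-\Sigma\right]^{1/2}V^T$, forming $\frac{1}{n-1}D\Psi^a(D\Psi^a)^T$, using $V(\mathbb{I}-\Sigma)V^T = \mathbb{I} - V\Sigma V^T$, and substituting the explicit form of $V\Sigma V^T$; the algebra collapses to the same expression, with the factors of $n-1$ cancelling between the sample covariance and the inverse bracket.

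The aspect that requires care --- and, I expect, the reason the qualifier ``without reinitialization'' is stated explicitly in the hypothesis --- is that reinitialization enforces the nonlinear eikonal constraint $|D\psi|=1$ (Eq.~\eqref{eq:ls:hj:eikonal}) on each ensemble member. This constraint is not preserved by the linear Kalman update, so applying reinitialization after assimilation would break precisely the commutation of $D$ with the update that the sandwich argument depends on. With that caveat in place, the lemma is essentially a one-line corollary of the Kalman covariance update and the linearity of $D$, and the downstream statement in the main text that the filtered slopes are in expectation at most unity then follows at once from positive semi-definiteness of the subtracted term together with the eikonal property $|D\psi^f|=1$ of the reinitialised forecast.
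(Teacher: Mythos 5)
There is a genuine gap in your opening reduction. You identify $\mathbb{E}\!\left[D\psi\,(D\psi)^T\right]$ with $D\,C_{\psi\psi}\,D^T$ by writing $\mathbb{E}\!\left[D\psi\,(D\psi)^T\right] = D\,\mathbb{E}\!\left[\psi\psi^T\right]D^T = D\,C_{\psi\psi}\,D^T$, i.e.\ you equate the raw second moment of the field with its covariance. That holds only for a zero-mean field, which $\psi^f$ and $\psi^a$ are not: they are the signed-distance estimates themselves, distributed about the true state $x_k$, and it is the non-centred moment that the lemma and its downstream use require. Lemma~\ref{thm:ls} evaluates the first term as $\mathbb{E}\!\left[\left|D\psi^f\right|^2\right]=1$ via Huygens' principle, which is meaningless for the centred quantity $\mathrm{tr}\!\left(DC_{\psi\psi}^fD^T\right)$; your own closing paragraph exposes the tension, since you invoke $\left|D\psi^f\right|=1$ for a term you have just declared to be a centred ensemble moment (your square-root-filter check likewise computes the sample covariance of the gradient anomalies, not the stated moment). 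The paper does not sandwich the covariance update at all: it applies $D$ to the mean update Eq.~\eqref{eq:da:kalman:mean}, expands $D\psi^a\left(D\psi^a\right)^T$, decomposes $y_k-M\psi^f=\left(y_k-Mx_k\right)-M\left(\psi^f-x_k\right)$, and takes expectations using $\mathbb{E}\!\left[\left(y_k-Mx_k\right)\left(y_k-Mx_k\right)^T\right]=C_{\epsilon\epsilon}$, $\mathbb{E}\!\left[\left(\psi^f-x_k\right)\left(\psi^f-x_k\right)^T\right]=C_{\psi\psi}^f$ and the uncorrelatedness of forecast error and observation noise; the cross terms involving $D\psi^f$ survive precisely because $\psi^f$ is not centred, and they combine with the quadratic term to leave the single subtracted term.

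Your route can be repaired, but the missing step must be supplied: under the same error statistics the analysis is unbiased, $\mathbb{E}\!\left[\psi^a\right]=\mathbb{E}\!\left[\psi^f\right]=x_k$, so $\mathbb{E}\!\left[\psi^f(\psi^f)^T\right]=x_kx_k^T+C_{\psi\psi}^f$ and $\mathbb{E}\!\left[\psi^a(\psi^a)^T\right]=x_kx_k^T+C_{\psi\psi}^a$, the mean contributions cancel in the difference, and only then does sandwiching Eq.~\eqref{eq:da:kalman:cov} between $D$ and $D^T$ yield the identity for the non-centred moments actually appearing in the lemma. Without that observation your one-line corollary proves a statement about $DC_{\psi\psi}^aD^T$, not about $\mathbb{E}\!\left[D\psi^a\left(D\psi^a\right)^T\right]$ as it is used in Lemma~\ref{thm:ls} and Theorem~\ref{corr:ls}. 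Your reading of the ``without reinitialization'' qualifier is correct, and the covariance sandwich itself is sound algebra; the gap lies solely in conflating second moments with covariances for fields whose gradients have unit norm and are therefore far from zero mean.
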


\begin{proof}
Eq.~\eqref{eq:da:kalman:mean} gives
\begin{align}
D\psi^a &= D\psi^f + D\left(MC_{\psi\psi}^f\right)^T\left[C_{\epsilon\epsilon}+MC_{\psi\psi}^fM^T\right]^{-1}\left(y_k-M\psi^f\right) \quad , \\
D\psi^a\left(D\psi^a\right)^T
&= D\psi^f\left(D\psi^f\right)^T \nonumber\\
&\quad + D\psi^f\left(y_k-M\psi^f\right)^T\left[C_{\epsilon\epsilon}+MC_{\psi\psi}^fM^T\right]^{-1}MC_{\psi\psi}^fD^T \nonumber\\
&\quad + D\left(MC_{\psi\psi}^f\right)^T\left[C_{\epsilon\epsilon}+MC_{\psi\psi}^fM^T\right]^{-1}\left(y_k-M\psi^f\right)\left(D\psi^f\right)^T \nonumber\\
&\quad + D\left(MC_{\psi\psi}^f\right)^T\left[C_{\epsilon\epsilon}+MC_{\psi\psi}^fM^T\right]^{-1}\left(y_k-M\psi^f\right)\left(y_k-M\psi^f\right)^T\left[C_{\epsilon\epsilon}+MC_{\psi\psi}^fM^T\right]^{-1}MC_{\psi\psi}^fD^T \\
&= D\psi^f\left(D\psi^f\right)^T \nonumber\\
&\quad + D\psi^f\left(y_k-Mx_k\right)^T\left[C_{\epsilon\epsilon}+MC_{\psi\psi}^fM^T\right]^{-1}MC_{\psi\psi}^fD^T \nonumber\\
&\quad + D\psi^f\left(x_k-\psi^f\right)^TM^T\left[C_{\epsilon\epsilon}+MC_{\psi\psi}^fM^T\right]^{-1}MC_{\psi\psi}^fD^T \nonumber\\
&\quad + D\left(MC_{\psi\psi}^f\right)^T\left[C_{\epsilon\epsilon}+MC_{\psi\psi}^fM^T\right]^{-1}\left(y_k-Mx_k\right)\left(D\psi^f\right)^T \nonumber\\
&\quad + D\left(MC_{\psi\psi}^f\right)^T\left[C_{\epsilon\epsilon}+MC_{\psi\psi}^fM^T\right]^{-1}M\left(x_k-\psi^f\right)\left(D\psi^f\right)^T \nonumber\\
&\quad + D\left(MC_{\psi\psi}^f\right)^T\left[C_{\epsilon\epsilon}+MC_{\psi\psi}^fM^T\right]^{-1}\left(y_k-Mx_k\right)\left(y_k-Mx_k\right)^T\left[C_{\epsilon\epsilon}+MC_{\psi\psi}^fM^T\right]^{-1}MC_{\psi\psi}^fD^T \nonumber\\
&\quad + D\left(MC_{\psi\psi}^f\right)^T\left[C_{\epsilon\epsilon}+MC_{\psi\psi}^fM^T\right]^{-1}M\left(x_k-\psi^f\right)\left(x_k-\psi^f\right)^TM^T\left[C_{\epsilon\epsilon}+MC_{\psi\psi}^fM^T\right]^{-1}MC_{\psi\psi}^fD^T \nonumber\\
&\quad + D\left(MC_{\psi\psi}^f\right)^T\left[C_{\epsilon\epsilon}+MC_{\psi\psi}^fM^T\right]^{-1}\left(y_k-Mx_k\right)\left(x_k-\psi^f\right)^TM^T\left[C_{\epsilon\epsilon}+MC_{\psi\psi}^fM^T\right]^{-1}MC_{\psi\psi}^fD^T \nonumber\\
&\quad + D\left(MC_{\psi\psi}^f\right)^T\left[C_{\epsilon\epsilon}+MC_{\psi\psi}^fM^T\right]^{-1}M\left(x_k-\psi^f\right)\left(y_k-Mx_k\right)^T\left[C_{\epsilon\epsilon}+MC_{\psi\psi}^fM^T\right]^{-1}MC_{\psi\psi}^fD^T \quad .
\label{eq:ls:lemma1}
\end{align}
The probabilistic state space model (Eq.~\eqref{eq:da:state_space1}, \eqref{eq:da:state_space2}) gives
\begin{equation*}
\mathbb{E}\left[\left(y_k-Mx_k\right)\left(y_k-Mx_k\right)^T\right] = C_{\epsilon\epsilon} \quad , \quad
\mathbb{E}\left[\left(\psi^f-x_k\right)\left(\psi^f-x_k\right)^T\right] = C_{\psi\psi}^f \quad ,
\end{equation*}
\begin{equation}
\mathbb{E}\left[\left(\psi^f-x_k\right)\left(y_k-Mx_k\right)^T\right] = 0 \quad .
\label{eq:ls:lemma2}
\end{equation}
Substituting Eq.~\ref{eq:ls:lemma2} into Eq.~\ref{eq:ls:lemma1} gives
\begin{align}
\mathbb{E}\left[D\psi^a\left(D\psi^a\right)^T\right]
&= \mathbb{E}\left[D\psi^f\left(D\psi^f\right)^T\right] \nonumber\\
&\quad - D\left(MC_{\psi\psi}^f\right)^T\left[C_{\epsilon\epsilon}+MC_{\psi\psi}^fM^T\right]^{-1}MC_{\psi\psi}^fD^T \nonumber\\
&\quad - DC_{\psi\psi}^fM^T\left[C_{\epsilon\epsilon}+MC_{\psi\psi}^fM^T\right]^{-1}MC_{\psi\psi}^fD^T \nonumber\\
&\quad + D\left(MC_{\psi\psi}^f\right)^T\left[C_{\epsilon\epsilon}+MC_{\psi\psi}^fM^T\right]^{-1}C_{\epsilon\epsilon}\left[C_{\epsilon\epsilon}+MC_{\psi\psi}^fM^T\right]^{-1}MC_{\psi\psi}^fD^T \nonumber\\
&\quad + D\left(MC_{\psi\psi}^f\right)^T\left[C_{\epsilon\epsilon}+MC_{\psi\psi}^fM^T\right]^{-1}MC_{\psi\psi}^fM^T\left[C_{\epsilon\epsilon}+MC_{\psi\psi}^fM^T\right]^{-1}MC_{\psi\psi}^fD^T \\
&= \mathbb{E}\left[D\psi^f\left(D\psi^f\right)^T\right] - D\left(MC_{\psi\psi}^f\right)^T\left[C_{\epsilon\epsilon}+MC_{\psi\psi}^fM^T\right]^{-1}MC_{\psi\psi}^fD^T \quad .
\label{eq:ls:lemma3}
\end{align}
\end{proof}

\begin{lemma}
The expected value of a slope in a filtered generated function without reinitialization satisfies
\begin{equation}
\mathbb{E}\left[\left|D\psi^a\right|^2\right] = 1 - \mathrm{tr}\left(D\left(MC_{\psi\psi}^f\right)^T\left[C_{\epsilon\epsilon}+MC_{\psi\psi}^fM^T\right]^{-1}MC_{\psi\psi}^fD^T\right) \quad .
\label{eq:ls:theorem}
\end{equation}
\label{thm:ls}
\end{lemma}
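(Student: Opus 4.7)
The plan is to deduce this identity directly from the preceding lemma by taking a trace of the matrix identity and then invoking the eikonal normalization of the forecast. The key observation is that at each spatial location $D\psi$ is a column vector, so the outer product $D\psi(D\psi)^T$ is a rank-one matrix whose trace equals $|D\psi|^2$. Trace and expectation commute (both are linear operations), so applying the trace to both sides of the matrix identity from the previous lemma gives
\begin{equation*}
\mathbb{E}\left[|D\psi^a|^2\right] = \mathbb{E}\left[|D\psi^f|^2\right] - \mathrm{tr}\left(D\left(MC_{\psi\psi}^f\right)^T\left[C_{\epsilon\epsilon}+MC_{\psi\psi}^fM^T\right]^{-1}MC_{\psi\psi}^fD^T\right) \quad .
\end{equation*}

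The remaining step is to dispose of the first term on the right-hand side. By construction of the narrow-band data assimilation framework (Section~\ref{sec:ls:da}), the forecast $\psi^f$ is produced by the reinitialization step, and therefore satisfies the eikonal equation $|\nabla G| = 1$ pointwise and deterministically on every ensemble member (Eq.~\eqref{eq:ls:hj:eikonal}). Consequently $|D\psi^f|^2 = 1$ as a scalar field, and hence $\mathbb{E}\left[|D\psi^f|^2\right] = 1$. Substituting this constant into the display above yields the claim exactly.

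The main obstacle is not algebraic but interpretive: one must justify reading $D\psi$ simultaneously as a vector-valued field (so that the trace identity $\mathrm{tr}(D\psi(D\psi)^T) = |D\psi|^2$ makes sense) and as the output of a linear operator $D$ acting on the finite-dimensional state (so that the matrix algebra of the preceding lemma applies without modification). Once this identification is made, and the deterministic character of the eikonal constraint on $\psi^f$ is invoked, the derivation is essentially a one-line consequence of Lemma~1 with no further computation required.
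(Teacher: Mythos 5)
Your proposal is correct and follows essentially the same route as the paper's own proof: take the trace of the previous lemma's covariance identity, use linearity of trace and expectation together with $\mathrm{tr}\bigl(D\psi\left(D\psi\right)^T\bigr)=\left|D\psi\right|^2$, and set $\mathbb{E}\left[\left|D\psi^f\right|^2\right]=1$. The only cosmetic difference is that you justify the unit forecast slope via the reinitialization/eikonal constraint (Eq.~\eqref{eq:ls:hj:eikonal}) while the paper cites Huygens' principle, which is the same underlying fact.
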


\begin{proof}
In general, the expected value of a slope is related to the covariance matrix of the corresponding gradient in the following way:
\begin{equation}
\mathbb{E}\left[\left|D\psi\right|^2\right]
= \mathbb{E}\left[\mathrm{tr}\left(D\psi\left(D\psi\right)^T\right)\right]
= \mathrm{tr}\left(\mathbb{E}\left[D\psi\left(D\psi\right)^T\right]\right) \quad .
\label{eq:ls:theorem1}
\end{equation}
Substituting Eq.~\eqref{eq:ls:lemma3} into Eq.~\eqref{eq:ls:theorem1} gives
\begin{align}
\mathbb{E}\left[\left|D\psi^a\right|^2\right]
&= \mathrm{tr}\left(\mathbb{E}\left[D\psi^a\left(D\psi^a\right)^T\right]\right) \\
&= \mathrm{tr}\left(\mathbb{E}\left[D\psi^f\left(D\psi^f\right)^T\right]\right) - \mathrm{tr}\left(D\left(MC_{\psi\psi}^f\right)^T\left[C_{\epsilon\epsilon}+MC_{\psi\psi}^fM^T\right]^{-1}MC_{\psi\psi}^fD^T\right) \\
&= \mathbb{E}\left[\left|D\psi^f\right|^2\right] - \mathrm{tr}\left(D\left(MC_{\psi\psi}^f\right)^T\left[C_{\epsilon\epsilon}+MC_{\psi\psi}^fM^T\right]^{-1}MC_{\psi\psi}^fD^T\right) \quad .
\end{align}
In the first term, the slope $\left|D\psi^f\right|$ is equal to 1 due to Huygens' principle.
This gives Lemma~\ref{thm:ls}.
\end{proof}

\begin{theorem}
\begin{equation}
\mathbb{E}\left[\left|D\psi^a\right|\right] < 1 \quad .
\end{equation}
\label{corr:ls}
\end{theorem}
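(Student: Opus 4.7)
The plan is to combine Lemma~\ref{thm:ls} with Jensen's inequality and a positive--definiteness argument on the trace term. The key observation is that Lemma~\ref{thm:ls} already gives an exact expression for the \emph{second} moment of the slope, so the task reduces to (i) upgrading from a statement about $\mathbb{E}[|D\psi^a|^2]$ to one about $\mathbb{E}[|D\psi^a|]$ and (ii) showing the correction term to unity is strictly positive.

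First I would apply Jensen's inequality to the convex function $x \mapsto x^2$, which yields
\begin{equation}
\left(\mathbb{E}\left[\left|D\psi^a\right|\right]\right)^2 \leq \mathbb{E}\left[\left|D\psi^a\right|^2\right] \quad .
\end{equation}
Combined with Lemma~\ref{thm:ls}, a strict bound $\mathbb{E}[|D\psi^a|^2] < 1$ would immediately deliver the claim. So the remaining work is to show that the trace in Eq.~\eqref{eq:ls:theorem} is strictly positive.

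Next I would argue positivity as follows. Set $A = MC_{\psi\psi}^f D^T$, so that the trace term reads $\mathrm{tr}\bigl(A^T [C_{\epsilon\epsilon} + MC_{\psi\psi}^f M^T]^{-1} A\bigr)$. The observation error covariance $C_{\epsilon\epsilon}$ is positive definite and $MC_{\psi\psi}^f M^T$ is positive semi-definite, hence their sum is positive definite and so is its inverse. Therefore $A^T [C_{\epsilon\epsilon} + MC_{\psi\psi}^f M^T]^{-1} A$ is positive semi-definite, and its trace is non-negative. It vanishes only when $A = 0$, which would mean the measurement $M$ provides no information about the gradient $D\psi^f$ through the forecast covariance; under the standing assumption that data assimilation is actually informative (the non-degenerate case addressed in the paper), we have $A \neq 0$ and the trace is strictly positive.

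Chaining the two inequalities gives
\begin{equation}
\left(\mathbb{E}\left[\left|D\psi^a\right|\right]\right)^2 \leq \mathbb{E}\left[\left|D\psi^a\right|^2\right] = 1 - \mathrm{tr}\bigl(A^T [C_{\epsilon\epsilon} + MC_{\psi\psi}^f M^T]^{-1} A\bigr) < 1 \quad ,
\end{equation}
and taking the square root yields $\mathbb{E}[|D\psi^a|] < 1$. The only delicate step is the strict positivity of the trace: a fully rigorous statement would require either excluding the pathological case $MC_{\psi\psi}^f D^T = 0$ by hypothesis, or phrasing the theorem as $\mathbb{E}[|D\psi^a|] \leq 1$ with equality exactly when the update is uninformative on the gradient. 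I would expect the paper to tacitly rule out the degenerate case, in which case the proof is essentially a one-line combination of Lemma~\ref{thm:ls} and Jensen's inequality.
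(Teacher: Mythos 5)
Your proposal matches the paper's own proof: it combines Lemma~\ref{thm:ls} (Eq.~\eqref{eq:ls:theorem}) with Jensen's inequality applied to $x \mapsto x^2$ and the positive semi-definiteness of $D\left(MC_{\psi\psi}^f\right)^T\left[C_{\epsilon\epsilon}+MC_{\psi\psi}^fM^T\right]^{-1}MC_{\psi\psi}^fD^T$, exactly as the paper does. If anything, you are more careful than the paper on the one delicate point: the paper only establishes that the trace is $\geq 0$ (so its argument strictly delivers $\mathbb{E}\left[\left|D\psi^a\right|\right] \leq 1$, consistent with the ``less than or equal to unity'' phrasing in Section~\ref{sec:ls:da}), whereas you correctly note that the strict inequality in the theorem statement requires excluding the degenerate case $MC_{\psi\psi}^fD^T = 0$.
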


\begin{proof}
Applying the following two observations to Eq.~\eqref{eq:ls:theorem} gives Theorem~\ref{corr:ls}:
\begin{enumerate}
\item $\mathbb{E}\left[\left|D\psi^a\right|\right]^2 \leq \mathbb{E}\left[\left|D\psi^a\right|^2\right]$ due to Jensen's inequality \cite{Forster2013}.
\item $\mathrm{tr}\left(D\left(MC_{\psi\psi}^f\right)^T\left[C_{\epsilon\epsilon}+MC_{\psi\psi}^fM^T\right]^{-1}MC_{\psi\psi}^fD^T\right) \geq 0$ because $D\left(MC_{\psi\psi}^f\right)^T\left[C_{\epsilon\epsilon}+MC_{\psi\psi}^fM^T\right]^{-1}MC_{\psi\psi}^fD^T$ is symmetric and positive semi-definite \cite{Fischer2014}.
\end{enumerate}
\end{proof}

\end{document}